\long\def\comment#1{\relax}
\newcommand{\etc}{{etc.}}
\newcommand{\Rscr}{\mathcal{R}}
\newcommand{\Escr}{\mathcal{E}}
\newcommand\lra{\longrightarrow}
\newcommand\lb{\{}
\newcommand\rb{\}}
\newcommand\sbsim{\approx}
\newcommand\sbsimu{\precsim}
\newcommand\relQ{\sim?}
\newcommand\Dia{\ensuremath{\mathsf{D}}\xspace}
\newcommand\sysI{\ensuremath{\mathtt{mkISys}}\xspace}
\newcommand\mtC{\ensuremath{\mathtt{mtC}}\xspace}
\newcommand\att{\ensuremath{\mathsf{att}}\xspace}
\newcommand\rcv{\ensuremath{\mathtt{rcv}}\xspace}
\newcommand\drop{\ensuremath{\mathtt{drop}}\xspace}
\newcommand\delay{\ensuremath{\mathtt{delay}}\xspace}
\newcommand\replay{\ensuremath{\mathtt{replay}}\xspace}
\newcommand\edit{\ensuremath{\mathtt{edit}}\xspace}
\newcommand\divert{\ensuremath{\mathtt{divert}}\xspace}
\newcommand\create{\ensuremath{\mathtt{create}}\xspace}
\newcommand{\qedd}{\nobreak \ifvmode \relax \else
      \ifdim\lastskip<1.5em \hskip-\lastskip
      \hskip1.5em plus0em minus0.5em \fi \nobreak
      \vrule height0.75em width0.5em depth0.25em\fi}
\long\def\omitthis#1{\relax}
\title{Dialects for  CoAP-like Messaging Protocols}
\author{Carolyn Talcott\inst{1}}
\institute{
SRI International, Menlo Park, USA, \email{carolyn.talcott@sri.com} 
}
\begin{document}
{\def\addcontentsline#1#2#3{}\maketitle}  
\pagestyle{plain} 

\begin{abstract}

Messaging protocols for resource limited systems
such as distributed IoT systems are often
vulnerable to attacks due to security choices made
to conserve resources such as time, memory, or
bandwidth. For example, use of secure layers such
as DTLS are resource expensive and can sometimes
cause service disruption. Protocol dialects are
intended as a light weight, modular mechanism to
provide selected security guarantees, such as
authentication. In this report we study the CoAP
messaging protocol and define two attack models
formalizing different vulnerabilities. We propose a
generic dialect for CoAP messaging. The CoAP
protocol, dialect, and attack models are formalized
in the rewriting logic system Maude. A number of
case studies are reported illustrating
vulnerabilities and effects of applying the dialect.
We also prove (stuttering) bisimulations between
CoAP messaging applications and dialected versions,
thus ensuring that dialecting preserves LTL
properties (without Next) of CoAP applications.
To support search for attacks in complex messaging 
situations we specify a simple application layer to
drive the CoAP messaging and generalize the attack
model to support a form of symbolic search for attacks.
Two case studies are presented to illustrate the
more general attack search.
\end{abstract}

\tableofcontents
\newpage
\section{Introduction}
\label{sec:intro}

There is a rapidly growing number of networks of
IoT devices that impact our daily lives. They
enable smart homes, offices, factories, and
infrastructure. They are increasingly used in
health care, precision agriculture, logistics,
supply chain management, and situation awareness.
The IoT devices sense and act on the physical
environment and must operate using limited
resources (energy, bandwidth, memory, compute
power, \dots). The vulnerabilities of small
inexpensive devices are magnified when deployed at
scale. Security is crucial for safe operation but
security is typically resource intensive. Design of
such systems must balance balance resources used
for messaging and for security.

Protocol dialects are a light weight mechanism that
can provide authentication and possibly additional
security services such as integrity. A dialect
transforms the underlying protocol to mutate
messages before sending such that only dialect
partners can revert the mutation for delivery to
the intended receiver. A key feature of dialects is
to rapidly change the mutation used to prevent
attackers from using any decoding information they
might gain. This \emph{moving target defense} means
that the complexity of mutation processing can be
kept low without compromising the security
guarantee. The need to change mutations presents a
significant challenge to synchronize change across
distributed network nodes.

Existing dialect works, for example
\cite{sjoholmsierchio-etal-2021netsoft,gogineni-etal-2022arXiv,ren-etal-2023seccom}, usually consider
protocols running over TCP or other reliable
transport. In this case mutations can reasonably be
synchronized based on time. In the case of
unreliable transport where messages can be delayed,
reordered, lost, or repeated, time-based
synchronization is problematic. Unreliable
transport is mentioned in \cite{GEMM-2023esorics}
as one classification parameter, but
synchronization mechanisms for this case are not
treated.

In this work we study dialects for protocols running
over unreliable transport, using the CoAP messaging
protocol \cite{RFC7252} as prototypical example. We
investigate the use of counters, analogous to those
used to prevent replay in security protocols. for
synchronizing. We consider two methods of evaluating
the CoAP specification and dialect wrapper. One is
based on the CoAP vulnerabilities discussed in
\cite{coap-attacks} and the other is based on
properties a CoAP enabled endpoint could assume when
running over an unreliable transport layer such as
UDP. 

For the vulnerability case, we formalize the
attack scenarios presented in \cite{coap-attacks} as
motivation for strengthening the specification. These
scenarios consider an attacker that can only
block/drop or delay messages. We use reachability
analysis to look for attacks. In some cases the Maude
endpoints exhibit the expected vulnerability. in some
cases because the Maude specification implements some
of the stronger mechanisms, the attack fails, but
there are related attacks. In some cases an
attacker that can redirect messages can realize the
attack, but dialecting mitigates such attacks.

For the CoAP properties case the goal is
to identify guarantees provided by CoAP messaging
running on an unreliable transport, identify attack
points, and corresponding attack model, and prove
that dialecting provides adequate defense.
This is formalized using stuttering bisimulation
relations.

\paragraph{Contributions.}
The main contributions of this report are:
\begin{itemize}
\item  
  Executable specification of the CoAP messaging protocol (Section \ref{sec:coap-spec}).
\item  
  Specification of two families of resource limited attack models (Section \ref{sec:attack-models}):
an active attacker that can modify messages in transit
and a reactive attacker that can observe messages in
transit and transmit (modified) copies. 
\item  
Specification of a generic dialect layer and transform for messaging protocols (Section \ref{sec:coap-dialect-spec}).
\item  
Reachability analysis demonstrating CoAP vulnerabilities identified in \cite{coap-attacks} (Appendix \ref{apx:coap-vulnerabilities}). These require an active attacker.
\item  
Reachability analysis demonstrating diverse reactive
attacks (Section \ref{sec:reactive-attacks}).
\item  
 An analysis of protection provided by the proposed dialect class in unreliable/reliable networks (Section 
 \ref{sec:dialect-properties}).
\item  
Proof of stuttering bisimilarity theorems characterizing effects of the dialect transformation
(Appendix \ref{apx:sbsim-proofs}).
\item
A simple application layer is specified to drive complex
messaging scenarios and basic properties defined from
which to build invariant specifications.
The reactive attack
model is generalized to support a form of symbolic search 
for attacks. (Section \ref{apx:app-experiments}).
\end{itemize}
  
\paragraph{Plan.}
Section~\ref{sec:background} provides background information on CoAP, dialects, and rewriting logic/Maude. 
Section~\ref{sec:coap-spec} describes the Maude specification of CoAP messaging.
Resource limited attack models are presented in
Section \ref{sec:attack-models}  with examples
illustrating a variety of attacks presented in
Section \ref{sec:reactive-attacks}.
In Section~\ref{sec:dialect-fns} we present the abstract mutation functions of the proposed dialect scheme.
The Maude specification of the CoAP dialect is given in Section~\ref{sec:coap-dialect-spec}.
\omitthis{The reachability analysis experiment are summarized in Section~\ref{sec:experiments}.}
In Section~\ref{sec:dialect-properties} we discuss the
properties of dialects such as the CoAP dialect.
The relation to \cite{GEMM-2023esorics} is discussed in
Section~\ref{sec:related}.  Concluding remarks
are given in Section~\ref{sec:concl}.

In Appendix~\ref{apx:exe-props} we spell out key
properties of CoAP system executions, including
characterizing the minimal time elapse function. In
Appendix~\ref{apx:coap-vulnerabilities} we specify
the scenarios illustrating CoAP vulnerabilities
listed in \cite{coap-attacks} and do reachability
analysis to determine which vulnerabilities are
realized by the Maude specification. We also analyze
the dialected versions of each scenario. In
Appendix~\ref{apx:sbsim-proofs} we prove two
bisimulation theorems for dialected CoAP systems,
verifying relations proposed in
Section~\ref{sec:dialect-properties}.
The application layer, generalized attack model and
associated case studies are presented in Appendix \ref{apx:app-experiments}.

The Maude specification and case studies can be found at 
\url{https://github.com/SRI-CSL/VCPublic} in the folder
CoAPDialect.

\section{Background}
\label{sec:background}

\subsection{The CoAP protocol}
\label{subsec:coap}

 The Constrained Application Protocol (CoAP)
\cite{RFC7252,coap-attacks} is an HTTP-like
client-server Protocol for use by resource-constrained
devices (e.g. low power) and networks (low bandwidth,
lossy). CoAP provides a request/response RESTFUL
interaction model between application endpoints.
Servers hold resources that can be generated, updated,
accessed by clients. An endpoint can be a client, a
server, or both. A design goal is to keep message
overhead small, to limit need for fragmentation.
Note that server resources may be connected to 
actuators having effects on the environment.

CoAP features include UDP binding (reliability
supported at the protocol level by a retry mechanism)
and asynchronous two-way message exchange. A binding
to Datagram Transport Layer Security (DTLS) provides
security, but with substantial overhead. Expected
applications include: smart buildings, instrumented
infrastructure, medical devices, etc.

RFC7252 specifies a binary format for CoAP messages.
Here we present an abstract data-type view. A CoAP
message consists of four parts: Header, Token,
Options, and Payload. The header has four parts:
Version, Type, Code, and MsgId (we will ignore the
version).

The header type is one of \verb|{CON, NON, ACK, RST}|.
A message of type \texttt{CON} is confirmable. The
receiver must acknowledge with a message of type
\texttt{ACK}, in addition to a response
if appropriate. The sender should resend a
\texttt{CON} message if no ACK is received within
\texttt{ACK-TIMEOUT} time. There is a bound on the
number of resends (a configuration parameter). A
message of type \texttt{NON} is non-confirmable. If
received it may require that a response is sent, but
no separate acknowledgment is sent. A message of type
\texttt{RST} is a reset message, used to indicate
receipt of message that can't be handled, also use as
a kind of ping.

Code is an HTTP like code that determines if the
message is a request or response. In the case of a
request the code also determines the method--one of
\texttt{GET,POST,PUT,DELETE}. In the case of a
response, the code determines if the result is a
success (with value in the payload), a client error,
or a server error.

The \texttt{MsgId} header element is a string that
uniquely identifies the message (among messages from
the sender). It is used to match acknowledgements to
\texttt{CON} messages. It also is used to provide
limited replay protection. Message identifiers have a
limited lifetime (a configuration parameter) after
which they can be reused.

The token message component is a unique identifier
generated by a requestor to match request to response.
Options is a list of name-value pairs used to specify
additional information, including: content-format,
request Uri, and response Uri (of a created resource).
Payload contains additional request information (for
example the value to PUT) or response value (for
example the result of a GET).

The rules for handling requests with a given method
are similar to those for HTTP. A response can be
\emph{piggy backed} on the acknowledgement of a
\texttt{CON} request. A separate response can be of type
\texttt{CON} or \texttt{NON}. Servers are required to
respond to legitimate requests, but they should only
process a request once. Thus if a confirmable request
is resent, for example because the acknowledgement was
lost or delayed, the server may just resend the
\texttt{ACK} or it may resend the actual response, but
it should not repeat the actions needed to carry out
the response.

\subsection{Dialects}
\label{subsec:dialects}

A protocol dialect is a variation on protocol messages
intended as a light weight mechanism to provide some
additional security properties such as forms of
authentication or message integrity.

A dialect uses a family of lingos to transform protocol messages. A lingo provides a pair of functions that are inverse of one another, possibly modulo parameters. Dialects provide moving target defense in the sense that lingo parameters and choice of lingo change over time in a dialect specified manner. This means weaker forms of message transformation can be used, because even if a message transformation is determined, it doesn't help as that transformation will not be used on later messages. It also means that honest participants must synchronize on the choice of lingo and lingo parameters.

A typical application of dialects is to networks where
at least some nodes are resource limited devices. An
example would be smart infrastructure with sensors and
actuators running on battery power, and with limited
memory and cpu capability. Thus a dialect should not
require substantial resources. In addition to
providing new security properties, a dialect should
preserve important properties of the underlying
protocol. The protocol may be running on a network
with limited bandwith, thus a dialect should not
generate significant additional traffic, and should
not significantly increase message size.

The following are some examples of dialects and their application.
Dialects to improve SDN (Software defined network)
security are defined and evaluated in
\cite{sjoholmsierchio-2019nps,sjoholmsierchio-etal-2021netsoft}. Two transformations based on HMAC over the
message data are proposed. One places the HMAC in the
session id slot. This limits the HMAC size and thus the
hash key is changed every second. For messages that
use this slot, a 512-bit HMAC tag is appended to data
packets. Sequence numbers are included in the hashed
content to prevent replay. Keys are generated from
pre-shared symmetric keys, delivered out-of-band.
Dialects are implemented using proxies inserted into
the communication path to carry out the message
transformations. One goal is to protect against TLS
downgrade attacks by providing authentication for
initial TLS communications.

Dialects for FTP and MQTT protocols using bit
shuffling or packet splitting are studied in
\cite{gogineni-etal-2022arXiv}. A self-synchronization
mechanisms is propsed. In
\cite{gogineni-etal-2021site} the authors propose a
Deep Neural Network mechanism for synchronization. A
dialect is implemented by (conceptually) inserting a
separate process between the protocol and the network.

Dialecting is proposed in \cite{ren-etal-2023seccom}
to mitigate known exploitable vulnerabilities in IoT
devices deployed in large numbers thus enabling
massive attacks. The idea is to transform functions
that construct and parse message data structures to
construct and parse mutated data structures. The
message constructor/parser functions are identified
automatically and message mutations are typically per
message field. Specific mutations are designed using
testing. Dialects are implemented by modifying the
identified message constructor and parser functions of
the protocol implementation. The method is evaluated
on implementations of MQTT, LwM2M, and DDS. In the
experiments, the mutations successfully blocked
reproducible attacks corresponding to a selected set
of CVEs reported for the protocols

The above works propose dialects and evaluate them
experimentally. In \cite{GEMM-2023esorics} a formal
framework for specifying and applying dialects is
proposed. The ideas are illustrated using the MQTT
protocol. This work and its relation to the current
work is discussed in detail in section
\ref{sec:related}.

\subsection{Rewriting logic and Maude}
\label{subsec:rwl-maude}

Rewriting logic \cite{meseguer-92unified-tcs} is a logic
for specifying concurrent and distributed systems.
A rewrite theory has the form $(\Sigma,\Escr,\Rscr)$
where $\Sigma$ is a signature specifying a partial order
of sorts, and a set of operators (the name, argument sorts
and result sort). $(\Sigma, \Escr)$ is the equational sub-theory, where the equations $\Escr$ define functions
and properties declared in the signature.  $\Rscr$ is
the set of rewrite rules of the form
$l: lhs \Longrightarrow rhs~~\mathit{if}~c.$  Here $lhs,rhs$ are terms, possibly with variables, and $c$ is
a boolean term, the condition, which is optional.
An important feature of
rewriting logic is that rules are applied locally,
i.e. to matching sub-terms of the term being rewritten.
Given terms (of the language given by $\Sigma$) $t$, $t'$
we have $t\Longrightarrow t'$ ($t$ rewrites to $t'$) using
rule $l$ just if there is a position $p$ of $t$ and a
substitution $\sigma$ matching $lhs$ to the subterm of $t$
at $p$ ($\sigma(lhs) =_{\Escr} t\downarrow p$) such that
$\sigma(c)$ holds and
$t'$ is the result of replacing the subterm of $t$
at $p$ by $\sigma(rhs)$.

Maude \cite{clavel-etal-07maudebook} is a language and tool implementing rewriting logic. Maude can be downloaded
from the Maude website \cite{maudeweb}.
Rewrite theories are represented by Maude modules.
Functional modules (declared using the keyword 
\texttt{fmod} at the beginning and \texttt{endfm} at the
 end) represent equational theories (with no rules).
System modules (declared using the keyword 
\texttt{mod} at the beginning and \texttt{endm} at the
 end) represent rewrite theories with rules.
We introduce the main Maude syntax elements with examples
from the CoAP specification (discussed in detail in 
Section \ref{sec:coap-spec}).
Sorts and their ordering are declared using the keywords 
\texttt{sort} and \texttt{subsort}. For example, the following
declares two sorts \texttt{Msg} (the message sort) and
\texttt{MsgS} (multisets of messages) with \texttt{Msg}
(singleton sets) being a subsort of \texttt{MsgS}.

\begin{small}
\begin{verbatim}
    sort Msg MsgS .
    subsort Msg < MsgS .
\end{verbatim}
\end{small}

\noindent
Functions/operators are declared using the keyword
\texttt{op}. For example, the operator \texttt{m}
constructs a message from two strings (the target and
source) and a term of sort \texttt{Content}. The
attribute \texttt{ctor} says that \texttt{m} is a
constructor and will not be subject any equations.
\texttt{mtM} is a constant (no arguments) denoting the
empty message set. The operator \verb|__| has empty
syntax, the blanks denoting argument positions. It
correspond to union of multi sets. That union is
associative and commutative with unit the empty set is
expressed by the attributes \texttt{assoc},
\texttt{comm}, and i\texttt{d: mtM}. The \texttt{ctor}
attribute says that no \texttt{other} equations apply to
elements of \texttt{MsgS}. Thus multisets are generated
from the empty set and singletons by union.

\begin{small}
\begin{verbatim}
    op m : String String Content -> Msg [ctor] .
    op mtM : -> MsgS [ctor] .
    op __ : MsgS MsgS -> MsgS [ctor assoc comm id: mtM] .
\end{verbatim}
\end{small}
\noindent
This is a general pattern used to specify (multi)sets of
a given sort.  A similar pattern is used to specify
lists of elements, omitting the \texttt{comm} attribute.

Equations are introduced with the key word \texttt{eq}
(or \texttt{ceq} for conditional equations). For
example, the function \texttt{getTgt}, that gets the
target of a message, is defined by the following
declaration and equation.
\begin{small}
\begin{verbatim}
    op getTgt : Msg -> String .
    eq getTgt(m(tgt:String,src:String,c:Content)) = tgt:String .
\end{verbatim}
\end{small}
\noindent
Here we use inline variable declaration.  For example,
\texttt{tgt:String} declares \texttt{tgt} to be a variable of
sort \texttt{String}.  Alternatively, we could declare
the variable globally.
\begin{small}
\begin{verbatim}
    vars tgt src : String .
    var c : Content .
    eq getTgt(m(tgt,src,c)) = tgt .
\end{verbatim}
\end{small}

Rewrite rules are signalled by the keyword \texttt{rl}
(\texttt{crl} for conditional rules), optionally followed by
a label in []s.  The rule labeled
\texttt{net}  applies to any configuration that has a
network element of the form \texttt{net(dmsgs0
dmsg,dmsgs1)}, and moves \texttt{dmsg} from the first to the
second argument of the constructor \texttt{net}.
\footnote{A delayed message \texttt{dmsg} has the form
\texttt{msg @ d} where \texttt{d} is a natural number
representing the amount of time it will take the message
to traverse the network.}
\begin{small}
\begin{verbatim}
    rl[net]:
    net(dmsgs0 dmsg,dmsgs1) => net(dmsgs0,dmsgs1 dmsg)
    [print "\n[net] "]  .
\end{verbatim}
\end{small}
\noindent
The rule attribute \verb|[print "\n[net] "]| causes Maude
to print to the terminal ``[net]'' on a new line each time
the rule is applied in an execution. This print feature
allows the user to define a model specific execution trace
that is useful for debugging or demonstrating features of
an execution.

There are many Maude tools for static and dynamic
analysis of a system specification. In the present work
we only need the rewrite command for prototyping and
the search command for reachability analysis.
Assume you have defined a
module \texttt{SCENARIO} that includes all the modules
defining rules and data structures defining your model.
Assume further that system configurations are of sort
\texttt{Sys} and the module \texttt{SCENARIO} defines a
constant \texttt{initSys} of sort \texttt{Sys}.
Then we can use the command 
\begin{small}
\begin{verbatim}
    rew [n] initSys .
\end{verbatim}
\end{small}
\noindent
to find one way that \texttt{initSys} might evolve
for \texttt{n} steps.   If you omit \texttt{[n]} the
execution will run until a state is reached that has no
successors (a terminal state).  This should be used with
care as there may not be a reachable terminal state!
If a boolean function \texttt{P(sys:Sys)} is defined,
one can use the search command to carryout reachability
analysis.   The following search command will return the
first \texttt{n} distinct reachable states that satisfy
the property \texttt{P}.  
\begin{small}
\begin{verbatim}
  search [n] initSys =>+ sys:Sys such that P(sys:Sys) .
\end{verbatim}
\end{small}
\noindent
If \texttt{[n]} is omitted Maude will search for all
reachable states satisfying \texttt{P}.  Again caution
is suggested as there may be infinitely many such states.

Example initial configurations and properties for our CoAP
specification are defined in section \ref{sec:coap-spec}
and reachability analysis experiments for different attack
models are given in Sections \ref{sec:reactive-attacks} and
Appendix \ref{apx:coap-vulnerabilities}.

\begin{small}
\begin{verbatim}

\end{verbatim}
\end{small}

\section{Executable Specification of CoAP}
\label{sec:coap-spec}

The Maude specification of the CoAP messaging protocol
consists of specifications of data structures to
represent execution state \ref{subsec:data}, rules
specifying the protocol behavior \ref{subsec:rules},
and an attack specification framework
\ref{subsec:attack-specification}. The data structures
include a message data type and data structures
representing endpoint and network state. The basic
CoAP rules model sending and receiving messages, flow
of messages in the network, and passing of time. The
attack model consists of a specification of attacker
capabilities and a rule for application of a
capability to modify communications. The Maude
specification of the two families of attack models is
given in Section \ref{subsec:attack-model-spec} The
CoAP specification concludes with definitions of
functions for generating application message lists and
scenario initial states ( Section
\ref{subsec:coap-scenarios}) and some small example
scenarios and analyses to illustrate the ideas
(Section \ref{subsec:sample-scenarios}).

\subsection{Data types}\label{subsec:data}
\paragraph{Message data type.}

The specification of the message data type is a direct
formalization of the (abstract) structure of messages
as specified in \cite{RFC7252}.
Messages (sort \texttt{Msg}) are constructed by the operation

\begin{small}
\begin{verbatim}
    op m : String String Content -> Msg [ctor] .
\end{verbatim}
\end{small}

\noindent
thus a typical message term has the form
\texttt{m(tgt,src,content)} with \texttt{tgt},
\texttt{src} being strings identifying the receiver
(target) and sender (source) respectively.
 
The sort \texttt{DMsg} of delayed messages consists of
terms of the form \texttt{msg @ d} where \texttt{msg}
is of sort \texttt{Msg} and \texttt{d} is the delay, a
natural number (sort \texttt{Nat}). The delay is used
to model the time between sending and receiving a
message (network latency). It is also used to model the
timer controlling resend of a confirmable message.

As discussed in Section \ref{subsec:coap} message
content (sort \texttt{Content}) is constructed from
sorts \texttt{Head}, multisets of \texttt{Option}s, and
\texttt{Body} (aka payload).

\begin{small}
\begin{verbatim}
    op c : Head String Options Body -> Content [ctor] .
\end{verbatim}
\end{small}

\noindent
We define a constant \texttt{mtBody} of sort \texttt{Body}
to stand for the empty payload, and assume non empty
payload is represented by strings: \texttt{b(str)}.
Elements of sort \texttt{Head} are constructed by

\begin{small}
\begin{verbatim}
    op h : String String String -> Head  [ctor] .
\end{verbatim}
\end{small}
 
\noindent
where in \texttt{h(type,code,mid)} the first string
is the message type (\texttt{"CON"}, \texttt{``NON''}, 
\texttt{"ACK"} or \texttt{"RST"}), the second string
is the HTTP-like code, and the third string is the
message unique identifier, generated by the sender,
as discussed in \ref{subsec:coap}

The \texttt{String} element of a message content is the
token used to match a response to the corresponding request.  Each request from a given source to a given
target should have a unique token generated for it.

Sort \texttt{Options} is a multiset of sort \texttt{Option} where \texttt{mtO} is the empty option set. An individual
option has the form \texttt{o(oname,oval)} where
\texttt{oname} is a string naming the option and \texttt{oval}
is the option value of sort \texttt{String} or \texttt{Nat}.
The current specification supports options representing
request URIs and URIs in a response resulting from creation
of a resource.   For example the equation 

\begin{small}
\begin{verbatim}
    eq getPath(opts o("Uri-Path",path)) = path .
\end{verbatim}
\end{small}

\noindent
defines how to access the Uri-path of a request.
Multisets of delayed messages (sort \texttt{DMsgS}, id:
\texttt{mtDM}) are defined similarly to multisets of
messages shown above. 

Selectors are defined for each
component of a message. For example the selector for
message type is defined as follows:

\begin{small}
\begin{verbatim}
      op getType : Msg -> String .
      op getType : Content -> String .
      op getType : Head -> String .

      eq getType(m(dst:String,src:String,c:Content)) 
                = getType(c:Content) .
      eq getType(c(h:Head,tok:String,opts:Options,body:Body)) 
                = getType(h:Head) .
      eq getType(h(type,code,mid)) = type .
\end{verbatim}
\end{small}

The CoAP protocol itself does not generate requests, rather
it packages and transmits application level messages. For
testing and analysis purposes we represent the application by
a list of application level messages (requests) (sorts
\texttt{AMsg}, \texttt{AMsgL}) to be transmitted.
An application message consists of six strings and a
message body.  The strings specify: the application id,
the message target and type, the method, and the
resource path and query parameters if any.
 
\begin{small}
\begin{verbatim}
   ****     appid    tgt   type   meth   path  qparams body
  op amsg : String String String String String String Body
          -> AMsg [ctor] .
\end{verbatim}
\end{small}

\noindent
The empty \texttt{AMsgL} is \texttt{nilAM} and list concatenation is \verb|_;_|.

\paragraph{Endpoint state.}

A CoAP network system consists of a set of endpoints
(devices running CoAP) and the network. The network is
modeled as a pair of delayed message sets
\texttt{net(dmsgs0,dmsgs1)}. Newly sent messages enter
the first component, are moved by the net (or an
attacker) to the second component from which they are
delivered once the delay is zero.

An endpoint (also referred to as a device) is a term of
sort \texttt{Agent} of the form 
\begin{small}
\begin{verbatim}
        [epid | attrs]
\end{verbatim}
\end{small}
\noindent
where \texttt{epid} is a string identifying the endpoint
and \texttt{attrs} is a set of attributes representing the
endpoints current state. CoAP endpoint attributes include
attributes to track status of msgs sent and received in
order to implement the rules for reliability (confirmable
messages) and deduplication.  They also include
configuration parameters allowing each endpoint to be
configured differently if desired.

\begin{itemize}
\item
\texttt{w4Ack(dmsgs)}: \texttt{dmsgs} is the set of 
confirmable messages sent and not yet acknowledged.  The delay
part is the amount of time to wait before resending the
message;
\item
\texttt{w4Rsp(msgs)}: a message in \texttt{msgs} is either
a non-confirmable request waiting for a response, or
a confirmable request for which an ACK has been received
but the response is still pending;
\item
\texttt{rspRcd(msgs)}:  logs responses received to avoid reprocessing repeated responses;
\item
\texttt{rspSntD(dmsgs)}: logs responses sent where the
delay represents the message id/token lifetime. It can
be used to resend the response in case the original is
lost, and to detect replay attempts within the message
id lifetime.
\item
\texttt{rsrcs(rmap)}:  \texttt{rmap} is a map from resource paths to values, representing server resources state;
\item
\texttt{ctr(n)}:  \texttt{n} is a natural number used to model generating fresh/unique message Id strings (\texttt{genMid(prefix,n)}) and unique tokens (\texttt{genTok(prefix,n)});
\item
\texttt{sendReqs(amsgl)}: a list of application messages
to be transmitted
\item
\texttt{config(cbnds)}: a map from global parameter names to configured values, examples include message send delay (abstracts time in transit, \texttt{cb("msgSD",n)}), maximun number of resends (\verb|cb("MAX_RETRANSMIT", n0)|), backoff parameter for computing delay between resends (\texttt{cb("msgQD",n1)}), congestion control parameters such as time to wait for an ACK before resending a confirmable message (\verb|cb("ACK_TIMEOUT", n2)|), \etc.;
\item
\texttt{toSend(dmsg)}: an attribute used to store information
returned by auxiliary functions
\end{itemize}
  
A number of auxiliary functions on attributes are defined for
use in specifying rules.   We discuss two examples.
The function \texttt{getMsgSndDelay} looks up the \texttt{"msgSD"} configuration parameter.  This parameter
models average network transit time of a message.
\begin{small}
\begin{verbatim}
    op getMsgSndDelay : Attrs -> Nat .
    op getMsgSndDelay : CBnds -> Nat .

    eq getMsgSndDelay(attrs config(cbnds)) =  getMsgSndDelay(cbnds) .
    eq getMsgSndDelay(cb("msgSD",n) cbnds) = n .
\end{verbatim}
\end{small}

The function \texttt{findRspRcd} checks if a response message
from \texttt{dst} has been received in response to a request
with identifier \texttt{mid} and token \texttt{tok}. It uses
the message selectors \texttt{getSrc} which would be the
destination of the corresponding request, and \texttt{getTok}
to use to match the response to a request.
\texttt{findRspRcd} is used to avoid reprocessing responses
and also to be able to resend and ACK to a confirmable
response.  

\begin{small}
\begin{verbatim}
    ****                    dst  mid     tok
    op findRspRcd : Attrs String String String -> Bool .
    op findRspRcd : MsgS String String String -> Bool .
    eq findRspRcd(devatts rspRcd(msgs), dst,mid,tok) =
          findRspRcd(msgs, dst,mid,tok) .
    ceq findRspRcd(msgs msg, dst,mid,tok) = true
      if getSrc(msg) == dst
      /\ getTok(msg) == tok .  
    eq findRspRcd(msgs, dst,mid,tok) = false [owise] .
\end{verbatim}
\end{small}

\subsection{Rules}\label{subsec:rules}

Five rules are used to specify CoAP executions. The rules
labelled \texttt{devsend} and \texttt{rcv} specify
transmission and receipt of messages by an endpoint. The
rule labelled \texttt{ackTimeout} specifies when a
confirmable message is resent. The rule labelled
\texttt{net} specifies the network action of moving a
delayed message from the input side to the output side.
Finally, the rule labelled \texttt{tick} specifies the
passing of time. Rules \texttt{devsend}, \texttt{rcv} and
\texttt{ackTimeout} operate on sub-configurations
consisting of an endpoint and the network, the rule
\texttt{net} simply transforms the network configuration
element, while \texttt{tick} rule requires the full system
configuration \verb|{conf}|, formed by encapsulating a
configuration term in curly braces.  The \texttt{net} rule was discussed in Section~\ref{subsec:rwl-maude}.  The
remaining rules are discussed in the remainder of
this (sub)section.

\paragraph{Sending a message.}
The rule \texttt{devsend} sends the first element,
\texttt{amsg}, of the application message list attribute. The
variable \texttt{devatts} matches the remaining attributes of the sending endpoint state. The rule can only fire if \texttt{noW4Ack(devatts)} is true
and
sending is enabled (\texttt{canSend(devatts)}).
These tests model the CoAP congestion control.
\texttt{noW4Ack(devatts)} holds if the number of
confirmable messages awaiting acknowledgements is
not greater than the configuration parameter \texttt{"w4AckBd"} and \texttt{canSend} checks
that the \texttt{sndCtr} attribute of \texttt{devatts} is \texttt{0}.
The function \texttt{sndAMsg} constructs the delayed message to send, \texttt{dmsgs}, and updates \texttt{devatts}. To
construct \texttt{dmsgs}, a message id and token are
generated and the method string is converted to a code, then
the header, options and body are used to produce the message
content. The message delay is obtained from the configuration
parameter \texttt{"msgSD"}. If the message is type
\texttt{CON} it is added to the \texttt{w4Ack} attribute of
\texttt{devatts} with a delay given by the configuration
parameter \verb|"ACK_TIMEOUT"|, otherwise the message is
added to the \texttt{w4Rsp} attribute.  The \texttt{ctr} attribute is incremented by 2 (the two uses to generate
a message id and a token), and the \texttt{sndCtr} is reset using the configuration
parameter \texttt{"msgQD"}.

\begin{small} 
\begin{verbatim} 
    crl[devsend]: 
      [epid | sendReqs(amsg ; amsgl) devatts ] 
      net(dmsgs0,dmsgs1) 
      => 
      [epid | sendReqs(amsgl) devatts1] 
      net(dmsgs0 dmsgs,dmsgs1) 
    if noW4Ack(devatts) 
    /\ canSend(devatts) 
    /\ devatts1 toSend(dmsgs) := sndAMsg(epid,amsg, devatts) . \end{verbatim}
\end{small}

\paragraph{Receiving a message.}

A message with 0 delay is received by the target endpoint
using the rule labelled \texttt{rcv}. The work is done by
the function \texttt{rcvMsg} which uses the message header
code to classify the message as ``Request'', ``Response'',
``Empty'', or ``UnKnown'' and calls \texttt{rcvRequest},
\texttt{rcvResponse}, \texttt{rcvEmpty}, or
\texttt{sendReset} (if confirmable). Otherwise the message
is dropped. An empty message is either an acknowledgement
or a reset. In the acknowledgement case, the corresponding
delayed message in the \texttt{w4Ack} attribute is removed
if any, and the underlying message is added to the
\texttt{w4Rsp} attribute to be able to process the pending
response. In the reset case, if the message is confirmable
an acknowledgement is sent (this implements a ``PING''
functionality) otherwise the reset is ignored.
   
\begin{small}
\begin{verbatim}
    crl[rcv]:
      [epid | devatts] net(dmsgs0,dmsgs1 msg @ 0 )
      =>
      [epid | devatts1 ] net(dmsgs0 dmsgs,dmsgs1)
    if getTgt(msg) == epid
    /\ toSend(dmsgs) devatts1 := rcvMsg(epid, devatts, msg) .
\end{verbatim}
\end{small}

The rules for handling requests are specified in
\cite{RFC7252} section 5.8.  
The function \texttt{rcvRequest} implements these rules.
It first checks whether it has
already sent a response to this request using
\texttt{matchingRsp(devatts,msg)}. If so, if the request is
confirmable, the ACK is resent, otherwise the message is
ignored. If the request is new the method is computed from
the code and the appropriate method specific function is
called to process the message. As an example, if the method
is \texttt{"GET"}, the function \texttt{rcvGet} is called.
This function sends a single message in response:

\begin{small}
\begin{verbatim}
    msg = m(src,epid,c(h(rtype,code,rmid),tok,mtO,body) .
\end{verbatim}
\end{small}

\noindent
\texttt{rmid} is a new message id, \texttt{src} is the
sender of the request, \texttt{tok} is the token of the
request message. The type, \texttt{rtype}, is
\texttt{"ACK"} if the request in confirmable. In this
case, \texttt{msg} is a combined acknowledgement and
response. The type is \texttt{"NON"} if the request is
nonconfirmable. In this case \texttt{msg} is a simple
response. \texttt{body} is the result of attempting to
access the resource at the path specified in the
request message options using
\texttt{getResourceVal(devatts,path)}. It is
\texttt{b(val)} if a value \texttt{val} is found, and
\texttt{mtBody} if the resource does not exist. The
message code is \texttt{"2.05"} if a resource value is
found, indicating successful GET. The message code is
\texttt{"4.04"} if the resource is not found,
indicating an error. \texttt{rcvGet} updates the
endpoint attribute by incrementing the counter,
\texttt{ctr}, and recording that the request has been
processed by adding \texttt{msg} to the \texttt{rspSnt}
attribute.

The function \texttt{rcvResponse} checks if it is
combined with an acknowledgement or a simple response.
In the former case, the matching request is removed
from the \texttt{w4Ack} attribute, if any, and adds the
response message to the \texttt{rspRcd} if it has not
already been seen. This abstracts interaction with the
application layer. In the case of a simple response, if
a response has already been received, an acknowledgment
is sent if the response is confirmable and otherwise
the message is ignored. If this is a new response, it
is recorded in the \texttt{rspRcd} attribute, any
occurrence of the matching request is removed from the
\texttt{w4Ack} and \texttt{w4Rsp} attributes. If the
response is confirmable an acknowledgement is sent.

\paragraph{ReSending a message.}
The rule \texttt{ackTimeout} can fire if a message in
the \texttt{w2Ack} attribute has delay \texttt{0}. The
option \texttt{o("rcnt",n)} records the number of
resends of the message. If this count is less than the
max allowed (configuration parameter
\verb|"MAX_RETRANSMIT"|) then the message is returned
to the \texttt{w4Ack} attribute with a new ack wait
time, \texttt{delay}, computed using the
\texttt{backOff} function that doubles the delay for
each resend. Finally, the message with normal sending
delay is put in the network input side.

\begin{small}
\begin{verbatim}
  crl[ackTimeout]:
    [epid | w4Ack((msg @ 0) dmsgs) config(cbnds) attrs]
    net(dmsgs0,dmsgs1)
    =>
    [epid | w4Ack(dmsgs2 dmsgs) config(cbnds) attrs]
    net(dmsgs0 dmsg, dmsgs1)
  if  m(dst,epid,c(hd,tok,opts o("rcnt",n),body)) := msg
  /\ opts0 :=  o("rcnt",s n )
  /\ delay := backOff(getAckWait4(cbnds), s n)
  /\ dmsg :=  m(dst,epid,c(hd,tok,opts,body)) @ 
                getMsgSndDelay(cbnds)
  /\ dmsgs2 :=  (if n < getMaxReSnd(cbnds) 
                 then m(dst,epid,c(hd,tok,opts opts0,body)) @ 
                      delay
                 else mtDM  --- stop resending
                 fi)  .
\end{verbatim}
\end{small}

\omitthis{
\paragraph{Network Transmission.}

The rule \texttt{net} simply moves a message from the input
side to the output side. 
\begin{small}
\begin{verbatim}
    rl[net]:
    net(dmsgs0 dmsg,dmsgs1) => net(dmsgs0,dmsgs1 dmsg)
     .
\end{verbatim}
\end{small}
}

\paragraph{Passing time.}

For managing the passing of time, we follow the
Real-time Maude approach \cite{olveczky08tacas}. The
idea is that there is an earliest time at which some
instantaneous rule will be enabled to fire, called the
minimal time elapse (\texttt{mte}). If the \texttt{mte} is zero
then the enabled rules should be applied until there
are no more. If the \texttt{mte} is greater than zero then time
passes by this amount, reaching a situation where some
rules can fire at the current time (or the state is
terminal). It is the job of the specification designer
to ensure that only a finitely many rules can fire
before time must pass. In the CoAP model, \texttt{mte}
is greater than zero if all delays of messages in the
network, or an \texttt{w4Ack} attribute, are greater
than zero, or a message send is pending but the
\texttt{sndCtr} is non-zero. The rule labelled
\texttt{tick} formalizes the above. Here \texttt{nz}
has sort \texttt{NzNat} (non-zero natural number). The
function \texttt{passTime} decrements each message
delay in the network and \texttt{w4Ack} attributes and
each \texttt{sndCtr} by \texttt{nz}. It also decrements
delays of messages in \texttt{rspSntD} attributes,
modeling the decrease in lifetime of the identifier and
token.

\begin{small}
\begin{verbatim}
    crl[tick]:
    {conf} => {passTime(conf,nz,mt)}
    if nz := mte({conf}) .
\end{verbatim}
\end{small}

\subsection{Attack specification}
\label{subsec:attack-specification}

Our specification  of attacks against CoAP messaging consists of a specification of attack
capabilities (sort \texttt{Cap}), a function
\texttt{doAttack},  an attacker agent, and a rule
for executing attacks, that invokes \texttt{doAttack}
on a target message using an available capability.
Attack models resulting from different choices of capability sets, and corresponding \texttt{doAttack} clauses, are discussed in Section \ref{sec:attack-models}.
We note that the usual symbolic model matching patterns representing attacker ability to construct messages
doesn't work because there is no a priori expectation
of what messages are expected.

An attacker agent has the same form as a device endpoint 
\begin{small}
\begin{verbatim}
      [aid  |  caps(acaps) attrs] 
\end{verbatim}
\end{small}
\noindent
but with attacker specific attributes that include, \texttt{caps(acaps)}, the capabilities  attribute specifying what the attacker can do.
The rule labelled \texttt{attack} specifies the
semantics of an attack capability. It selects a target
delayed message, \texttt{dmsg}, from the network input
component, and a capability from the attackers
capability set, and calls \texttt{doAttack} with the
current attributes, the target message and the
capability. The result is a replacement, \texttt{dmsgs}
for \texttt{dmsg} and possibly updated attributes. 
For example, if the capability corresponds to \texttt{drop} then
\texttt{dmsgs} is the empty set, and if the capability
corresponds to \texttt{replay} after delay $n$ then \texttt{dmsgs} is
\texttt{dmsg} \texttt{dmsg1} where \texttt{dmsg1} is
\texttt{dmsg} with $n$ added to its delay. \omitthis{For
the currently supported capabilities, the attacker
attribute set is updated by adding the target message
to the \texttt{kb} attribute.
}
\begin{small}
\begin{verbatim}
    crl[attack]:
      [aid  |  caps(acaps acap) attrs] 
       net(dmsgs0 dmsg,dmsgs1)
     =>
      [aid  |  attrs1] 
      net(dmsgs0,dmsgs1 dmsgs)
     if toSend(dmsgs) attrs1 := 
        doAttack(attrs caps(acaps),dmsg,acap) .
\end{verbatim}
\end{small}

\subsection{CoAP Scenario Specification}
\label{subsec:coap-scenarios}
 
A scenario is specified by an \emph{intial system
configuration}. This together with the CoAP rewrite
theory determines a set of execution traces and the
reachable states of the scenario. To support
systematic testing of CoAP messaging and analysis
of attacks we defined two functions to generate
initial configurations for scenarios \texttt{tCS2C}
(test Client Server 2 endpoints) and \texttt{tCS3C}
(test Client Server 3 endpoints). These are used in
the experiments validating the vulnerabilities
discussed in \cite{coap-attacks}, described in
Appendix \ref{apx:coap-vulnerabilities}. They were
also used in experiments testing various CoAP
messaging patterns and effects of dialecting.

In addition to having an empty network, initial configurations are characterized by the
initial attributes of the endpoints.
The function \texttt{initDevAttrs} defines the 
initial values of attributes that are the same
for all endpoints in the scenarios considered.
It has two arguments that are passed on to the
function \texttt{mkCoapConf} the defines global
configuration parameters
\begin{small}
\begin{verbatim}
    op mkInitDevAttrs : Nat Nat -> Attrs .
    eq mkInitDevAttrs(mqd:Nat,w4ab:Nat) =
      w4Ack(mtDM)    
      w4Rsp(mtM) 
      rspSntD(mtDM)
      rspRcd(mtM)
      ctr(0) 
      config(mkCoapConf(mqd:Nat,w4ab:Nat)) .
 \end{verbatim}
 \end{small}
 
\noindent
To experiment with different delays for sending
application requests \texttt{mkCoapConf} has two arguments
giving the values of the \texttt{"msgQD"} and
\texttt{"w4AckBd"} configuration parameters.  Recall that
\texttt{"msgQD"} is the delay between message sends, and
sending is blocked if there are more than \texttt{"w4AckBd"}
confirmable messages awaiting acknowledgement.
\begin{small}
\begin{verbatim}
    op mkCoapConf : Nat Nat -> CBnds .
    eq mkCoapConf(mqd:Nat,w4ab:Nat) =
      cb("ACK_TIMEOUT", 5)
      cb("ACK_RANDOM_FACTOR", 2)
      cb("MAX_RETRANSMIT", 1)
      cb("msgSD",2)
      cb("msgQD",mqd:Nat)
      cb("w4AckBd",w4ab:Nat)
      cb("ttl", 10) .
\end{verbatim}
\end{small}

\noindent
We define initial system configurations as follows.

\begin{definition}[Initial CoAP configuration]
   \label{defn:sysI}
 An initial CoAP system configuration \texttt{sysI}  contains one
 network configuration element of the form
\begin{small}
\begin{verbatim}
    net(mtDM,mtDM)
\end{verbatim}
\end{small}
\noindent 
and a finite number of endpoint agents of the form 
\begin{small}
\begin{verbatim}
    [eid | sendReqs(amsgl:AMsgL) rsrcs(rbnds:RMap)  initDevAttrs(mqd:Nat,w4ab:Nat)]
\end{verbatim}
\end{small}
\noindent
 An initial system configuration with attacker also has one attack agent of the form
\begin{small}
\begin{verbatim}
     [ "eve" | kb(dmsgs) caps(caps:Caps)]
\end{verbatim}
\end{small}
\noindent
which we omit if \texttt{caps:Caps} is empty.
\end{definition}
 
\paragraph{Initial System Configuration Generators.} 
\noindent
The scenario generators  use auxiliary functions
to generate endpoint agents (\texttt{mkDevC})  and
an optionally an attack agent (\texttt{mkAtt}).

The device agent generator \texttt{mkDevC} has arguments
specifying its identifier (\texttt{n}), initial delay before
sending messages (\texttt{j}), its list of application
messages to send \texttt{(amsgl)}, the initial state of any
resources (\texttt{rbnds}), and two arguments passed on to
\texttt{mkInitDevAttrs}.

\begin{small}
\begin{verbatim}
    op mkDevC : Nat Nat AMsgL RMap Nat Nat -> Agent .
    ceq mkDevC(n,j,amsgl,rbnds,mqd:Nat,w4ab:Nat) =
      [ epid | sendReqs(amsgl) rsrcs(rbnds)
               sndCtr(j) mkInitDevAttrs(mqd:Nat,w4ab:Nat) ]
    if epid := "dev"  + string(n,10) .
 \end{verbatim}
 \end{small}

\noindent
The attack agent generator \texttt{mkAtt} has one argument,
the capability set \texttt{caps}.

\begin{small}
\begin{verbatim}
    op mkAtt : Caps -> Agent .
    eq mkAtt(caps) = ["eve" | kb(mtDM) caps(caps)] .
 \end{verbatim}
 \end{small}

\noindent
The function \texttt{tCS2C} has three groups of arguments,
one group for each of two devices and one (singleton)
group for the attacker. 
The resul is a client device, \texttt{"dev0"}, with
a list of application messages \texttt{amsgl} to send; a
server device, \texttt{"dev1"}, with resources
\texttt{rmap}; and an attacker, \texttt{"eve"}, with
capabilities \texttt{caps}. The function simply applies
the appropriate agent constructor to each group of
arguments and adds an initial empty network state.  If there are no attack capabilities
(\texttt{mtC}), then no attack agent is generated.

\begin{small}
\begin{verbatim}
****  2 endpoint configuration, with devconfig parameters
**** w4ab = n requires w4Ack to have <= n dmsgs
**** mqd = m requires delay of m before sending next request
    op tCS2C : Nat Nat AMsgL RMap Nat Nat
               Nat Nat AMsgL RMap Nat Nat
              Caps -> Conf .
    eq tCS2C(n0,j0,amsgl0,rbnds0,mqd0:Nat,w4ab0:Nat,
             n1,j1,amsgl1,rbnds1,mqd1:Nat,w4ab1:Nat,
            caps) = 
       net(mtDM,mtDM)
       mkDevC(n0,j0,amsgl0,rbnds0,mqd0:Nat,w4ab0:Nat)  
       mkDevC(n1,j1,amsgl1,rbnds1,mqd1:Nat,w4ab1:Nat)  
       (if caps == mtC
        then mt
        else mkAtt(caps) 
        fi)  .
 \end{verbatim}
 \end{small}

\noindent
The definition of \texttt{tCS3C} is similar, it just
has one more group of device arguments and generates
a third device.

A typical two endpoint scenario has one endpoint with
some messages, \texttt{amsgl0}, to send (a client role) 
and one with initial resources \texttt{rbnds1} (a server role) with message sending delay 5 and \texttt{w4Ack} 
bound of 0.

\begin{small}
\begin{verbatim}
    op tCS : AMsgL RMap Caps -> Conf .
    eq tCS(amsgl0,rbnds1,caps) =    
        tCS2C(0,0,amsgl0,mtR,5,0,
              1,1,nilAM,rbnds1,5,0,
             caps) .
 \end{verbatim}
 \end{small}

A similar scenario but with two servers
(with initial resources \texttt{rbnds1} and \texttt{rbnds2}) is created by the function \texttt{tCSS}.

\begin{small}
\begin{verbatim}
    op tCSS : AMsgL RMap RMap Caps -> Conf .
    eq tCSS(amsgl0,rbnds1,rbnds2,caps) =    
        tCS3C(0,0,amsgl0,mtR,5,0,
              1,1,nilAM,rbnds1,5,0,
              2,1,nilAM,rbnds2,5,0,
             caps) .
 \end{verbatim}
 \end{small}

Finally, we define application message constructors to
use in creating specific scenarios. The functions
\texttt{mkGetC} / \texttt{mkGetN} construct confirmable/non-confirmable \texttt{GET} messages.
The arguments are a message identifier (\texttt{id}),
the receiver (\texttt{tgt}), and the resource path
(\texttt{path}).

\begin{small}
\begin{verbatim}
    ****         Id     Tgt   Resource
    op mkGetC : String String String -> AMsg .
    op mkGetN : String String String -> AMsg .
    eq mkGetC(id,tgt,path) =
         amsg(id,tgt,"CON","GET",path,"",mtBody) .
    eq mkGetN(id,tgt,path) =
         amsg(id,tgt,"NON","GET",path,"",mtBody) .
 \end{verbatim}
 \end{small}

\noindent
Similarly the functions \texttt{mkPutC} / \texttt{mkPutN} construct confirmable/non-confirmable \texttt{PUT} messages with an additional argument, the value \texttt{(val)} to
assign to the resource.

\begin{small}
\begin{verbatim}
    ****         Id    Tgt    Resource  Val
    op mkPutC : String String String  String -> AMsg .
    op mkPutN : String String String  String -> AMsg .

    eq mkPutC(id,tgt,path,val) =
         amsg(id,tgt,"CON","PUT",path,"",b(val)) .
    eq mkPutN(id,tgt,path,val) =
         amsg(id,tgt,"NON","PUT",path,"",b(val)) .
 \end{verbatim}
 \end{small}

\noindent 
The function \texttt{mkDelN} creates a non-confirmable
\texttt{DELETE} message with the same arguments as the
\texttt{GET} messages.

\begin{small}
\begin{verbatim}
    ****         Id    Tgt    Resource  
    op mkDelN : String String String -> AMsg .
    eq mkDelN(id,tgt,path) =
         amsg(id,tgt,"NON","DELETE",path,"",mtBody) .
 \end{verbatim}
 \end{small}

In order to search for attacks based on order of
events (for example, message receives) we introduce
a Log configuration element. 
\begin{small}
\begin{verbatim}
    sort Log .   subsort Log < ConfElt .
\end{verbatim}
\end{small}
    
\noindent
A log is a list of log items specified as follows    
\begin{small}
\begin{verbatim}
    sorts LogItem LogItemL .
    subsort LogItem < LogItemL .
    op nilLI : -> LogItemL [ctor] .  --- the empty list
    op _;_ : LogItemL LogItemL -> LogItemL 
            [ctor assoc id: nilLI] .  --- list concatenation

op log : LogItemL -> Log [ctor] .  --- the log entry
\end{verbatim}
\end{small}

\noindent 
Receive of a  request to \texttt{PUT} value \texttt{val}
at resource path \texttt{path} by endpoint
with identifier \texttt{epid} is represented by
the log item 
\begin{small}
\begin{verbatim}
        rcvP(epid,path,val) .
\end{verbatim}
\end{small}

\noindent
Logging is currently only used by the \texttt{rcvPut} function
that adds the attribute

\begin{small}
\begin{verbatim}
  toLog(rcvP(epid,path,val))
\end{verbatim}
\end{small}
\noindent
to the returned attribute set.   The receive rule  \texttt{crl[rcv]}
is augmented with the conditional clauses   

\begin{small}
\begin{verbatim}
    /\ conf1 := doLog(conf,devatts1)
    /\ devatts2 := clearToLog(devatts1) 
\end{verbatim}
\end{small}
\noindent

\noindent
to add any logitems from a \texttt{toLog} attribute to
the log configuration element, and remove any
\texttt{toLog} attribute from the device attributes.

To support defining properties based on ordering of events
we define a function \texttt{findRcvLI} that given a log
item list, a starting index, and a pattern (three string
variables) returns a pair (sort \texttt{LogItemIx})
consisting of the first matching event on or after the
starting index, together with the index of the event in
the list.

\begin{small}
\begin{verbatim}
    sort LogItemIx .
    op `{_`,_`} : LogItem Nat -> LogItemIx [ctor] .

    op findRcvLI : LogItemL Nat String String String ->  [LogItemIx] .
    op findRcvLIX : LogItemL String String String  Nat ->  [LogItemIx] .

    eq  findRcvLI(lil:LogItemL,n,epat,ppat,vpat)  
        = findRcvLIX(nthCdr(lil:LogItemL,n),epat,ppat,vpat,n) .

    eq findRcvLIX(li:LogItem ; lil:LogItemL,epat,ppat,vpat,n)
       =
      (if matchesLI(li:LogItem,epat,ppat,vpat)
       then {li:LogItem,n}  
       else findRcvLIX(lil:LogItemL,epat,ppat,vpat,s n) fi) .
\end{verbatim}
\end{small}
\noindent
The \texttt{nthCdr(lil:LogItemL,n)} function returns the
suffix of the logitem list \texttt{lil:LogItemL} beginning
at the nth element (counting from 0). The logitem
\texttt{rcvP(epid,path,val)} matches the pattern 
\texttt{(epat,ppat,vpat)} if
corresponding elements of the triple match. A string matches
a pattern string if the two are equal or the pattern is
\texttt{""}.  An error value results if no matching event is
found.

For example to check that a request to \texttt{"dev1"} to
unlock its door resource was followed by one locking that
door we can use the property

\begin{small}
\begin{verbatim}
    subLIL(c:Conf,rcvP("dev1","door","unlock") ;
                  rcvP("dev1","door","lock"))  
\end{verbatim}
\end{small}
\noindent
where \texttt{subLIL} checks that there is a sublist
of the configuration log element that matches
the second argument logitem pattern list.
It is defined using the \texttt{findRcvLI} function
as follows.

\begin{small}
\begin{verbatim}
    op subLIL : Conf LogItemL -> Bool .
    eq subLIL(c:Conf log(litl),plitl) = subLILX(litl,plitl,0) .
    op subLIXL : LogItemL LogItemL Nat -> Bool .
    ceq subLILX(litl,rcvP(epat,ppat,vpat) ; plitl,n) =
          subLILX(litl, plitl,n0) =
     if {rcvP(epid0,path0,val0), n0} :=
             findRcvLI(litl,n,epat,ppat,vpat) 
    eq subLILX(litl,nilLI,n) = true .
    eq subLILX(litl,plitl,n) = false [owise] .
\end{verbatim}
\end{small}

\subsection{Sample scenarios.}
\label{subsec:sample-scenarios}

To illustrate the use of the CoAP specification, and
introduce some of the attacks of interest we define some
small scenarios and show how to search for attacks. Attack
models are discussed in more detail and formalized in
Section~\ref{sec:attack-models}.

Consider a scenario with client, \texttt{"dev0"}, that
requests server, \texttt{"dev1"}, to unlock
(\texttt{PUTCDU}), then lock the door (\texttt{PUTCDL}).
In between the resource \texttt{"sig"} is set to
\texttt{"go"} (\texttt{PUTNSG}). After a normal execution
of this protocol, the door is locked at the end.

\begin{small}
\begin{verbatim}
      dev0           dev1
      ----           lock
       o --PUTCDU-->  o 
       o <--ack --    o
       o <--2.04--    o
       o --PUTNSG-->  o  
       o <--2.01--    o
       o --PUTCDL-->  o
       o <--ack --    o
       O <--2.04--    o
\end{verbatim}
\end{small}

We use the function \texttt{tCS} and the application
message constructors described above to create the
initial configuration for this scenario:

\begin{small}
\begin{verbatim}
    iSys0 = {tCS(mkPutC("putCDU","dev1","door","unlock") ;
                 mkPutN("putNSG","dev1","sig","go") ; 
                 mkPutC("putCDL","dev1","door","lock"),   
                 rb("door","lock"),mtC)} .
\end{verbatim}
\end{small}

\noindent
Now we ask: Can the scenario (with no attacker) end with the door unlocked if the attacker has no capabilities?
This is done by searching for a terminal state
reachable from \texttt{iSys0} in which the \texttt{"dev1"}
\texttt{"door"} resource has value \texttt{"unlock"}.

\begin{small}
\begin{verbatim}
    search [1] iSys0 =>! {c:Conf} such that
       checkRsrc(c:Conf,"dev1","door","unlock") .
\end{verbatim}
\end{small} 

\noindent
There is no solution.
Next we ask if the attacker who can drop a message can force the scenario end with the door unlocked?  \texttt{iSys1} adds
the \texttt{drop} capability to the configuration 
\texttt{iSys0}.

\begin{small}
\begin{verbatim}
    iSys1 = {tCS(mkPutC("putNDU","dev1","door","unlock") ; 
                 mkPutN("putNSG","dev1","sig","go") ; 
                 mkPutC("putNDL","dev1","door","lock"), 
            rb("door","lock"),drop)}
\end{verbatim}
\end{small}

\noindent
We search for a final state where the \texttt{"door"}
resource of \texttt{"dev1"} has value \texttt{"unlock"}.
\begin{small}
\begin{verbatim}
    search [1] iSys1 =>! {c:Conf} such that 
        checkRsrc(c:Conf,"dev1","door","unlock") .
\end{verbatim}
\end{small}

There is no solution because even if the attacker drops the
lock request, since the request is confirmable, the client
will retry. If the attacker can drop two messages then it
can force termination with the door unlocked [not shown].

Suppose the attacker can replay a message (i.e. make a copy
and delay it) as in scenario \texttt{iSys2}.

\begin{small}
\begin{verbatim}
    iSys2 = {tCS(mkPutC("putNDU","dev1","door","unlock") ; 
                 mkPutN("putNSG","dev1","sig","go") ; 
                 mkPutC("putNDL","dev1","door","lock"), 
              rb("door","lock"),replay(10))}
\end{verbatim}
\end{small}

\noindent
Can the attacker cause an unlock to follow a lock request?
We search for an execution in which a door lock request
precedes an unlock request.  (We could also require that
the unlock request was the last door request.)
\begin{small}
\begin{verbatim}
    search iSys2 =>! {c:Conf} such that 
        subLIL(c:Conf, rcvP("dev1","door","lock") ;
                       rcvP("dev1","door","unlock") ) .
\end{verbatim}
\end{small}

\noindent
There are two solutions.  The log from the second solution
follows.
\begin{small}
\begin{verbatim}
  log(rcvP("dev1","door","unlock") ; rcvP("dev1","sig","go") ; 
      rcvP("dev1","door","lock") ; rcvP("dev1","door","unlock"))
\end{verbatim}
\end{small}

Now, consider a situation where there are two servers
(\texttt{"dev1"} and \texttt{"dev2"}) with door resources.
The client wants the state of the door at \texttt{"dev1"}.
Can the attacker reroute the  \texttt{"GET"} request so that the client receives the state of the door at \texttt{"dev2"} instead?

Here we use and instance of a general attack capability
\begin{small}
\begin{verbatim}
   mc(tgt,src,b:Bool,act(tpat,spat,d:Nat))
\end{verbatim}
\end{small}
\noindent
This capability applies to messages to \texttt{tgt} from
\texttt{src}. It changes the target and source according to
\texttt{tpat} and \texttt{spat} (\texttt{""} means no
change) and adds \texttt{d:Nat} to the delay. If the flag
\texttt{b:Bool} is \texttt{true} the edited message replaces
the original, while if the flag \texttt{b:Bool} is
\texttt{false} the original message is unchanged and the
edit is a copy.

\begin{small}
\begin{verbatim}
      dev0         eve           dev1        dev2
      ----         ---          unlock       lock
       o -GETN(1)-> @ [-GETN(1)->] ?         
       |            o               -GETN(2)->o
       | <-lock(1)- @ <---lock(2)--            o 
       |          [<--unlock(1)--] o
       o <-oneof{lock,unlock}-
\end{verbatim}
\end{small}

\texttt{iSys3a} is the scenario for the case when the attacker modifies a message in transit.  Note that the attacker must
redirect both the request (changing the target) and the response
(changing the source).
\begin{small}
\begin{verbatim}
    iSys3a = tCSS(mkGetN("getN","dev1" ,"door"), 
                  rb("door","unlock"),rb("door","lock"), 
                 mc("dev1","dev0",true,act("dev2","dev0",0)) 
                 mc("dev0","dev2",true,act("dev0","dev1",0))) .
\end{verbatim}
\end{small}
\noindent
To check for an attack we search for a final state
in which the door resource of \texttt{"dev1"} is
unlocked, but \texttt{"dev0"} receives a response
(apparently) from \texttt{"dev1"} with the value \texttt{"lock"}.
\begin{small}
\begin{verbatim}
    search iSys3a =>! {c:Conf} such that 
        checkRsrc(c:Conf,"dev1","door","unlock") and 
        hasGetRsp(c:Conf,"dev0","dev1","getN","lock") .
\end{verbatim}
\end{small}

\noindent
There are 4 solutions of 33 states visited. The log is empty
because the model currently does not currently log
\texttt{"GET"} requests.

Suppose the attacker can not change messages in transit,
but can make and edit copies.  This is modelled by setting
the flag in the capabilities to \texttt{false}, as 
in the scenario \texttt{iSys3r}.
\begin{small}
\begin{verbatim}
    iSys3r = tCSS(mkGetN("getN","dev1" ,"door"), 
                  rb("door","unlock"),rb("door","lock"), 
                  mc("dev1","dev0",false,act("dev2","dev0",0)) 
                  mc("dev0","dev2",false,act("dev0","dev1",0))) .
\end{verbatim}
\end{small}

\noindent
We repeat the search for attacks starting with \texttt{iSys3r} 
using this more restricted attack.

\begin{small}
\begin{verbatim}
    search iSys3r =>! {c:Conf} such that 
        checkRsrc(c:Conf,"dev1","door","unlock") and 
        hasGetRsp(c:Conf,"dev0","dev1","getN","lock") .
\end{verbatim}
\end{small}

\noindent
There are also 4 solutions. In this case 109 states
were visited, indicating that the probability of
attacker success is less when it can only make copies
and not change the original message in transit.

\section{Attack Models}\label{sec:attack-models}

We consider two main classes of attacker:
active and reactive. The active attacker models the
attacker envisioned in \cite{coap-attacks}. Using the
formal representation of an active attacker we
replicated versions of the attacks postulated in
\cite{coap-attacks} (see Appendix
\ref{apx:coap-vulnerabilities}). These attack
capabilities (except for the rerouting extension, A1,
see below) simply amplify the effects of an
unreliable transport, dialecting can not mitigate
such attacks, and applications need to deal with the
unreliability of the network independently of
attackers.

The reactive attacker can not modify messages in
transit (can not drop or delay these messages). But
it can observe, make copies of messages in transit,
and transmit modified copies (redirecting, resending
with delay). These capabilities violate the expected
network guarantee that if an endpoint, \emph{ep1}
receives a message with source \emph{ep0}, then there
is a unique previous event in which endpoint
\emph{ep0} sent that message. They are also
capabilities that can be mitigated by dialecting as
we show in Section \ref{sec:dialect-properties}.

The \texttt{drop} capability of scenario
\texttt{iSys1} of
Section~\ref{subsec:sample-scenarios} is an example
of an active attack. The \texttt{replay} capability
of scenario \texttt{iSys2} can be used by active or
reactive attackers. The flag in the redirection
capability of scenarios \texttt{iSys3a} and
\texttt{iSys3r} determines whether the attack is
active (flag is \texttt{true}) or reactive (flag is
\texttt{false}). This is formalized in Section
\ref{subsec:attack-model-spec}

\subsection{Active attacker}

A simple active attacker (A0) has a finite set of single
actions from the capability set $\mathtt{CapsA0} =
\{\mathtt{drop}, \mathtt{delay(n)}\}$, where \texttt{drop}
(also called \texttt{block}) prevents a message from being
delivered, and \texttt{delay(n)} delays delivery of a
message by $n$ time units. $\mathtt{CapsA1}$ extends
$\mathtt{CapsA1}$ with two additional capabilities
$\{\mathtt{replay(n)}, \mathtt{reroute(tpat,spat,n)}\}$.
Applying \texttt{replay(n)} makes a copy of the message
and transmits it with $n$ added to its delay, while
\texttt{reroute(tpat,spat,n)} modifies the message target
and source according to \texttt{tpat} and \texttt{spat}
respectively and adds $n$ to the delay. The pattern
\texttt{tpat} is either a wild card (the empty identifier)
which leaves the target unchanged, or a specific endpoint
identifier that replaces the original. Similarly for the
action of \texttt{spat} on the message source.

Informally, these attack capacities can be used to 
achieve the following basic effects on CoAP messaging.
\begin{itemize}

\item \texttt{drop} can be used to prevent a request (for
some action) from being received, or to prevent a response
to some request. In the response case this leaves the client
is a state of not knowing the requested data value, or if
the requested action happened. Note that due to CoAPs
reliability mechanism, if the target message is confirmable,
the drop attack will need to be applied to all retries to
succeed.

\item The \texttt{delay} of a request, say \texttt{DM1} from
a sequence \texttt{DM1 ... DM2} until after \texttt{DM2} can
undo the effect of \texttt{DM2} (door unlock after door
lock), or could cause \texttt{DM2} to happen in the wrong
context, for example without lights turned on.
    
\item  The \texttt{delay} of a response, say to  \texttt{DM1} delivered as response to a later 
(blocked) request \texttt{DM2} could cause the requesting
client to receive out-of-date data, or to think the action
requested by \texttt{DM2} happened, when only the action of
\texttt{DM1} happened.  If \texttt{DM1}, \texttt{DM2} are \texttt{GET} requests for different resources, say r1 and r2,
the client may interpret the response received as the value of r2 when it is the value of r1.
We note that this only happens with weak implementation of the token mechanism used to link requests and responses.

\item Rerouting a request from ep0 to ep1 to a request from
ep0 to ep2 and (un)rerouting the response from ep2 to be from ep1 causes the client ep0 to interpret the response as the value of requested resource at
ep1 when it is the value at ep2.  
 
\end{itemize}

\noindent
The capability set \texttt{CapsA0} is sufficient to
demonstrate all the vulnerabilities studied in 
Appendix \ref{apx:coap-vulnerabilities},
except for the modified final example  that uses rerouting.

\subsection{Reactive attacker}

A reactive attacker can not modify or block a message
in transit. It can see/copy a message, change the
source/target or the delay amount and transmit the
result.

A simple reactive attacker has capabilities in the set
$\mathtt{CapsR0}$ that allows the attacker to make (and
edit) one copy of a target message. A given attacker
instance is limited to a finite number of attack attempts.

We also consider a multi action reactive attacker,
with capabilities in the set $\mathtt{CapsR1}$. Using
a multi-action, an attacker can make multiple copies
per message, optionally adding to the delay and/or
modifying the message source or target. We allow the
attacker to restrict attention to a given target-source
pattern. This mainly improves the attacker efficiency
(reduces the search space) but adds no power at the
granularity we consider. As for the simple reactive
attacker, each attacker instance is limited to a
finite set of multi-action capabilities. In a little
more detail a multi-action capability consists of a
condition \texttt{match(tpat,spat)} constraining the
set of messages to be attacked, and a finite set of
attack actions \texttt{act(tpat,spat,n)}. As before,
\texttt{(tpat,spat)} can be wild cards (represented by
the empty string) or specific endpoint identifiers. In
the \texttt{match} case a wild card matches any
identifier, while specific identifiers require
equality to match. In the case of \texttt{act}, a
wild card means the corresponding target or source is
not changed, while a specific identifier causes the
corresponding target or source to be replaced by the
pattern element. The copy is further delayed by the
amount given by the number $n$.

The simple reactive attacker is a special case of the multi-action reactive attacker, where the condition is
restricted to \texttt{match("","")} and the \texttt{capset}
applied to a message is a singleton.
Although a reactive attacker can't block or delay messages
in transit,  many of the active attack effects can be
achieved.

Here are examples of what a reactive attacker can do.
\begin{itemize}
\item \emph{R1. Undo/revert an action}. Suppose messages \texttt{M1}, \texttt{M2} PUT different values for the same
resource.  The attacker copies \texttt{M1} and sends the
copy with sufficient delay to arrive after \texttt{M2}, thus overriding the effect of \texttt{M2}.  
This emulates some of the reordering capability of the
active attacker, leaving the resource in a state other
that what the client planned.  It could for example
leave a process running that should have stopped.
If the client then starts a process by PUTting a different
resource in the state \texttt{"on"} after sending \texttt{M2}, the new process may run in an unexpected
context.

\item \emph{R2. Violate ordering or concurrency constraints.}
The attacker observes a message \texttt{M1} from \texttt{ep0} to \texttt{ep1}.
It makes copies redirected to \texttt{ep2} (\texttt{ep3} \ldots).  A compliant CoAP endpoint \texttt{ep0} will
ignore responses from \texttt{ep2} (\texttt{ep3} \ldots). 
But in the case of requests causing actions, there will be
unexpected actions.  For example, if \texttt{ep0} is 
activating \texttt{ep1}, \texttt{ep2}, \ldots, in sequence
then \texttt{ep2}, \ldots, will be acting out of sequence,
concurrently with \texttt{ep1},
which could lead to undesired consequences.

\item \emph{R3. Duplication of a process.}
Suppose as above \texttt{ep0} is coordinating a process by
activating \texttt{ep1}, \ldots, \texttt{epk} in sequence,
using messages \texttt{M1}, \ldots, \texttt{Mk}.
Suppose also that \texttt{ep0x}, \texttt{ep1x}, \ldots,
\texttt{epkx} are a functionally equivalent set of
endpoints, say in a different location.  Then the
attacker can copy each message and redirect to
\texttt{ep1x}, \ldots \texttt{epkx}.

\item \emph{R4. Spoofing (Redirect GET request/response).}
The attacker observes a message \texttt{M1} from
\texttt{ep0} to \texttt{ep1}. It makes a copy
redirected to \texttt{ep2}. A compliant CoAP endpoint
will reject the response from \texttt{ep2} (wrong
server). If the attacker also makes a copy of the
response, rerouting it to appear to be from
\texttt{ep1} then ep0 will accept which ever response
arrives first. If \texttt{M1} is a GET request, then
the client may receive the wrong value for the
requested resource.

\end{itemize}

R1 only needs one capability instance with one action.
R2 only needs one capability instance but with 2 or more
actions.
R3, R4 need multiple capability instances, each one
needs only a single action.

\subsection{More possibilities}
There are additional, stronger, attack models that could
be considered in the future.  One is a
reactive attacker with memory.  Here the attacker sees
all messages and can remember some.  Then at some point
the attacker can make edited copies of some of the
messages in its knowledge base and transmit them, for example
to rerun an application sequence either with the
original set of endpoints at a later time, or a separate
set (with the same behavior) at some point.  The attacker
may use observation of a message occurrence to trigger
actions, or time elapse.

A creative reactive attacker can additionally create new
messages and receive responses, i.e. becoming a rogue
endpoint.

\subsection{Attack-model specification.}
\label{subsec:attack-model-spec}

As part of the CoAP specification, we specified a
general attack framework where an attacker is endowed
with a finite set of \texttt{capabilities}. In this
section we specify specific capabilities 
corresponding to the active and reactive attack
models discussed above.  We define
capability constructors and define their semantics by
giving equations for the \texttt{doAttack} function
for each capability. Different models are obtained by
constraining the capability attribute of the attacker
agent.

We define a generic capability construction,
\texttt{mc(tpat,spat,active?,acaps)}.
\texttt{tpat},\texttt{spat} are target, source patterns
used to restrict the set of messages to be `attacked'.
\texttt{acaps}is a possibly empty set of actions. Each
action, \texttt{act(tpat1,spat1,d)}, causes a copy of
the matched message to be made, transformed, and
transmitted. \texttt{tpat1},\texttt{spat1} are
target, source patterns used to determine he target and
source of the transformed message. If a pattern is
\texttt{""} the original target or source is used,
otherwise the pattern string is used. The delay
\texttt{d} is added to the delay of the original message. If the
boolean \texttt{active?} is \texttt{true}, this is an
active capability, and the original message is removed
from the network, only the copies generated by the
actions are transmitted. If \texttt{active?} is
\texttt{false} the capability is reactive and the
original message is left in the network to be
transmitted as usual. This is formalized by the
following definitions.

\begin{small}
\begin{verbatim}
    ****      tpat   spat  delay
    op act : String String Nat -> Cap .
    ****     tpat  spat  active? actions
    op mc : String String Bool Caps -> Cap .
\end{verbatim}
\end{small}

The function \texttt{doAttack} defines the semantics
of each attack capability. It is given the attack
agent attributes, including its knowledge base of
messages seen, the attack target message,
\texttt{dmsg}, and a multi capability. If the target
message matches the selection pattern (using
\texttt{pmatch}) then the \texttt{toSend} attribute is
computed and returned along with the knowledge base
with the attacked message added. The \texttt{toSend}
attribute includes the attacked message iff the flag
\texttt{b} is false. It also includes the results of
\texttt{applyCaps(dmsg,acaps,mtDM)}, that gives the
semantics of each action.

\begin{small}
\begin{verbatim}
    ****                tgt  act   result
    op doAttack : Attrs DMsg Cap -> Attrs .
    ceq doAttack(attrs kb(dmsgs),dmsg,mc(tpat,spat,b,acaps))
           = toSend(dmsgs1) kb(dmsgs dmsg) attrs 
     if msg @ n := dmsg
     /\ pmatch(getTgt(msg),tpat)
     /\ pmatch(getSrc(msg),spat)
     /\ dmsgs0 := applyCaps(dmsg,acaps,mtDM)
     /\ dmsgs1 := dmsgs0 (if b then mtDM else dmsg fi) .
 \end{verbatim}
 \end{small}
\noindent
If the match fails doAttack returns a result that does not
match the expectation of the attack rule, and the rule
does not fire.  

The function \texttt{applyCaps} makes a  transformed copy
of the attacked message for each action in \texttt{acaps}.
The transformation sets the message target and source
according to the action patterns, and increments the
delay,
\begin{small}
\begin{verbatim}
**** multi cap attack
    op applyCaps : DMsg Caps DMsgS -> DMsgS .
    eq applyCaps(dmsg,mtC,dmsgs) = dmsgs .
    eq applyCaps(msg @ n,act(tpat,spat,n0) acaps,dmsgs) =
          applyCaps(msg @ n,acaps, dmsgs 
          (setTgtSrc(msg,tpat,spat) @ (n + n0)) ) .
\end{verbatim}
\end{small}

For early experiments and readability we defined a number
of specific (active) attack capacities.  We list them
here with their definitions as instances of the \texttt{mc} construction.
\begin{itemize}
\item  \texttt{drop = mc("","",true,mtC)} --- drop the target message
\item  \texttt{delay(n) = mc("","",true,act("","",n))} --- add n to the target message delay
\item  \texttt{divert(dst0,dst1) = mc(dst0,"",true,act(dst1,""))} --- reroute a target message with destination \texttt{dst0} to \texttt{dst1} 

\item  \texttt{undivert(src0,src1)  = mc("",src0,true,act("",src1,0))} --- make a target message from \texttt{src0} appear to be from \texttt{src1}

\item  \texttt{replay(n)= mc("","",false,act("","",n))} --- replay the target message 
  after delay n  (the original message is delivered as 
  expected)
\end{itemize}

\section{Example reactive attacks}
\label{sec:reactive-attacks}

This section describes experiments
using scenarios illustrating the 4 classes of
reactive attack discussed in Section \ref{sec:attack-models}.

\subsection{R1. Redo/revert an action}. 

The R1 scenario is a  variant of the attacks
on CoAP vulnerabilities Figures 3,4,5,6 (see Section \ref{apx:coap-vulnerabilities}).  The basic requirement
is that the door remains unlocked.  The attacker
aims for the door to be locked at the end, or before
the optional signal message.

\begin{small}
\begin{verbatim}
        dev0        eve        dev1
        ----        ---        unlock
         o -PUTNDL-> -----------> 
                    @ act("","",5/10)   
         o <-------  <-2.04--    o
          ....
         o -PUTNDU-> ----------> o 
         o <-------  <-2.04--    o
                    --- PUTNDL-> o  --- attack
    optional
         o -PUTNSO->  ------>    o  
         o <-------  <-2.01--    o
\end{verbatim}
\end{small}

The function \texttt{raR1} generates instances of this scenario.
It is parameterized by the delay between application message
sends (\texttt{mqd:Nat}), the bound on pending acknowledgements
that blocks sending an application message (\texttt{w4b:Nat}), the delay to be added to an replayed message (\texttt{d:Nat}),
and a boolean (\texttt{nso:Bool}) specifying whether the optional
\texttt{"signal"} message should be sent.  
\begin{small}
\begin{verbatim}
    op raR1 : Nat Nat Nat Bool -> Conf .
    eq raR1(mqd:Nat,w4b:Nat,d:Nat,nso:Bool) = 
     tCS2C(0,0,mkPutN("putNDL","dev1","door","lock") ;
               mkPutN("putNDU","dev1","door","unlock") ;
             (if nso:Bool 
              then mkPutN("putNS","dev1","signal","on")
              else nilAM fi) ,
              mtR,mqd:Nat,w4b:Nat,
          1,1,nilAM,rb("door","unlock")rb("sig","off"),2,0,
          mc("dev1","dev0",false, act("","",d:Nat))) .
\end{verbatim}
\end{small}

\noindent 
Can the attacker cause the execution to end with the door
locked? To check, search for a terminal state in which
\texttt{"dev1"} has resource \texttt{"door" }with value
\texttt{"lock"}.

\begin{small}
\begin{verbatim}
    search {raR1(5,0,10,false)} =>! {c:Conf} such that 
        checkRsrc(c:Conf,"dev1","door","lock") .
\end{verbatim}
\end{small}
\noindent
There are 2 solutions in 101 states visited.

To check that the attacker can shut the door before
a later event, we turn on the signal option and search
for a final state where the lock event happened before
the signal event.
\begin{small}
\begin{verbatim}
    search {raR1(5,0,10,true)} =>! {c:Conf} such that 
        checkRsrc(c:Conf,"dev1","door","lock") and 
        subLIL(c:Conf,rcvP("dev1","door","unlock") ;  
                      rcvP("dev1","door","lock") ; 
                      rcvP("dev1","signal","")) .
\end{verbatim}
\end{small}
\noindent
There are 2 solutions from 231 states visited.  The log
of the second solution is the following.
\begin{small}
\begin{verbatim}
    log(rcvP("dev1","door","lock") ; 
        rcvP("dev1", "door", "unlock") ; 
        rcvP("dev1", "door", "lock") ; 
        rcvP("dev1", "signal", "on"))
\end{verbatim}
\end{small}
\noindent
If the attacker uses a delay smaller than 10
or larger than 10 there is no attack (search yields
no solution).

\subsection{R2. Violating ordering or concurrency constraints}

In this example, a sequence of $n$ tasks is to be executed one
after the other. The protocol consists of a controller
(client endpoint) and n servers that execute the tasks. A
task is initiated when the server receives a \texttt{PUT}
\texttt{"sig"} \texttt{"on"} request and terminates when
he server receives a \texttt{PUT} \texttt{"sig"}
\texttt{"off"} request.

\begin{small}
\begin{verbatim}
    dev0        eve        dev1  ... devk
    ----        ---        off
     o -PUTNon  ----------> o           
                 @(cc dev2)> ----->  
     o <-------  <-2.04--   o
     ...                       
     o -PUTNoff  ----------> o   
     o <-------  <-2.04--    o
      ....
      ....
     o -PUTNon-> ---------->         o 
     o <-------  <-2.04---------     o
     o -PUTNoff-> -----------------> o 
     o <-------  <-2.04---------     o
\end{verbatim}
\end{small}

\noindent
The function \texttt{iSysX} generates instances of
the above scenario.  \texttt{n} is the number of
servers, \texttt{d} the task duration, and \texttt{caps}
the attacker capabilities.
\begin{small}
\begin{verbatim}
    op iSysX : Nat Nat Caps -> Sys .
    eq iSysX(n,d,caps) = {CnS(n,mkSigAMs(n,d,nilAM),rb("sig","off"),5,0) 
      mkAtt(caps) } . 
\end{verbatim}
\end{small}

\noindent
It uses the function
\texttt{CnS(n,amsgl,rbnds,mqd,w4ab)} that generates a
configuration with one client and n servers. The
client's application message list is \texttt{amsgl} and
each server has initial resources given by the RMap
\texttt{rbnds}. The client delay between message sends
is \texttt{mqd} and it can only send a new message if
there are no more than \texttt{w4ab} confirmable
messages awaiting an \texttt{ACK}.

The function \texttt{mkSigAMs(n,d,nilAM)}
generates the list of application messages
for the client to send to control the 
execution. The list consists of pairs
\begin{small}
\begin{verbatim}
   mkPutN("putN","dev" + string(j,10),"sig","on") ;
   mkPutN("putN","dev" + string(j,10),"sig","off") .
\end{verbatim}
\end{small}
\noindent
for $j \in [1,n]$ separated by a delay of \texttt{d}.

We consider three levels of attacker capability.
Level $j$ (\texttt{caps-j}) looks for messages from
\texttt{"dev0" }to \texttt{"dev1"} and immediately
(additional delay 0) sends a copy to \texttt{"dev2"} ... \texttt{"dev1+j"}.  
\begin{small}
\begin{verbatim}
    ops caps-1 caps-2 caps-3 : -> Caps .
    eq caps-1 = mc("dev1","dev0",false,act("dev2","",0)) .
    eq caps-2 = mc("dev1","dev0",false,
                   act("dev2","",0) act("dev3","",0)) .
    eq caps-3 = mc("dev1","dev0",false,
                   act("dev2","",0) act("dev3","",0) 
                   act("dev4","",0)) .
\end{verbatim}
\end{small}

\noindent
We consider two attacks that violate the property
that the tasks do not interleave, by starting a task
on \texttt{"dev-k+j"} before the task on \texttt{"dev-k"} 
completes.
One check for violation is
that the number of devices with resource binding
\texttt{rb("sig","on")} is great than 1. 
The function \texttt{epswrb(conf,rbnds)} returns
the number of servers with resource map containing
a match for \texttt{rbnds} is used for this check.
Alternately we can check for executions in which
\texttt{"devj"} receives an \texttt{"on"} signal, and
\texttt{"devj+1"} receives the \texttt{"on"} signal before \texttt{"devj"} receives
the \texttt{"off"} signal.
This is done using the \texttt{subLI} function.
Using \texttt{caps-1} can the attacker can cause
at least 2 simultaneous tasks executions?
\begin{small}
\begin{verbatim}
    search iSysX(3,0,caps-1) =>+ sys:Sys such that 
        size(epswrb(sys:Sys,rb("sig","on"))) > 1 .
\end{verbatim}
\end{small}

\noindent
There are 132 solutions from 767 states visited.

Raising the bar we ask if using \texttt{caps-2} can the attacker cause 3 or more
tasks to execution simultaneously.
\begin{small}
\begin{verbatim}
    search  iSysX(3,0,caps-2) =>+ sys:Sys such that 
        size(epswrb(sys:Sys,rb("sig","on"))) > 2 .
\end{verbatim}
\end{small}
\noindent
There are 182 solutions among 721 states visited.

We can also ask if the attacker can cause
dev2 to start before dev1 finishes.
\begin{small}
\begin{verbatim}
    **** attack 3 dev2 starts before dev1 finishes
    search iSysX(3,0,caps-2) =>+ {c:Conf} such that 
        subLIL(c:Conf, rcvP("dev1","sig","on") ; 
                       rcvP("dev2","sig","on") ; 
                       rcvP("dev1","sig","off")) .
\end{verbatim}
\end{small}

\noindent
There are 342 solutions from 3598 states visited.
The log for the last solution is
\begin{small}
\begin{verbatim}
   log(rcvP("dev3", "sig", "on") ; rcvP("dev1", "sig", "on") ; 
    rcvP( "dev2", "sig", "on") ; rcvP("dev1", "sig", "off") ; 
    rcvP("dev2", "sig", "on") ; rcvP("dev2", "sig", "off") ; 
    rcvP("dev3", "sig", "on") ; rcvP( "dev3", "sig", "off")).
\end{verbatim}
\end{small}

\noindent
Finally, we ask if the attacker can cause 
dev2 to start before dev1 finishes and
dev3 to start before dev2 finishes.

\begin{small}
\begin{verbatim}
    search iSysX(3,0,caps-2) =>+ {c:Conf} such that 
        subLIL(c:Conf, rcvP("dev1","sig","on") ; 
                       rcvP("dev2","sig","on") ; 
                       rcvP("dev1","sig","off")) and 
        subLIL(c:Conf,rcvP("dev2","sig","on") ; 
                      rcvP("dev3","sig","on") ; 
                      rcvP("dev2","sig","off")) .
\end{verbatim}
\end{small}
\noindent
There are 62 solutions among 3594 states visited.
The log for the last solution follows.
\begin{small}
\begin{verbatim}
  log(rcvP("dev1", "sig", "on") ; rcvP("dev2", "sig", "on") ; 
  rcvP( "dev3", "sig", "on") ; rcvP("dev1", "sig", "off") ; 
  rcvP("dev2", "sig", "on") ; rcvP("dev2", "sig", "off") ; 
  rcvP("dev3", "sig", "on") ; rcvP( "dev3", "sig", "off"))
\end{verbatim}
\end{small}

\subsection{R3 Duplication of a process.}

IoT systems may be replicated, for example running
the same protocol in multiple smart rooms, or 
in a manufacturing setting multiple instances of
a manufacturing step.  

\begin{small}
\begin{verbatim}
    dev0        eve        dev1 dev2 dev3 dev4 dev5 dev6
    ----        ---        ---
     o -PUTNgo   @ ---------> o           
                 o -(cc dev4)-------------> o 
     o <-------  <-2.04--   o    
                                 <-2.04--   o
     o -PUTNgo-> @-------------->  o 
                 o -(cc dev5) ---------------->  o 
     o <-------  <-2.04---------  o
                                       <-2.04--  o
     o -PUTNgo   @ -------------------> o 
                 o -(cc dev6)----------------------> o 
     o    <-------  <-2.04---------    o
                                            <-2.04-- o
\end{verbatim}
\end{small}
\noindent 
Here the label \texttt{(cc dev4)} indicates the attacker
coping a message to a different device.

The function \texttt{iSysY} generates example scenarios
with one client (controller) and 2 sets of n servers.
The client will send a \texttt{PUT} message to set the
\texttt{"sig"} resource \texttt{"go"} to each server
in the first set of servers.  Each server initially
has resource map  \texttt{rb("sig","off")}. 
\begin{small}
\begin{verbatim}
    op iSysY : Nat  Caps -> Sys .
    eq iSysY(n,caps) = 
          {CnS(2 * n,mkGoAMs(n,nilAM),rb("sig","off"),5,0) 
           mkAtt(caps)  log(nilLI)} . 
\end{verbatim}
\end{small}

We consider scenarios with $n=2$ and $n=3$ and ask
whether the attacker can drive a the second set
of devices, either concurrently or following the
first.  
The attacker capabilities for the two scenario sizes
are defined as follows, parameterized by the
amount to delay message copies.  
\begin{small}
\begin{verbatim}
    ops caps2-2 caps3-3 : Nat -> Caps .
    eq caps2-2(d) =
      mc("dev1","dev0",act("dev3","dev0",d))
      mc("dev2","dev0",act("dev4","dev0",d)) .
    eq caps3-3(d) =
      mc("dev1","dev0",act("dev4","dev0",d))
      mc("dev2","dev0",act("dev5","dev0",d)) 
      mc("dev3","dev0",act("dev6","dev0",d)) .
\end{verbatim}
\end{small}
  
The four search commands, results (with witness log)
are as follows.

\begin{small}
\begin{verbatim}
     *** search for duplicate 2 server process running concurrently   
    search iSysY(2,caps2-2(0)) =>! {c:Conf} such that 
      subLIL(c:Conf, rcvP("dev3","sig","") ; rcvP("dev4","sig","")) and 
       subLIL(c:Conf, rcvP("dev3","sig","") ; rcvP("dev2","sig","")) .
\end{verbatim}
\end{small}

\noindent
There are 16 solutions among 534 states visited.
\begin{small}
\begin{verbatim}
    log(rcvP("dev3", "sig", "go") ; rcvP("dev1", "sig", "go") ; 
        rcvP( "dev4", "sig", "go") ; rcvP("dev2", "sig", "go"))

\end{verbatim}
\end{small}

\begin{small}
\begin{verbatim}
    *** search for duplicate 2 server process the second after the first
    search iSysY(2,caps2-2(15)) =>! {c:Conf} such that 
      subLIL(c:Conf,rcvP("dev3","sig","") ;  rcvP("dev4","sig","")) 
      and subLIL(c:Conf,rcvP("dev2","sig","") ; rcvP("dev3","sig","")) .
\end{verbatim}
\end{small}
\noindent
There are 4 solutions among 179 states visited.

\begin{small}
\begin{verbatim}
    log(rcvP("dev1", "sig", "go") ; rcvP("dev2", "sig", "go") ;
        rcvP( "dev3", "sig", "go") ; rcvP("dev4", "sig", "go"))
\end{verbatim}
\end{small}

\begin{small}
\begin{verbatim}
     **** search for duplicate 3 server process the second after the first
    search iSysY(3,caps3-3(0)) =>! {c:Conf} such that 
      subLIL(c:Conf, rcvP("dev4","sig","") ; 
        rcvP("dev5","sig","") ; rcvP("dev6","sig","")) and 
      subLIL(c:Conf, rcvP("dev1","sig","") ; 
        rcvP("dev4","sig","") ; rcvP("dev2","sig","")) and 
      subLIL(c:Conf, rcvP("dev2","sig","") ; 
        rcvP("dev5","sig","") ; rcvP("dev3","sig","") ; 
        rcvP("dev6","sig","")) .
\end{verbatim}
\end{small}

\noindent
There are 8 solutions among 2845 states visited.

\begin{small}
\begin{verbatim}
    log(rcvP("dev1", "sig", "go") ; rcvP("dev4", "sig", "go") ; 
        rcvP( "dev2", "sig", "go") ; rcvP("dev5", "sig", "go") ; 
        rcvP("dev3", "sig", "go") ; rcvP("dev6", "sig", "go"))
\end{verbatim}
\end{small}

\begin{small}
\begin{verbatim}
    **** search for duplicate 3 server process the second after the first
    search iSysY(3,caps3-3(15)) =>! {c:Conf} such that 
        subLIL(c:Conf, rcvP("dev4","sig","") ; 
            rcvP("dev5","sig","") ; rcvP("dev6","sig","")) and 
        subLIL(c:Conf, rcvP("dev3","sig","") ; rcvP("dev4","sig","")) . 
\end{verbatim}
\end{small}

\noindent
There are 8 solutions among 683 states visited.
The log of the last solution is the following.
\begin{small}
\begin{verbatim}
   log(rcvP("dev1", "sig", "go") ; rcvP("dev2", "sig", "go") ; 
       rcvP( "dev3", "sig", "go") ; rcvP("dev4", "sig", "go") ; 
       rcvP("dev5", "sig", "go") ; rcvP("dev6", "sig", "go"))
\end{verbatim}
\end{small}

\subsection{R4. Spoofing (Redirect GET request/response)}

Although a reactive attacker can not redirect a message in
transit, as the active attacker can (see Section
\ref{subsec:diversion}),  it can make a copy and direct
it to a different server. The reactive attacker also has to
make a copy of the response from that server to appear to
come from the original server. Thus the client will receive
3 responses. They can arrive in any order. The client will
ignore the one from the alternate server, and its a race
between the two messages apparently from the intended
server.

\begin{small}
\begin{verbatim}
        dev0      eve       dev1     dev2
        ----      ---      unlock    lock
      o -GETN(1)-->@------->  o   
      |            o copy(cc dev2)  -> o
      o   <---unlock---      o 
      o            @    <---lock---    o 
      o <- (cc from dev1)
\end{verbatim}
\end{small}
\noindent
The function \texttt{iSySZ(mqd:Nat,w4b:Nat)} generates
instances of the redirection attack scenario. It is
parameterized by the usual congestion control arguments.

\begin{small}
\begin{verbatim}
    op iSySZ : Nat  Nat -> Sys .
    eq iSySZ(mqd:Nat,w4b:Nat) =
         {tCS3C(0,0,mkGetN("getN0","dev1","door"),
                   mtR,mqd:Nat,w4b:Nat,
              1,1,nilAM,rb("door","unlock"),5,0,
              2,2,nilAM,rb("door","lock"),5,0,
              mc("dev1","dev0",false,act("dev2","",0))
              mc("dev0","dev2",false,act("","dev1",0)) )
              log(nilLI) }  . 
\end{verbatim}
\end{small}

To demonstrate attacks we search for final states in which
the client has received a \texttt{"lock"} response to 
the \texttt{GET} request while \texttt{"dev1"} has the \texttt{"door"}
resource bound to \texttt{"unlock"}.
\begin{small}
\begin{verbatim}
    search iSySZ(5,0) =>! {c:Conf} such that 
      hasGetRsp(c:Conf,"dev0","dev1","getN0","lock") and 
      checkRsrc(c:Conf,"dev1","door","unlock") .
\end{verbatim}
\end{small}

\noindent
There are 4 solutions among 109 states visited.

\section{Dialect functions}
\label{sec:dialect-fns}

Dialects using a moving target strategy must
synchronize on the choice of lingo. In the case of
reliable transport this is often done using time
synchronization. \footnote{Even with reliable transport,
attackers may drop or reorder messages, which can complicate
lingo synchronization.}

For protocols, like CoAP, running on unreliable
transport, the notion of current lingo (or lingo
parameters) or lingo (parameters) being no longer
valid, seems problematic. In principle there is no
bound on the delay of a message, and missing messages
or out of order messages which are common, make
guaranteeing that a receiver can determine the correct
lingo to decode a dialected message a challenge.
Failure would mean dropping a message that should have
been delivered.

We take inspiration from security protocol designs that
use message counters (one counter for each communication
pair and direction) for replay prevention and/or
detection.  These counters appear in the clear in messages making synchronization simpler.
We have to be careful that these counters don't leak
enough information to be useful to an attacker.

For the CoAP dialect specification we propose a scheme of three functions:
\begin{itemize}
  \item $g : \mathtt{String}  \times \mathtt{Nat} \times  \mathtt{Nat} \lra \mathtt{String}$ -- a generator of (pseudo) randomness.  The first argument is a seed, the second
specifies the output length, and the third argument is
the index into the sequence generated by $g$.
Thus $g(\mathtt{seed},\mathit{k},\mathit{ix})$  is the \emph{ix}-th (pseudo) random string (of length \emph{k}) initialized \texttt{seed}.

  \item  $f_1 : \mathtt{String} \times  \mathtt{Content}  \times  \mathtt{Nat} \lra \texttt{DCBits}$ 
    -- the obfuscator/encoder. The first argument is
  the source of randomness, the second argument is a message content and the third an index (a natural number).
  \item
   $f_2 : \mathtt{String} \times (\texttt{DCBits} \times \mathtt{Nat}) \leadsto  \texttt{ContentNat} $ -- the de-obfuscator/decoder. The first argument is again the source of
randomness  and the second argument is a pair consisting
of the encoded message content and the lingo index.
\end{itemize}

\noindent
We can think of a dialect given by the above three functions as having a single parameterized lingo,
or as having a family of lingos indexed by the
third argument to $f1$.

There are two important properties of these functions.
First, $f_2$ recovers the original content encoded by
$f_1$, using random generator $g$
$$ f_2(grand,\lb f_1(grand, \lb \mathtt{content},ix\rb\rb),ix)
 = \lb \mathtt{content},ix\rb.$$
where \emph{grand} is generated independently by each endpoint
in a communicating pair as $g(\mathtt{seed},k,ix)$ where
\texttt{seed} is a shared secret.
Second, if the encoded content or index are modified,
decoding will fail.
Letting 
$f_1(\mathit{grand},\lb \mathit{content},\mathit{ix} \rb) = \mathit{dcbits}$, $\mathit{dcbits1} \neq \mathit{dcbits}$, and $\mathtt{ix1} \neq \mathtt{ix}$
then
$$ f_2(grand,\lb \mathtt{dcbits1},ix\rb)
 = \lb \mathtt{content1},ix\rb.$$
and 
$$ f_2(grand,\lb \mathtt{dcbits},ix1\rb)
 ==\lb \mathtt{content1},ix1\rb.$$
with $\mathtt{content} \neq \mathtt{content1}$ 
and $\mathtt{content1}$  is recognized as ill-formed.
Alternately, $f_2$ could directly return an indication
of failure, but only if the input is not a correctly 
transformed content.

Starting the enumeration of random strings with
a secret seed is important.  
An attacker may well know the three functions.
If he can guess the message content, then
he can compute $f_2(g(ix,k),{\mathit{dcbits},ix}))$.
We claim that with a secret seed  exposing \emph{ix} doesn't really expose much useful information
and avoids the need for the receiving wrapper
to guessing the \emph{ix}.

An alternative for generating the sequence of random
strings is $g^{ix(}seed,k)$ $ seed_0 = g(seed,k,0)$ and
$seed_{j+1} = g(seed_j,k,0)$. If the attacker learns the
value of $g(seed_j,k,0)$ (for example if $f_1$ uses xor
and the attacker can guess the content of some message)
then the attacker can compute $g(seed_{j+n},k,0)$.

What can the attacker learn given the method of
generating a sequence of pseudo random strings/bits? 
Suppose the attacker can derive
$g(\mathit{seed},k,ix)$, $g(\mathit{seed},k,ix+1)$,
\dots, $g(\mathit{seed},k,ix+n)$ from transmitted
messages.  It should be HARD to predict the next
value $g(\mathit{seed},k,ix+1)$ from
$g(\mathit{seed},k,ix)$.

If the attacker knows $g$ but not the secret string  $\mathit{seed}$ he can run $g$ on sample seeds with the
known indexes.  
If seed is 128 bits  with non-0 in the top 64  bits
it can take a long time to check the seed.
If he can do $10^{12}$ samples per day it 
could take $10^6$ days to find the seed.

\omitthis{
      
2^10 1024, 2^20 1mil, 2^80 = 10^24 ~ 10^18 seconds

24 x 3600 = 86400 sec/day  > 10^12 days at 10^6 numbers per day

}

\omitthis{
applyDialect > dc(f1(g(rand,rsize,ix),{c,ix}), ix) 
  
decodeDialect > f2(g(rand,rsize,ix),dc(dcbits,ix))
      
grand == g(rand,rsize,ix)      
f2(grand,dc(f1(grand,{c,ix}),ix)) = {c,ix}   

As an example $f1$ could xor the content,ix pair with
the random string (suppressing the conversion to string of content,ix)

  f1(grand,{c,ix}) = grand xor {c,ix}
and 
 f2(grand,dcbits) = grand xor dcbits
 f2(grand,grand xor {c,ix}) = grand xor grand xor {c,ix}
                 = {c,ix}

}

\section{Specification of a CoAP dialect wrapper}
\label{sec:coap-dialect-spec}

To specify a dialect wrapper for CoAP messaging we need to
specify the data type of dialected messages, the three
functions described in Section \ref{sec:dialect-fns}, the
structure of wrapped CoAP agents (subsection
\ref{subsec:dconf}), and the rules for sending and
receiving messages by the wrapped agents (Subsection
\ref{subsec:drules}). Towards dialects as a theory
transformation, in subsection
\ref{subsec:dialect-transform} we define operations
\texttt{D} and \texttt{UD}. \texttt{D} takes an inital
system configuration and produces the corresponding
dialected configuration. \texttt{UD} extracts the
underlying CoAP system configuration from a dialected
configuration. Using these functions, rewrite and search
commands for a CoAP scenario can be automatically
transformed to rewrite and search commands for the
corresponding dialected scenario (Subsection
\ref{subsec:dscenarios}).

\subsection{Dialect messages and configurations}
\label{subsec:dconf}

A dialected message has a target and a source (sort
\texttt{String} as for normal messages) and a dialect
encoded content (sort \texttt{DContent}). The message
constructor is overloaded to construct dialected
messages.

\begin{small}
\begin{verbatim}
    op m : String String DContent -> Msg .
\end{verbatim}
\end{small}

\noindent
A term of sort \texttt{DContent} is a pair constructed by the operator \texttt{dc}

\begin{small}
\begin{verbatim}
    op dc : DCBits Nat -> DContent [ctor] .
\end{verbatim}
\end{small}

\noindent 
where  the sort \texttt{DCBits} is an opaque sort whose
structure is not further specified.
We also define the sort \texttt{ContentNat}  to be
pairs constructed from ordinary content and a natural number.  

\begin{small}
\begin{verbatim}
    op `{_`,_`} : Content Nat -> ContentNat [ctor] .
\end{verbatim}
\end{small}

\noindent
The three dialect functions of Section \ref{sec:dialect-fns} are specified as follows:

\begin{small}
\begin{verbatim}
    op g : String Nat Nat -> String .  
    op f1 : String ContentNat -> DCBits .
    op f2 : String DContent  ~> ContentNat .
    eq f2(g(rand,k,ix),dc(f1(g(rand,k,ix),{content,ix}),ix))
          = {content,ix} .
\end{verbatim}
\end{small}

\noindent
The partial arrow  $\verb|~>|$ used in the declaration
of \texttt{f2} means that the result of applying \texttt{f2} to a modified encoding will be of kind
\texttt{ContentNat}, but not of sort \texttt{ContentNat}.
This allows to detect modifications in the dialected
content.

As described in \cite{GEMM-2023esorics} we represent
a dialect wrapper by a meta-agent
that has the same identifier as the wrapped agent,
a \texttt{conf} attribute that encapsulates the
base agent and its network stub, and additional
attributes for managing dialect transformations.

\begin{small}
\begin{verbatim}
    *** dialected endpoint
    [eid | conf([eid | devattrs] net), ddevattrs]
\end{verbatim}
\end{small}

\noindent
We refer to the encapsulated network as the internal network and the network at the level of meta entities
as the external network.

The additional attributes for a dialect  meta endpoint  include:
\begin{itemize}
\item
\texttt{used(umap)}  --- a map from endpoint ids to sets of
  lingo indices already received from the identified endpoint.
\item
\texttt{toRcv(dmsgs)}  --- holds the result of decoding-- the original message or the empty (delayed) message set, if decoding fails
\item
\texttt{seedTo(eid,str)} -- the shared secret seed for sending to \texttt{eid}
\item
\texttt{seedFr(eid,str)} -- the shared secret seed for receiving from \texttt{eid}
\item
\texttt{ ixCtr(eid,nat)} --- the counter for generating
   lingo indices for messages to endpoint named \texttt{eid}
\item
\texttt{randSize(k)} -- the size of generated (pseudo) random strings
\end{itemize}

\subsection{Dialect Rules}\label{subsec:drules}

There are two rules to specify dialect behavior.
The rule labeled \texttt{ddevsend} handles applying
a dialect lingo to a message sent by an endpoint
and putting it in the global network.  It selects
a message in the internal network sent by the wrapped
entity, calls \texttt{applyDialect}, and puts the
resulting dialected message in the external network.

\begin{small}
\begin{verbatim}  
    crl[ddevsend]:
      [epid | conf([epid  | devatts ] net(dmsgs0,dmsgs1)) dattrs ]
      net(ddmsgs0,ddmsgs1)
    =>
      [epid | conf([epid  | devatts ] net(mtDM,dmsgs) ) dattrs1 ] 
      net(ddmsgs0 ddmsg,ddmsgs1)
    if dmsgs (msg @ n) := dmsgs0 dmsgs1
    /\ getSrc(msg) = epid
    /\ toSend(ddmsg) dattrs1 := applyDialect(dattrs,msg @ n) .
\end{verbatim}  
\end{small}

\noindent 
The function \texttt{applyDialect} looks up 
seed, random size, and current index values in
the attribute set \texttt{dattrs}.  It computes
the generated string \texttt{grand} using the dialect
function \texttt{g}, and then applies the encoding function
\texttt{f1} to this string, the message content and the
index and constructs the dialected message \texttt{msgd}
to be sent on the external network.  It also increments
the index counter for the message target 
\begin{small}
\begin{verbatim}  
    ceq applyDialect(dattrs,msg @ n) =
         incIxCtr(dattrs,dst,1) toSend(msgd @ n)
    if m(dst,src,content) := msg
    /\ rand := getSeedTo(dattrs,dst)
    /\ rsize := getRandSize(dattrs)
    /\ ix := getIxCtr(dattrs,dst)
    /\ grand := g(rand,rsize,ix)
    /\ dcbits := f1(grand,{content,ix})
    /\ msgd := m(dst,src,dc(dcbits,ix)) .
\end{verbatim}  
\end{small}

The rule labeled \texttt{ddevrcv} handles receipt of a
dialected message. It selects a message with 0 delay and
target the wrapped entity an calls \texttt{decodeDialect}.
If decoding produced a message, the rule calls the \texttt{rcvMsg} directly since the result may include
a logging attribute that needs the full configuration 
to process, which the base CoAP receive rule would not
have.

\begin{small}
\begin{verbatim}  
    crl[ddevrcv]:
    {[epid | conf([epid | devatts ] net(dmsgs0,dmsgs1))
             dattrs ] 
      net(ddmsgs0,ddmsgs1 msgd @ 0) conf}
    =>
    {[epid | conf([epid | devatts2 ]
                  net(dmsgs0 dmsgs2,dmsgs1)) 
            dattrs1 ]
      net(ddmsgs0,ddmsgs1)  conf1}
    if getTgt(msgd) == epid
    /\ toRcv(dmsgs) dattrs1 := decodeDialect(dattrs,msgd)

    /\ toSend(dmsgs2) devatts1 := 
        (if dmsgs :: DMsg
         then rcvMsg(epid, devatts, DMsg2Msg(dmsgs))
         else toSend(mtDM) devatts
         fi)
    /\ conf1 := doLog(conf,devatts1)
    /\ devatts2 := clearToLog(devatts1) } .
\end{verbatim}  
\end{small}
 
In addition to the dialect send and receive rules, the auxiliary functions \texttt{mte} and \texttt{passtime} used by the rule labeled \texttt{tick}
are extended to account for the nesting of configurations. 
If there are messages in the local network then
they need to be processed before time can pass.
These messages will be
messages from \texttt{epid}
the should be dialected and placed in the global network.
\begin{small}
\begin{verbatim}  
  eq mte(atts conf([epid | devattrs] net(dmsgs0,dmsgs1)),ni)
   = (if dmsgs0 dmsgs1 =/= mtDM
      then 0
      else mte(atts, mte(devattrs,ni))
      fi) .
\end{verbatim}  
\end{small}

\noindent
Similarly, the effect of passing time is simply propagated
to the encapsulated agent, ignoring the local network
element.
\begin{small}
\begin{verbatim}  
  eq passTime(conf([epid | devattrs] net(dmsgs0,dmsgs1)),nz)
   = 
  conf([epid | passTime(devattrs,nz,mtA)]  net(dmsgs0,dmsgs1)) .
\end{verbatim}  
\end{small}

\subsection{CoAP dialect transform}
\label{subsec:dialect-transform}

As a step towards realizing the dialects as
theory transformations, we define a function
\texttt{D} that maps a term representing an initial system configuration to its dialected form. \texttt{D} calls the auxilliary function
\texttt{DX} with the system term to transform
and the set of ids of non-attacker endpoints
computed by \texttt{getIds}.  The function
\texttt{DX} uses \texttt{DAs} to transform
the individual agents of the configuration \texttt{aconf},
and adds the initial network element.   
The argument \texttt{"xxxx"} of \texttt{DAs} 
is used to construct abstract representations
of the pairwise shared secrets.   The last
argument of \texttt{DAs} is a configuration
accumulator.   We also define a partial inverse, \texttt{UD}, to the dialect transform \texttt{D}.
The partial inverse is only meaningful when applied
to systems with no dialected messages pending.  
It is generally applied only to inital or terminal
states, which have empty network elements.

\begin{small}
\begin{verbatim}  
    op D : Sys -> Sys .
    op DX : Sys Strings -> Sys .

    eq D(sys) = DX(sys,getIds(sys)) .
    ceq DX({aconf net(mtDM,mtDM)},eids) =
       { daconf net(mtDM,mtDM) }
    if daconf := DAs(aconf,eids,"xxxx",mt) .
\end{verbatim}  
\end{small}
\noindent
The function \texttt{DAs} applies the local
agent transform \texttt{DA} to each agent with
identifier in the set \texttt{eids} and leaves
any other agent (i.e. an attacker) unchanged.
The function \texttt{DA} does the actual
wrapping of a protocol role/endpoint producing a
meta-agent with the same identifier and a
\texttt{conf} attribute containing the original
agent with its local network stubb. 
The attributes of the meta-agent wrapper
consist of the attributes, 
\texttt{sharedDialectAttrs} that are the same
for all wrappers, a set of seed attributes, a pair for each possible communication partner (network link) and a set of index attributes, one for each possible communication partner.
 
\begin{small}
\begin{verbatim} 
    op DA : Agent Strings String -> Agent .
    ceq DA([eid | devatts],eids,str) =
        [eid | conf([eid | devatts] 
                    net(mtDM,mtDM)) ddevatts]
    if ddevatts := sharedDialectAttrs
                   mkSeeds(eid,eids,str,mtA)
                   mkIxCtrs(eids,mtA) . 
\end{verbatim}  
\end{small}

\noindent
To accomodate general scenarios, we assume any
pair of non-attacker agents might communicate.
When called from \texttt{DX} the \texttt{str}
will be \texttt{"xxxx"} and result of \texttt{mkSeeds} will be a set of attribute pairs
\texttt{seedFr(eid,"xxxx" + eid1 + eid)} and
\texttt{seedTo(eid,"xxxx" + eid + eid1)} for
each \texttt{eid} in \texttt{eids}.
 Similarly the function \texttt{mkIxCtrs}  
 produces an attribute \texttt{ixCtr(eid,0)}
 for  each \texttt{eid} in \texttt{eids}.

\omitthis{
red {tCS2C(0,0,mkPutN("putN","dev1","door","lock"), mtR,5,0,1,1,nilAM,rb("door","unlocked"), 2,0, drop)} .
}

As an example, consisder the initial system
\begin{small}
\begin{verbatim} 
initSys =
 { net(mtDM, mtDM)
  ["dev0" | w4Ack(mtDM) w4Rsp(mtM) ...
            rsrcs(mtR) ctr(0)
            sendReqs(...)  config(...) sndCtr(0)]
  ["dev1" | w4Ack(mtDM) w4Rsp(mtM) ...
            rsrcs(rb("door", "unlocked")) ctr(0) 
            sendReqs(nilAM) config(...) sndCtr(1)]
  ["eve" | kb(mtDM) caps(drop)]} 
\end{verbatim}  
\end{small}
\noindent
Its dialect transform, \texttt{D(initSys)} is

\begin{small}
\begin{verbatim} 
{ net(mtDM, mtDM)
  ["dev0" | conf(net(mtDM, mtDM)
          ["dev0" | w4Ack(mtDM) w4Rsp(mtM) ... 
                    rsrcs(mtR) ctr(0)
                    sendReqs(...) config(...) sndCtr(0)])  
    used(mtU) randSize(128)
        seedTo("dev1", "xxxxdev1dev0")
        seedFr("dev1", "xxxxdev0dev1") ixCtr("dev1", 0)]
  ["dev1" | conf(net(mtDM, mtDM)
            ["dev1" | w4Ack(mtDM) w4Rsp(mtM) ...
                     rsrcs(rb("door", "unlocked"))
                     ctr(0) sendReqs(nilAM) config(...) sndCtr(1)])
    used(mtU) randSize(128) 
        seedTo("dev0", "xxxxdev0dev1") 
        seedFr("dev0","xxxxdev1dev0") ixCtr("dev0", 0)]
  ["eve" | kb(mtDM) caps(drop)]} 
\end{verbatim}  
\end{small}

The function \texttt{UD} extracts CoAP endpoints from
\texttt{conf} attributes and 
passes other configuration elements unchanged

\begin{small}
\begin{verbatim} 
op UD : Sys -> Sys .
op UDX : Conf Conf -> Conf .
eq UD({conf}) = {UDX(conf,mt)} .
eq UDX([eid | conf([eid | devatts] conf1) ddevatts] conf0,
               uconf) =
      UDX(conf0, uconf [eid | devatts]) .
eq UDX(conf0, uconf) = uconf conf0 [owise] .

**** produce configurations by omitting the final system construction
op UDC : Conf -> Conf .
**** eq UDC(conf net(dmsgs0,dmsgs1)) = UDX(conf,mt) .
eq UDC(conf) = UDX(conf,mt) .
\end{verbatim}  
\end{small}

\begin{definition}[Initial dialected CoAP configuration]\label{defn:dsysI}
An initial dialected CoAP system, \texttt{dsysI}, is the
result of transforming an initial CoAP system 
\texttt{sysI}
$$\mathtt{dsysI} = \mathtt{D}(\mathtt{sysI}).$$
\end{definition}

\subsection{Dialected  CoAP Scenarios} \label{subsec:dscenarios}


As we prove in Appendix \ref{apx:sbsim-proofs} dialecting
mitigates attacks of a reactive attacker. To illustrate this
claim we can lift the scenarios for reactive attacks of
Section ref{sec:reactive-attacks} to dialected scenarios and
lift the search for attacks to the dialected situation. To
do this we apply the transform \texttt{D} to the initial
system term and apply the inverse (on configurations)
\texttt{UDC} to uses of the search target pattern
\texttt{c:Conf} in the search condition. For example in the
R1 attack (the attacker locks the door after the client has
unlocked it) example the search

\begin{small}
\begin{verbatim}
    search({raR1(5,0,10,false)}) =>! {c:Conf} such that 
        checkRsrc(c:Conf,"dev1","door","lock") .
\end{verbatim}
\end{small}
\noindent
when lifted to a dialected system becomes

\begin{small}
\begin{verbatim}
    search D({raR1(5,0,10,false)}) =>! {c:Conf} such that 
        checkRsrc(UDC(c:Conf),"dev1","door","lock") .
\end{verbatim}
\end{small}

\noindent
There are 2 solutions in 101 states visited in the
original search while no solution is found among
121 states visited in the search for the dialected
system.

As one more example recall the example of R2 type
attacks, where a client is invoking server proceses
one at a time and the attacker manages to 
start some servers early, thus having more than
one server active concurrently.
The original search is
\begin{small}
\begin{verbatim}
    search iSysX(3,0,caps-1) =>+ sys:Sys such that 
          size(epswrb(sys:Sys,rb("sig","on"))) > 1 .
\end{verbatim}
\end{small}
\noindent
and its dialected form is
\begin{small}
\begin{verbatim}
    search D(iSysX(3,0,caps-1)) =>+ sys:Sys such that 
        size(epswrb(UD(sys:Sys),rb("sig","on"))) > 1 .
\end{verbatim}
\end{small}

\noindent
There are 132 solutions from 767 states visited in
the undialected scenario while in the dialected
case no solution is found in 553 visited.

The remaining cases are similar.


\section{Dialect properties}
\label{sec:dialect-properties}

Here we consider the relation of traces of an initial system 
$\mathtt{sysI} = \{ \mathtt{confI} \}$ running the CoAP messaging protocol specification to traces of the 
corresponding dialected system \texttt{D(sysI)} 
in the presence of the different attack capabilities,
including \texttt{mtC} (none).
Although CoAP is designed to run over UDP (unreliable transport) we also consider the case of reliable transport.
We note that the specified dialect wrapper
 (i) drops messages that fail decoding, and
 (ii) has  a notion of used lingo parameter such that
    messages with previously used lingo parameters are dropped.
We consider instances of the following relation:
  $$\Dia(\sysI(\att(c))) ~\relQ~ \sysI(\att(c))$$
where $\sysI(\att(c))$ denotes an initial system
configuration with endpoints in their inital state,
an empty network, and an attacker with capabilities $c$.   The question
is what is $\relQ$.  One candidate is
stuttering bisimulation ($\sbsim$).  
As a second candidate we say that a dialect $\Dia$ is \emph{strongly attack reducing} for attack messages \texttt{M} iff for
any trace of $\sysI(\att(c))$ if an attacker generated message
\texttt{M} is first delivered at step $j$ then there
is a corresponding trace of 
$\Dia(\sysI(\att(c)))$ with
\texttt{M} is delivered at the corresponding step $j'$
and is dropped.  In some more detail we define
$$\Dia(\sysI(\att(c)))\sbsimu[rcv[M]]~\sysI(\att(c))$$
to mean any trace from the lhs without the an attacker event 
has a corresponding trace from the rhs, and dually. If $\tau$ is a
trace of either side with an event $\rcv[M]$ the
first occurrence of receive of  an attacker
produced message, then the lhs trace drops the offending message and any corresponding rhs trace processes the message and possibly diverges from the lhs.
In the cases considered this relation becomes
$$\Dia(\sysI(\att(c)))\sbsim \sysI(\att(\drop)).$$

\paragraph{Dialect basic correctness.}

In the absence of attacks in a reliable or unreliable
network the dialect wrapper simply encodes messages sent by
an endpoint, and decodes them before passing them to the
receiver endpoint.  Thus a dialected CoAP system is
stuttering bisimilar to the original system.
$$\Dia(\sysI(\att(\mtC))) \sbsim \sysI(\att(\mtC))$$
For proof see Appendix \ref{apx:sbsim-proofs} Theorem \ref{theorem:bismmtC}.

\paragraph{Dialecting over unreliable transport.}

Here we consider the question: running over an
unreliable transport, how do $\Dia(\sysI)$ in the
presence of a given attack capability, $\mathtt{cap}$, and $\sysI(\att(c))$ relate?  
  $$\Dia(\sysI(\att(c)))~\relQ~\sysI(\att(mtC))$$
The following is an informal analysis giving
the intuitions.  For proof see Appendix \ref{apx:sbsim-proofs} Theorem \ref{theorem:sbsimC}.

\begin{itemize}
\item[(i)]  $c = \drop$ or $c == \delay(n)$.
In this case $\relQ$ is stuttering bisimilar since the network can drop or delay messages as well! What the attacker can do is significantly change the probability profile of drops/delays.\footnote{This suggests a stronger relation than bisimilarity might be useful}
$$\Dia(\sysI(\att(c)))\sbsim \sysI(\att(mtC))$$
  
\item[(ii)] $c = \divert$ (edit source, or destination or both).
Assuming each communicating pair in the dialected 
system has a unique shared secret, 
$\Dia(\sysI(\att(c)))$ will 
drop the diverted message, thus making it look like a 
network drop.  Without dialecting the new receiver might handle the diverted message, 
possibly causing problems, for example the temperature example (reading room vs oven temperature).   
$$\Dia(\sysI(\att(\divert)))
 \sbsim \sysI(\att(mtC))$$
Note that if all three endpoints involved share the same
secret (the initial seed) then the dialect layer will
not detect the redirection.

\item[(iii)] $c = \replay(n)$.
There are two cases: (a) the original message wasn't delivered---the replay appears as a delay or resend and won't be dropped by the wrapper and will correspond to
a trace on the rhs where the original message was delivered; 
(b) the original message was delivered---the replay is dropped.\footnote{Note that if the lingo policy is to allow a parameter to be used 2 or more times before rejecting, then the second use of a parameter may be rejected after a replay causing dialecting to be complicit in the attack. Thus we chose a single-use policy.}  
  $$\Dia(\sysI(\att(\replay(n)))) \sbsim \sysI(\att(mtC))$$
        
\item[(iv)]  $c = \edit$ a message component (not implemented yet).
There are two cases: 
\begin{itemize}
\item[(iv.a)] the message component is dialected, assuming the attacker  can't break the dialecting, undialecting will fail  and the message will be dropped.    
$$\Dia(\sysI(\att(\edit(a)))) \sbsim \sysI\sysI(\att(mtC))$$
\item[(iv.b)] the message component is not dialected, then
dialecting doesn't change anything
 $$\Dia(\sysI(\att(\edit(b)))) \sbsim \sysI(\att(\edit(b)))$$
\end{itemize}
\item[(v)]  $c = \create$  (message creation not implemented yet).  Assuming the attacker can't break the dialecting, undialecting will fail and the message will be dropped     
 $$\Dia(\sysI(\att(\edit(\create)))) \sbsim \sysI(\att(mtC))$$
\end{itemize}

\paragraph{Dialecting over reliable transport.}

Now we consider the question: In the presence of given
attack capability, $c$, running over a \emph{reliable}
transport, how do $\Dia(\sysI(\att(c)))$ and
$\sysI(\att(c))$ relate?
 $$\Dia(\sysI(\att(c)))~\relQ~\sysI(\att(c))$$

\begin{itemize}
\item[(i)]  $c = \drop$ or $c == \delay(n)$.
Dialecting doesn't affect drop or delay attacks.
 $$\Dia(\sysI(\att(\drop,\delay(n))))\sbsim \sysI(\att(\drop,\delay(n)))$$

\item[(ii)] $c = \divert$ (edit source, or destination or both of message \texttt{M}.   
The dialect wrapper will drop a diverted message,
as it is coded with the wrong secret. The
base CoAP receiver may process the message.
$$\Dia(\sysI(\att(\divert))) ~\sbsimu[rcv(M)]~ \sysI(\att(\divert))$$
 
\item[(iii)] $c = \replay(n)$ of \texttt{M}. Note
that with reliable transport, the original message will
have been delivered. The dialect wrapper will drop the
message, as its lingo (parameter) is already used.
 $$\Dia(\sysI(\att(\replay(n)))) \sbsim \sysI(\att(\mtC))$$

\begin{itemize}
  \item[(iii.a)]  $n$ is small enough that $M$ is delivered within
the original message's (identifier) lifetime , the CoAP receiver may repeat the original response, but will
not process any message requested actions. 

 $$\Dia(\sysI(\att(\replay(na)))) \sbsim \sysI(\att(\replay(na)))$$

\item[(iii.b)] 
$M$ is delivered after the original message's lifetime expires. In this case the CoAP receiver will process the replay.
 $$\Dia(\sysI(\att(\replay(nb)))) \sbsimu[\rcv(M)]~ \sysI(\att(\replay(nb)))$$
\end{itemize}
  
\item[(iv)] $c = \edit$ a component if message \texttt{M} (not implemented yet).  event = \texttt{rcv(M)}, 
There are two cases: 
\begin{itemize}
\item[(iv.a)] the edited message component is dialected.  Assuming the attacker
   can't break the dialecting, undialecting will fail 
   and the message will be dropped,  while the undialected system will
   attempt to process the message.

 $$\Dia(\sysI(\att(\edit(a)))) \sbsimu[rcv[M]]~ \sysI(\att(\edit(a)))$$

\item[(iv.b)] the message component edited is not dialected, then dialecting doesn't change anything.
 $$\Dia(\sysI(\att(\att(\edit(b))))) \sbsim \sysI(\att(\edit(b)))$$
\end{itemize}
\item[(v)]  $c = \create$ message \texttt{M}.
Assuming the attacker can't break the dialecting, the dialect wrapper will drop the message, but $\sysI(\att(\create))$ will attempt to process it.

   $$\Dia(\sysI(\att(\att(\create))))\sbsimu[rcv[M]]~ \sysI(\att(\create))$$
\end{itemize}

\section{Related Work}
\label{sec:related}

The work most closely related to the present work is
\cite{GEMM-2023esorics}, which we will refer to as
ESORICS23 in the following. Complimentary work on
dialects is discussed briefly in Section
\ref{sec:background}. We refer to ESORICS23 for a
more complete review of previous work on dialects.

ESORICS23 addresses three key aspects of dialects:
synchronization mechanisms; protocol and lingo
genericity; and attack model vs dialect choice. It
also defines taxonomies for key dimensions of a
dialect.

A general framework is proposed defining notions of
dialect and lingo as (parameterized) transformations
on protocol theories, where a protocol theory is a
generalized actor theory with a sort
\texttt{Configuration} and a constant
\texttt{initConf} of sort \texttt{Configuration},
representing the initial protocol system state. A
lingo defines a pair $f,g$ of parameterically
inverse functions for obfuscating and deobfuscating
protocol messages: $g(f(m,p),p) = m$. A lingo
transforms a protocol theory to a new protocol
theory by adding the function definitions. The
dialect transform operates on the protocol theory
together with the lingo transformed theories. In the
resulting theory, protocol objects are wrapped with
dialect meta objects with the same identifier and
rules are provided for sending/receiving messages at
the dialect level.

The CoAP Dialect presented in Section
\ref{sec:coap-dialect-spec} has one lingo, defined
in the module \texttt{COAP-DIALECT-LINGOS} which
extends the functional part of the CoAP
specification. The CoAP dialect is specified as the
result of the transform, by including the lingo and
protocol modules and adding rules for receiving and
sending messages. The initial configurations are
defined in protocol and dialect test modules, where
dialected configurations are obtained as in the
ESORICS23 case by wrapping protocol objects by meta
objects.

We note that the definition of protocol theory with
a fixed initial configuration doesn't quite work if
you want it to support a variety of scenarios. For
example MQTT and CoAP support networks of almost
arbitrary size and topology. One can define a
protocol theory for each scenario of interest, with
a dialect being generic with respect to the initial
configuration. But this looses the information that
all the scenarios use the same rules and data types
thus seems not quite satisfactory. Perhaps a notion
of scenario constructor parameterized by a network
topology and a means of specifying initial shared
secrets could be useful.  Also, configurations
are `open' in the sense that each actor or subset
of actors and messages is a configuration and
a configuration is a subconfiguration of infinitely
many larger configurations.  In most cases rewrite
rules apply to a subconfiguration, for example an
actor and a message, independent of what configuration it is part of.  However, some rules
need to know that they are seeing the full configuration.  This can be achieved by introducing
the notion of system as an encapsulated configuration.   The question is whether the notion
of system should be a first class concept for
protocols and dialect transformations.

\paragraph{Synchronization mechanisms.}

Synchronization can happen on different time scales:
when the set of lingos in use changes, and when
current lingo parameters change. The latter is
likely to be more frequent than the former, although
these are not necessarily distinct, since a lingo
parameter could be use to switch between lingo
functions say xor versus shuffle.

Two classes of synchronization mechanism are
identified in ESORICS23: \emph{periodic}--that uses
synchronized clocks and fixed time intervals for
changing to the next phase and its lingo set) or next
parameter values; and \emph{aperiodic}--using
information in the local history to determine
when/what to change.

In the CoAP dialect studied here, there is one
(abstract) lingo with a changing parameter that
selects the next pseudo random string in a
practically unpredictable (i.e. hard to predict)
sequence. The CoAP dialect uses aperiodic
synchronization for reasons discussed in Section
\ref{sec:dialect-fns}. The relevant history is a set
of counters one for each communication partner that
is incremented upon each use.

\paragraph{Genericity.} In the framework proposed in
ESORICS23, dialects are parametric in a list of
lingos and thus can be applied to any protocol
theory for which the lingos are applicable. Assuming
that message content is of type string enables a
wide choice of lingos independent of the meaning of
the strings. The CoAP lingo functions assume an
abstract message content type Content and thus the
CoAP dialect could be applied to any protocol
specification with this content type (and suitable
configuration sort). We note that the example used
in ESORICS23 suggests that the behavior of the
dialect depends on the message type and even uses
information from the CONNECT messages, while the
CoAP dialect does not look inside message content.
Presumably some dependence on message structure
could be made a parameter to the dialect in addition
to the lingo parameters.

\paragraph{Attack model.}

The ESORICS23 attack model starts with two models
proposed in \cite{ren-etal-2023seccom}: the
\emph{off-path }attacker that can send and receive
messages, but can not see messages exchanged between
its targets; and the \emph{on-path} attacker with
all the capabilities of the off-path attacker, plus
the ability to observe communications between honest
endpoints. ESORICS23 adds a third model:
\emph{active on-path} attacker with the capabilities
of the on-path attacker plus the ability to block,
redirect, and alter message content. A refinement is
also introduced--\emph{(m, n, t) attacker} that in
time t can observe \emph{m} messages and block
\emph{n} messages (and send/receive, block, ...).
The goal of a dialect is low probability that an
``attacked'' message will be accepted by intended
receiver.

In this paper we define two attack models for CoAP
messaging systems: \emph{active} and
\emph{reactive}. In either case, an attacker is
allowed a bounded set of actions from a set of
capabilities (depending on the model). The bound on
number of actions is an alternative to the bounds in
the \emph{(m,n,t)} attack model as a means to limit
the attacker strength. The active attacker is based
on vulnerabilities identified in
\cite{coap-attacks}. The capabilities are:
block/drop, delay, replay, or redirect messages. Our
\emph{active} attack model differs from the
\emph{active on-path} model in that attackers don't
receive or create messages. Like \cite{coap-attacks}
we assume communications may be encrypted, so
focuses on attacks that don't involve parsing
messages -- all the attacker learns from a message
is the target, source and a content blob.

The ESORICS23 paper is focused on understanding the
structure of dialects and attack models and gives
some examples and informal statements about the
security guarantees of lingos and dialects. In the
present paper we formally specify an attack model
and carry out reachability analyses concerning
specific undesired states of simple applications
using the CoAP protocol (inspired by
\cite{coap-attacks}). We also prove properties
relating executions of dialected and undialected
scenarios with and without attacks.

\paragraph{Taxonomies.}

ESORICS23 proposes a taxonomy of dialects with four
classes: Protocol Specific; Protocol Generic;
Protocol Modifying; Protocol Wrapping. The CoAP
dialect is essentially generic for single lingo case
(the lingo needs to be suitable for the protocol
message structure) and it is Wrapping.

A lingo taxonomy is proposed with classes
Shuffle/permutation (assumes messages are bit
strings); Split; Flow (via customized messages);
Crypto (HMAC, XOR, \dots); and Use of unused/random
field. The CoAP lingo is abstract, the functions
could be realized using (combinations of) shuffle,
xor, or crypto. We did not consider Splitting as one
of the aims of CoAP is to avoid packet
fragmentation. With an unreliable transport, the
dialect/lingo itself would have to do some extra
work to avoid additional dropped messages.

\paragraph{Other points.}

It appears that in the ESORICS23 framework all
honest actors (protocol actors) share the same
secrets. Examples only consider two actor systems,
so its not completely clear. In section
\ref{sec:dialect-properties} we show examples of
systems with 3 or more endpoints where its important
to have distinct pairwise shared secrets.

ESORICS23 also proposed notions of (generic) vertical
and horizontal composition. That is beyond the scope of
the present work, but certainly an interesting future
direction along with partial evaluation transformations
to eliminate extra overhead where resources are very
limited.

Complimentary to ESORICS23, the current paper
carried out the dialect theory transformation by
hand, formalizing only the wrapping (and unwrapping)
of system configurations, and focused effort on
formalizing attack models and carrying out formal
reachability analyses for a variety of scenarios.

\section{Conclusions}
\label{sec:concl}

Dialects such as ESORIC23 or the CoAP dialect can
protect against replay, editing message source and/or
target, and message forging. They can not protect
against \emph{interference} attacks such as dropping
or delaying. These dialects also don't protect
against \emph{piggy back attack}. In this attack
there are two attackers X,Y where X wants to send
signals to Y. When A sends M to B, X intercepts and
sends to Y (on port j). Y learns the signal
represented by j and forwards M to B.

This suggested the reactive attacker model for
dialects--a non-interference model. The attacker can
observe (copy), construct/edit, transmit (no
block/drop, delay) thus can replay, redirect, spoof.
We showed that dialects satisfying minimal conditions
provide security against non-interference attacks in
the sense that a dialected protocol in the presence
of such attacks is bisimilar to the protocol alone.

We conclude with so ideas for future work. It could
be interesting to consider dialects that emit
apparently random messages between themselves? These
could be versions of previously sent messages or made
up messages of suitable size. They should be similar
to ordinary traffic and not sent too often. Can this
obfuscate patterns the attacker relies on? Another
alternative is deploying intermediate nodes with a
dialect layer only. These would be used to obfuscate
where the message is actually going.   There is still 
work to be done to capture a generic attack model
for messaging protocols: what are the properties 
that applications care about?  how to represent them
generally?  what attack capabilities are needed to
succeed in violating critical properties?  is there
a notion of complexity of property that corresponds
to a minimal bound on successful attacker capability?

\paragraph{Acknowledgments.} 
Talcott thanks Dr. Catherine Meadows for many
insightful discussions that lead to some of the
attack model ideas and other results.
This material is based upon work supported by the Naval Research Laboratory under Contract No. N00173-23-C-2001. Any opinions, findings and conclusions or recommendations expressed in this material are those of the author(s) and do not necessarily reflect the views of the Naval Research Laboratory.


 \newpage
 \bibliographystyle{abbrv}
 \bibliography{bib}

\begin{thebibliography}{10}

\bibitem{clavel-etal-07maudebook}
M.~Clavel, F.~Dur\'an, S.~Eker, P.~Lincoln, N.~Mart\'i-Oliet, J.~Meseguer, and
  C.~Talcott.
\newblock {\em All About Maude: A High-Performance Logical Framework}, volume
  4350 of {\em LNCS}.
\newblock Springer, 2007.

\bibitem{GEMM-2023esorics}
D.~Gal{\'{a}}n, V.~Garc{\'{i}}a, S.~Escobar, C.~Meadows, and J.~Meseguer.
\newblock Protocol dialects as formal patterns.
\newblock In G.~Tsudik, M.~Conti, K.~Liang, and G.~Smaragdakis, editors, {\em
  Computer Security -- ESORICS 2023}, pages 42--61. Springer Nature
  Switzerland, 2024.

\bibitem{gogineni-etal-2022arXiv}
K.~Gogineni, Y.~Mei, G.~Venkataramani, and T.~Lan.
\newblock Can you speak my dialect?: A framework for server authentication
  using communication protocol dialects.
\newblock {\em CoRR}, abs/2202.00500, 2022.

\bibitem{coap-attacks}
J.~P. Mattsson, J.~Fornehed, G.~Selander, F.~Palombini, and C.~Amsuss.
\newblock Attacks on the constrained application protocol {(CoAP)}, 2023.
\newblock Internet Draft, Network working group, 13 December 2023.

\bibitem{gogineni-etal-2021site}
Y.~Mei, K.~G.~T. Lan, and G.~Venkataramani.
\newblock {MPD}: Moving target defense through communication protocol dialects.
\newblock In J.~Garcia-Alfaro, S.~Li, R.~Poovendran, H.~Debar, and M.~Yung,
  editors, {\em Security and Privacy in Communication Networks}, volume 398 of
  {\em Lecture Notes of the Institute for Computer Sciences, Social Informatics
  and Telecommunications Engineering}, page 100–119. Springer International
  Publishing, Cham, 2021.

\bibitem{meseguer-92unified-tcs}
J.~Meseguer.
\newblock Conditional {Rewriting Logic} as a unified model of concurrency.
\newblock {\em Theoretical Computer Science}, 96(1):73--155, 1992.

\bibitem{olveczky08tacas}
P.~C. {\"{O}}lveczky and J.~Meseguer.
\newblock The real-time maude tool.
\newblock In {\em {TACAS} 2008}, pages 332--336, 2008.

\bibitem{ren-etal-2023seccom}
T.~Ren, R.~Williams, S.~Ganguly, L.~D. Carli, and L.~Lu.
\newblock Breaking embedded software homogeneity with protocol mutations.
\newblock In {\em 18th EAI International Conference on Security and Privacy in
  Communication Networks ({SecureComm 2022})}, pages 770--790. Springer, 2023.

\bibitem{RFC7252}
Z.~Shelby, K.~Hartke, and C.~Bormann.
\newblock Rfc 7252: The constrained application protocol {(CoAP)}, June 2014.

\bibitem{sjoholmsierchio-2019nps}
M.~Sjoholmsierchio.
\newblock Software-defined networks: Protocol dialects.
\newblock Master's thesis, {Naval Postgraduate School, Monterey, California},
  2019.

\bibitem{sjoholmsierchio-etal-2021netsoft}
M.~Sjoholmsierchio, B.~Hale, D.~Lukaszewski, and G.~Xie.
\newblock Strengthening sdn security: Protocol dialecting and downgrade
  attacks.
\newblock In {\em 7th International Conference on Network Softwarization}, page
  321–329. IEEE, 2021.

\bibitem{maudeweb}
M.~Team.
\newblock \url{https://maude.csl.sri.com}, 2024.

\end{thebibliography}

\appendix
\section{Characterization of reachable CoAP system configurations}
\label{apx:exe-props}

As for real-time Maude specifications
\cite{olveczky08tacas} the rewrite rules for the
CoAP specification (and it dialected form) are
either instantaneous rules or the \texttt{tick} rule
that models passing of time. The minimum time to
elapse function, \texttt{mte(sys)}, determines when
time can pass. The following propositions give
essential properties of reachable CoAP and dialected
CoAP system configurations according to the value
\texttt{mte(sys)}.

\begin{proposition}[Mte based characterization of reachable CoAP configurations]\label{prop:mte-coap}
Let \texttt{sysI}  be an initial configuration 
and let  \texttt{sysC} be a CoAP system configuration
reachable from \texttt{sysI}. 
Then  \texttt{mte(sysC)}  is either \texttt{0},
\texttt{nz} a non-zero natural number, or the constant \texttt{infty} (representing infinity)
and one of the following holds. 

\begin{itemize}
\item[1.] \texttt{mte(sysC) = nz:NzNat}.
In this case the tick rule is enabled (and no other
rule is enabled (except possibly \texttt{net} which
commutes with and is independent of \texttt{tick}).
Further more
\begin{itemize}
\item  
 for all \texttt{msg @ d} in the network,  $\mathtt{d} > 0$ (ow \texttt{mte} is 0). 
\item  
for all \texttt{msg @ d} in some \texttt{w4Ack} attribute $\mathtt{d} > 0$  (ow \texttt{mte} is 0). 
\item  
if the \texttt{sendReq} attribute of some endpoint is not empty (\texttt{nilAM}) then either there are more than the \texttt{w4Ack} limit of requests
waiting for acknowledgement
(and hence rule \texttt{devsnd} is not
enabled), or the corresponding \texttt{sndCtr}
attribute is non-zero.
\end{itemize}
\item[2.] \texttt{mte(sysC) = 0} 
In this case some rule is enabled according to
one of the following cases.
\begin{itemize}
\item there is a message with zero delay in the net,
then the rule \texttt{rcv} is enabled (possibly preceeded by \texttt{net}) to move the message from
network input to network output;

\item there is a \texttt{w4Ack} attribute of some
endpoint with a zero delay message, then
the \texttt{ackTimeout} rule is enabled;

\item there is a \texttt{sendReqs} attribute of some
endpoint that is not empty and sending is enabled,
so the rule \texttt{devsnd} is enabled.
\end{itemize}
\item[3.] \texttt{mte(sysC) = infty}. 
In this case \texttt{sysC} is a terminal configuration, no
rules apply since the net is empty, all \texttt{w4Ack} attributes are empty, and all sendReqs are empty buy definition
of \texttt{mte}.
\end{itemize}
\end{proposition}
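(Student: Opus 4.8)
The plan is to read the proposition as a correspondence between the value of \texttt{mte(sysC)} and the set of rewrite rules enabled at \texttt{sysC}, with reachability from \texttt{sysI} entering only to secure a well-formedness invariant. So I would proceed in two stages: first establish the invariant, then carry out a three-way case analysis on \texttt{mte(sysC)} using the characterization of \texttt{mte} developed earlier in this appendix. For the invariant, I would show by induction on the length of a rewrite sequence \texttt{sysI} $\Longrightarrow^{*}$ \texttt{sysC} that every reachable \texttt{sysC} is a system term \verb|{conf}| with exactly one \texttt{net} element, that every message in the network or in a \texttt{w4Ack} (or \texttt{rspSntD}) attribute carries a natural-number delay, and --- the only nontrivial clause --- that the target of every in-transit message names an endpoint occurring in \texttt{conf}. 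The base case is Definition~\ref{defn:sysI}; for the inductive step one checks the five rules: \texttt{net} merely permutes the two sides of \texttt{net}; \texttt{tick} decrements delays via \texttt{passTime}; \texttt{devsend}, \texttt{rcv}, and \texttt{ackTimeout} introduce only messages whose targets are the original peers or, for generated responses, the source of an already-received request, hence again endpoints of \texttt{conf}; and the \texttt{attack} rule preserves the invariant provided the capabilities considered here (\texttt{CapsA*}, \texttt{CapsR*}) only reroute toward endpoints that are present. (Alternatively the invariant can be made automatic by adding an \texttt{owise} rule that drops unmatchable messages; I would use whichever convention is fixed in the appendix.)

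For the case analysis, unfolding the definition of \texttt{mte} shows that \texttt{mte(sysC)} is the minimum of the delays of messages currently in the network, the delays of messages in every \texttt{w4Ack} attribute, and a \texttt{0} contributed whenever some endpoint has a non-empty \texttt{sendReqs} with \texttt{noW4Ack} and \texttt{canSend} both true; and that it is \texttt{infty} exactly when all of these are absent. This already yields the trichotomy \texttt{0}, \texttt{NzNat}, or \texttt{infty}. I would then argue:
\begin{itemize}
\item \texttt{mte(sysC) = nz:NzNat}. The guard \texttt{nz := mte(\{conf\})} of \texttt{tick} is satisfied, so \texttt{tick} is enabled; and each instantaneous rule's match would force \texttt{mte(sysC) = 0} --- \texttt{rcv} needs a \verb|msg @ 0| on the output side of \texttt{net}, \texttt{ackTimeout} needs a \verb|msg @ 0| in some \texttt{w4Ack}, and \texttt{devsend} needs a non-empty \texttt{sendReqs} with \texttt{noW4Ack} and \texttt{canSend} --- so none of them is enabled. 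The \texttt{net} rule may still fire, but it changes neither any delay nor \texttt{mte}, hence commutes with \texttt{tick}; and the three displayed bullet conditions are precisely the negations of the three \texttt{mte = 0} triggers, so they hold.
\item \texttt{mte(sysC) = 0}. Some trigger is present. A delay-\texttt{0} network message on the output side enables \texttt{rcv} (its target is an endpoint of \texttt{conf} by the invariant); on the input side it enables \texttt{net}, one application of which moves it to the output side with delay unchanged and then enables \texttt{rcv} --- this is the ``possibly preceded by \texttt{net}'' clause. A delay-\texttt{0} message in some \texttt{w4Ack} enables \texttt{ackTimeout}; a sendable non-empty \texttt{sendReqs} enables \texttt{devsend}.
\item \texttt{mte(sysC) = infty}. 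The network is empty, every \texttt{w4Ack} is empty, and every \texttt{sendReqs} is empty, so there is no match for \texttt{rcv}, \texttt{ackTimeout}, \texttt{devsend}, or \texttt{net}, and the guard of \texttt{tick} fails; hence \texttt{sysC} has no successor and is terminal.
\end{itemize}

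Most of the effort is bookkeeping: faithfully unfolding the recursive definition of \texttt{mte} over the nested endpoint attributes and matching it, term by term, against the enabledness conditions of the rules, with special care for the congestion-control predicates \texttt{noW4Ack}/\texttt{canSend} and for the input-side/output-side distinction in the network element. The genuinely delicate point --- where I expect the main obstacle --- is the invariant that every in-transit message has a target naming an existing endpoint: without it, a delay-\texttt{0} message on the output side would not guarantee that \texttt{rcv} is enabled, so the \texttt{mte = 0} case would fail to exhibit any enabled rule. Making the statement true as written therefore comes down to pinning down exactly which attack capabilities and scenario generators preserve this invariant, or to adopting the \texttt{owise}-drop convention for messages with stale targets.
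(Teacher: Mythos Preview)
Your proposal is correct and substantially more careful than what the paper itself offers. The paper gives no separate proof of this proposition: the justifications are embedded parenthetically in the statement (``ow \texttt{mte} is 0'', ``by definition of \texttt{mte}''), and the result is treated as an immediate reading of the recursive definition of \texttt{mte} against the guards of the five rules. There is no induction on reachability and no explicit invariant argument.

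Your two-stage plan---first secure a well-formedness invariant by induction on rewrites, then unfold \texttt{mte} and match cases against rule enabledness---is the right way to make the proposition airtight. In particular, you have put your finger on a point the paper elides: case~2 asserts that a delay-\texttt{0} message in the network enables \texttt{rcv}, but the \texttt{rcv} rule requires \texttt{getTgt(msg) == epid} for some endpoint \texttt{epid} actually present in the configuration. Without your target-existence invariant (or an \texttt{owise} drop rule), a stray message to a nonexistent endpoint would block both \texttt{rcv} and \texttt{tick}, falsifying the claimed trichotomy. The paper implicitly relies on the scenario generators (\texttt{tCS2C}, \texttt{tCS3C}, \ldots) and the listed attack capabilities only producing and rerouting messages among declared endpoints, but never states or proves this. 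Your identification of this as the ``genuinely delicate point'' is exactly right, and your proposed inductive check over the five rules plus \texttt{attack} is the honest way to discharge it.
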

   
The definition of \texttt{mte} is extended to 
dialected CoAP configurations by adding 
the rule that \texttt{mte} is 0 if
any local nets are non-empty. Thus messages in
local nets must be delivered or dialected and put
in the global net before time can pass.  The
consequences are captured in the next proposition.
 
\begin{proposition}[Mte based characterization of reachable dialected CoAP configurations.]
\label{prop:mte-dialected}
As for the non-dialected case there are three cases.
\begin{itemize}
\item[1.] $\mathtt{mte(dsysC)} = \mathtt{nz:NzNat}$.
In this case the value of \texttt{mte} is determined
by the global network and the \texttt{sysC}
level attributes and only the \texttt{tick} rule (and  possibly \texttt{net}) is enabled.

\item[2.] $\mathtt{mte(dsysC)} = \mathtt{0}$.
In this case, if there is a message
some local net the either \texttt{ddevsnd} or \texttt{rcv} is enabled.  Otherwise 
either \texttt{ddevrcv} is enabled or some rule
at the \texttt{sysC}  level as shown in 
Proposition \ref{prop:mte-coap}.

\item[3.] $\mathtt{mte(dsysC)} = \mathtt{infty}$.
In this case \texttt{dsysC} is terminal.
 As for \texttt{sysC} the net is empty (global and local), \texttt{w4Ack} attributes are empty, and 
 \texttt{sendReq} attribues are empty so no rule is enabled.
\end{itemize}
\end{proposition}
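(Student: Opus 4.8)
The plan is to lift Proposition~\ref{prop:mte-coap} through one level of configuration nesting. First I would establish, by a routine induction over the rewrite rules of the dialected theory (the two dialect rules \texttt{ddevsend} and \texttt{ddevrcv}; the base rules \texttt{devsend}, \texttt{rcv}, \texttt{ackTimeout}, \texttt{net} acting inside the \texttt{conf} attribute of a wrapper; the unchanged \texttt{attack} rule acting on the external network; and \texttt{tick}), the structural invariant that a configuration reachable from \texttt{dsysI} has the shape: one external network element \texttt{net(ddmsgs0,ddmsgs1)}, a finite set of wrapper meta-agents \texttt{[epid | conf([epid | devatts] net(dmsgs0,dmsgs1)) dattrs]} in which each \texttt{devatts} is a reachable (in the base-CoAP sense) endpoint state, and possibly one attacker agent. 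Alongside this I would prove the sub-invariant that every message in a local (internal) network belonging to wrapper \texttt{epid} has source \texttt{epid}: the only rules that deposit a message into that local net -- the base \texttt{devsend}, \texttt{ackTimeout}, \texttt{rcv} of the wrapped agent, and \texttt{ddevrcv} via \texttt{rcvMsg} -- all produce messages with source \texttt{epid}, and \texttt{ddevsend} only removes messages from it.

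Given these invariants, I would do the case split on \texttt{mte(dsysC)} exactly as in Proposition~\ref{prop:mte-coap}, using the extended clause that sets \texttt{mte} to \texttt{0} whenever some local net is non-empty and otherwise recurses into the wrapped \texttt{devattrs} and into the external network element. In case \texttt{mte(dsysC) = nz:NzNat}, all local nets must be empty (else \texttt{mte} would be \texttt{0}), so \texttt{ddevsend} and the internal \texttt{rcv} are disabled; the remaining part of the \texttt{mte} computation coincides syntactically with the base \texttt{mte} applied to the external network together with the wrapped attribute sets, so Proposition~\ref{prop:mte-coap} gives that only \texttt{tick} (and possibly \texttt{net}) is enabled -- no external-network message has \texttt{0} delay, no \texttt{w4Ack} of any wrapped agent has a \texttt{0}-delay message, and any non-empty \texttt{sendReqs} is blocked or has non-zero \texttt{sndCtr}, so \texttt{ddevrcv}, the internal \texttt{ackTimeout}, and the internal \texttt{devsend} are all disabled. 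In case \texttt{mte(dsysC) = 0}, I trace which clause of the recursive \texttt{mte} produced the \texttt{0}: if it is a non-empty local net, the sub-invariant guarantees the selected message has source \texttt{epid}, so \texttt{ddevsend} is enabled (and \texttt{rcv} as well if that message additionally has delay \texttt{0} and target \texttt{epid}); otherwise the \texttt{0} comes from the external network or from some wrapped \texttt{devattrs}, and the enabled rule is exactly the one named by Proposition~\ref{prop:mte-coap} -- \texttt{ddevrcv} (or \texttt{net}) for a \texttt{0}-delay external message, the internal \texttt{ackTimeout} for a \texttt{0}-delay \texttt{w4Ack} message, the internal \texttt{devsend} for an enabled pending send. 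In case \texttt{mte(dsysC) = infty}, the definition of \texttt{mte} forces all local nets, the external network, all \texttt{w4Ack} attributes, and all \texttt{sendReqs} to be empty, so none of \texttt{ddevsend}, \texttt{ddevrcv}, \texttt{net}, \texttt{attack}, the internal \texttt{rcv}, \texttt{ackTimeout}, or \texttt{devsend} can fire, hence \texttt{dsysC} is terminal.

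The main obstacle, and the only step needing real care, is making precise the claim that once all local nets are empty the non-local part of \texttt{mte(dsysC)} and the set of enabled rules literally reduce to the base situation of Proposition~\ref{prop:mte-coap}: one must check that \texttt{mte}'s recursion into \texttt{conf([epid | devattrs] net(mtDM,mtDM))} collapses to \texttt{mte(devattrs,...)} via the extended defining equation, that \texttt{passTime} merely propagates into each \texttt{devattrs} (so a wrapper with empty local net has the same temporal behaviour as the wrapped endpoint alone), and that the wrapped \texttt{devatts} are genuine reachable base-CoAP endpoint states -- which follows from the structural invariant, since every transition either is a base-CoAP transition of the wrapped agent, or (for \texttt{ddevsend}) leaves \texttt{devatts} untouched, or (for \texttt{ddevrcv}) updates \texttt{devatts} exactly as the base \texttt{rcv} rule would through \texttt{rcvMsg}. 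Everything else is a bookkeeping instantiation of Proposition~\ref{prop:mte-coap}.
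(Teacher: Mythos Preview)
Your proposal is correct and considerably more detailed than what the paper offers: the paper does not give a proof of this proposition at all, treating it as an immediate consequence of the extended \texttt{mte} clause (``\texttt{mte} is 0 if any local nets are non-empty'') together with Proposition~\ref{prop:mte-coap}. Your explicit structural invariant, the source sub-invariant for local-net messages, and the careful tracing of which \texttt{mte} clause produced the value are exactly the ingredients one would need to make the claim fully rigorous, and they go well beyond the paper's one-sentence justification.
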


  \omitthis{
  if M is sent it will be received if not dropped
  if sv hasRspTSnt to cl tok matching aid then
  previously sv rcvd req src cl tok matching aid
  If cl hasPend sv aid then cl previously sent
  req to sv w tok matching aid
  If cl hasRspTRcd from sv mathing aid then cl previously sent req to sv w tok matching aid
  }

 \begin{small}
 \begin{verbatim}
 \end{verbatim}
 \end{small}
\omitthis{
\begin{proposition}[Message based characterization of reachable states]\label{prop:msg-based}

(M=m(ep0,ep1,content) in net then
 if isReq(M) then ep1 has corresponding w4 (ack or rsp) or rspRcd
 if isResp(M) then ep1 has corresponding rspSnt
      and ep0 had corresponding w4
 if isAck(M) then ep0 has matching w4Ack or a separate response
 arrived earlier and ep0 has matching rspRcd
 For every w4Ack there is the request, or a matching ack or rsp 
 in the net,  [if unreliable net modeled then the req or ack or rsp
 could have been dropped]
 if epid has rspRcd from epid0 then epid0 has corresponding rspSnt
 \end{proposition}
 
 }     

\section{CoAP vulnerabilities}
\label{apx:coap-vulnerabilities}

The paper \cite{coap-attacks} motivates an extension to
the CoAP standard by describing several vulnerabilities
where attacks do not depend on being able to see message
content, i.e. these attacks can in principle be carried
out against CoAP running over DTLS or other transport
providing encryption.

Four classes of vulnerability are discussed using message
diagrams to illustrate attack scenarios. Three of the
classes rely only on the ability to drop or delay
messages. In some cases the vulnerability is due to reuse
of tokens resulting in requests connected to multiple
responses. These scenarios do not exhibit vulnerabilities
in the Maude specification, but in some cases we give
alternative scenarios that realize a similar
vulnerability. With the execption of the example that
relys on mismatching responses to GETs of different
resources on one server, dialecting does not change the
success of attacks as it can not detect blocking/dropping
or delay. Our modified version of the mismatch case uses
two servers and in this case dialecting prevents the
attack becuase lingos are pairwise disjoint. The fourth
class concerns message fragmentation in the network. This
is not modeled in the current Maude CoAP specification.

In the following we show how the three relevant
vulnerabilities can be formally modeled and analyzed
using the Maude specification. In the diagrams, the
vertical lines are participant send/receive event
timelines. The client is labelled \textsc{dev0}, the
server labelled \texttt{dev1} and the attacker labelled
\texttt{eve}. The state of the server resource
\texttt{"door"} is indicated at the beginning and
end of the server timeline. In a timeline,
\texttt{o}s mark normal send or receive events,
\texttt{X} indicates a message drop, and \texttt{@} 
indicates a message capture (for later release).

We use the function \texttt{tCS2C} defined
in Section \ref{subsec:coap-scenarios} to construct
scenario initial system configurations.  For convenience
we repeat the definition here.
\begin{small}  
\begin{verbatim}  
op tCS2C : Nat Nat AMsgL RMap Nat Nat
           Nat Nat AMsgL RMap Nat Nat
          Caps -> Conf .
eq tCS2C(n0,j0,amsgl0,rbnds0,mqd0:Nat,w4ab0:Nat,
         n1,j1,amsgl1,rbnds1,mqd1:Nat,w4ab1:Nat,
        caps) = 
   net(mtDM,mtDM)
   mkDevC(n0,j0,amsgl0,rbnds0,mqd0:Nat,w4ab0:Nat)  
   mkDevC(n1,j1,amsgl1,rbnds1,mqd1:Nat,w4ab1:Nat)  
   (if caps == mtC then mt else mkAtt(caps) fi)  .
 \end{verbatim}  
 \end{small}  
\noindent
It generates a configuration with two endpoint agents (constructed using \texttt{mkDevC}) and an attacker (if \texttt{caps} is non empty). 
The arguments specify initial application messages
(\texttt{amsglj}) and resource maps (\texttt{rbndsj}).  The first two numbers for a
device construction determine the device id and inital delay for message sending.  The last two numbers for a device construction specify the
congestion control behavior: a new message can not be sent if there are more than \texttt{w4ab}  requests not yet acknowledged, and there must be
a delay of a least \texttt{mqd} between requests.

We also use three application message
constructors to generate requests.\footnote{Recall that a client will retry a confirmable request if no acknowlegment
or response is received within a specified time. Non-confirmable requests are not acknowledged and are not retried.}
\begin{itemize}
\item
\texttt{mkPutN(tag,server,resource,val)}:
a non-confirmable request to \texttt{server} to set
\texttt{resource} to \texttt{val}.
\item
\texttt{mkPutC(tag,server,resource,val)}:
a confirmable request to \texttt{server} to set
\texttt{resource} to \texttt{val}.
\item
\texttt{mkGetN(tag,server,resource)}
a non-confirmable request to get the \texttt{resource} value from \texttt{server}.
\end{itemize}    
\noindent
See Section \ref{subsec:coap-scenarios} for definitions.

\noindent
To check if an attack is possible in a given
scenario, we search from the constructed initial
configuration \texttt{initSys} to a \textbf{terminal}
configuration \footnote{See section
\ref{apx:exe-props} for a characterization of
terminal configurations.} that satisfies a property
\texttt{P(c:Conf, params)} characterizing the
effects of the attack typically in terms of
properties of unperturbed executions that are 
violated.  Requiring solutions to be terminal allows us to rely on a number of consequences such as
if a message is sent, then it will be received
unless it is dropped. 

\begin{small}  
\begin{verbatim}  
search initSys =>! {c:Conf ["eve" | a:Attrs caps(mtC)]}
    such that P(c:Conf, params)
\end{verbatim}  
\end{small}  
\noindent
This lists all the ways the property P can be satisfied.
In some cases, we also search for executions in which the
attacker fails, i.e. the expected execution properties
hold. Including the term \texttt{["eve" | a:Attrs
caps(mtC)]} in the search pattern ensures that the
attacker will use all its capabilities in reaching a
solution state.

We use the following predicates to form the search
properties. These are very basic properties that
characterize the state of endpoints at the end of a
scenario.
\paragraph{Properties of the server state.}
\begin{itemize}
\item
\texttt{checkRsrc(c:Conf,sv,r,v)}: The
resource \texttt{r} of server \texttt{sv} has value \texttt{v}.
\item
\texttt{hasRspTSnt(c:Conf,sv,cl,aid)}:
Server \texttt{sv} has sent a response to client
\texttt{cl} in response to a request with
token matching \texttt{aid} (the application provided
message identifier, which is part of the message id
and token strings).

\item
\texttt{rspTSntBefore(c:Conf,sv,cl,aid0,aid1)}:
Server \texttt{sv} sent responses to client \texttt{cl}  in response to a request with
token matching \texttt{aid0} and  a requesst with token matching \texttt{aid1} with the \texttt{aid0}
matching response being sent first.

\end{itemize}    
 
\paragraph{Properties of the client state.}
 \begin{itemize}
\item
\texttt{hasRspTRcd(c:Conf,cl,sv,aid)}:
Client \texttt{cl} had received a response from server \texttt{sv} to a request with token matching \texttt{aid}.   (The empty string matches any token.)
\item
\texttt{hasGetRsp(c:Conf,cl,sv,aid,val)}:
Client \texttt{cl} had received a response from server \texttt{sv} to a GET request with token matching \texttt{aid}. The response body contains
value \texttt{val}.  (If \texttt{val} is the empty
string, the value check is omitted.)
\item
\texttt{rspPend(c:Conf,cl,sv,aid)}:
Client \texttt{cl} is waiting for a response
from server \texttt{sv} to a request with token matching \texttt{aid}.
\end{itemize}

The following subsections treat the different
vulnerabilities. They begin with a diagram of the
proposed attack, followed by a formal definition of
the corresponding scenario, search commands and
results demonstrating attack success and failure
situations. We show the scenario definitions and
search commands for the original CoAP case. The
search command for the corresponding dialected
scenario is given by applying the dialect
transformation \texttt{D} (defined in Section
\ref{subsec:dialect-transform}) to the initial system
configuration and lifing the search pattern to the
dialected structure to expose the relevant elements.
Thus the dialected version of the CoAP search command

\begin{small}  
\begin{verbatim}  
        search initSys =>! 
          {c:Conf ["eve" | a:Attrs caps(mtC)]
                  [epid | devatts:Attrs]}
          such that prop(c:Conf,epid) .
\end{verbatim}  
\end{small}  
\noindent 
is
\begin{small}  
\begin{verbatim}  
       search D(initSys) =>! 
         {c0:Conf  ["eve" | a:Attrs caps(mtC)]
                   [epid | conf(c:Conf)]
          ddattrs0:Attrs]}
        such that prop(c:Conf,epid) .
\end{verbatim}  
\end{small}  
\noindent
Note that the goal property doesn't change.

\subsection{The Selective Blocking Attack.}
This scenario illustrates the ability of an
attacker to block a single message.  
How the attacker recognizes the intended target is 
not discussed in \cite{coap-attacks}.
Th scenario is adapted from Figures 1,2 of 
\cite{coap-attacks}. 

\begin{small}  
\begin{verbatim}  
      dev0      eve    dev1        dev       eve     dev1 
      ---      ---   unlock        ---       ---    unlock
       o -PUTN->X      |            o -PUTN->  ------>  o 
                       |            |         X <-2.04- o 
                    unlock          |                 lock

          (1) drop message            (2) drop response

          Figures 1,2 scenarios          
\end{verbatim}  
\end{small} 
\noindent
\texttt{-PUTN->} a non-confirmable request to lock
the door. \texttt{-2.04->} the \texttt{PUTN}
response. (1) Attacker blocks the request. (2) Attacker blocks the response.

The function \texttt{caFig1.2} specifies the scenario for the message
dropping attacks of Figures 1 and 2, parameterized by
congestion control parameters (\texttt{mqd:Nat} the minimum time
between message sends; more that \texttt{w4b:Nat} requests in
\texttt{w4Ack} blocks sending additional requests).

\begin{small}  
\begin{verbatim}  
    op caFig1.2 : Nat Nat -> Conf .
    eq caFig1.2(mqd:Nat,w4b:Nat) =  
         tCS2C(0,0,mkPutN("putN","dev1","door","lock"), 
                   mtR,mqd:Nat,w4b:Nat,
               1,1,nilAM,rb("door","unlocked"),2,0,
               drop) .
\end{verbatim}  
\end{small}  

\paragraph{Figure 1 successful attack.}
Dropping a request is specified by checking that the server 
did not make any response (hence it did not
receive the request) and the client
awaits a response (it sent a request). 
Note this is redundant given the semantics, but a systematic characterization.

\begin{small}  
\begin{verbatim}  
    search {caFig1.2(5,0)} =>! 
      {c:Conf ["eve" | a:Attrs caps(mtC)]} such that 
      not(hasRspTSnt(c:Conf,"dev1","dev0","putN")) and
      rspPend(c:Conf,"dev0","dev1","putN")  .
\end{verbatim}  
\end{small}  

\noindent
or by checking that the server resource has the
original value.
\begin{small}  
\begin{verbatim}  
      search {caFig1.2(5,0)} =>! 
      {c:Conf ["eve" | a:Attrs caps(mtC)]} such that  
      checkRsrc(c:Conf,"dev1","door","unlocked") .
\end{verbatim}  
\end{small}  
\noindent
In either case there are 2 solutions among 7 states. The
results for the corresponding two dialected cases are
similar, 2 solutions among 9 states visited.

\paragraph{Figure 2 successful attack.}  Dropping the response
can be specified by checking that the server 
sent a response and the client did not receive one.

\begin{small}  
\begin{verbatim}  
    search {caFig1.2(5,0)} =>! 
    {c:Conf  ["eve" | a:Attrs caps(mtC)]} such that   
      hasRspTSnt(c:Conf,"dev1","dev0","putN") and 
      checkRsrc(c:Conf,"dev1","door","lock") and 
      rspPend(c:Conf,"dev0","dev1","putN") .
\end{verbatim}  
\end{small}  
\noindent
There are 2 solutions out of 13 states visited.

\paragraph{Figure 1,2  attack failures.} 
To check that 
there are good executions in the presence of
an attacker we check that the server sends
a response and the client receives it and the
attacker used all its capabilities.  

\begin{small}  
\begin{verbatim}  
    search {caFig1.2(5,0)} =>! 
      {c:Conf ["eve" | a:Attrs caps(mtC)]} such that 
         hasRspTSnt(c:Conf,"dev1","dev0","putN") and 
         hasRspTRcd(c:Conf,"dev0","dev1","putN") .
\end{verbatim}  
\end{small}  
\noindent
There is no solution, some message has to be dropped
and there are only two messages.

\subsection{The Request Delay Attack.}
This scenario is based on figure 3 of \cite{coap-attacks}.
\begin{small}  
\begin{verbatim}  
          dev0        eve        dev1
          ----        ---        lock
           o -PUTNDU-> @    
           o -PUTNSO->  ------>    o  
           o <-------  <-2.01--    o
                       o -PUTNDU-> o
                       X <-2.04-   o
                                unlock
           Figure 3 scenario                              
\end{verbatim}  
\end{small}  
The client intends to unlock
the door, \texttt{PUTNDU}, then turn a signal on
\texttt{PUTSO}.  However the attacker delays the door unlock request and drops the \texttt{PUTNDU} response.
The signal goes on with the door locked.

The Figure 3 attack scenario is constructed by
the function  \texttt{caFig3}.
\begin{small}  
\begin{verbatim}  
      op caFig3 : Nat Nat Nat -> Conf .
      eq caFig3(d:Nat,mqd:Nat,w4b:Nat) = 
       tCS2C(0,0,mkPutN("putND","dev1","door","unlock") ;
                 mkPutN("putNS","dev1","signal","on"),
                mtR,mqd:Nat,w4b:Nat,
            1,1,nilAM,rb("door","lock"),2,0,
            drop delay(d:Nat)) .
\end{verbatim}  
\end{small}  

\paragraph{Figure 3 successful attacks.}
From the diagram we characterize the attack
by server properties saying a request with token
matching \texttt{putNS} was received and processed
before the request with token matching  \texttt{putND}; the door is unlocked, and
the client has not received a response to the
message with token matching \texttt{putND}.

\begin{small}  
\begin{verbatim}  
      search {caFig3(15,5,0)} =>! 
      {c:Conf ["eve" | a:Attrs caps(mtC)]} such that 
      checkRsrc(c:Conf,"dev1","door","unlock") and 
      rspTSntBefore(c:Conf,"dev1","dev0","putNS","putND") and 
      rspPend(c:Conf,"dev0","dev1","putND") .
\end{verbatim}  
\end{small}  
\noindent
There are 4 solutions in 330 states visited.
In the dialected version of the scenario, there
are 4 solutions among 496 states visited.

\paragraph{Figure 3 failed attacks.}
Failed attacks are those executions where
the attacker uses all its capabiities, but
the endpoint events are as expected in
absence of attacker.  That is, the server
has responded to \texttt{putND} before \texttt{putNS} and the client has received
the two expected responses.  If we require
the client receive all responses there
will be no solution, as something needs
to be dropped.  So we only require the client
receive a response to \texttt{putND}.

\begin{small}  
\begin{verbatim}  
      search {caFig3(15,5,0)} =>! 
      {c:Conf  ["eve" | a:Attrs caps(mtC)]} such that  
      checkRsrc(c:Conf,"dev1","door","unlock") and 
      rspTSntBefore(c:Conf,"dev1","dev0","putND","putNS") and
      hasRspTRcd(c:Conf,"dev0","dev1","putND")  .
\end{verbatim}  
\end{small}  
\noindent
There are 4 solutions, for example the attacker
might delay the \texttt{putNS} request, and drop
the response to that request.

\subsection{The Delay with Reordering Attack.}
This scenario is based on Figure 4 of \cite{coap-attacks}.
Here the door is initially locked.  The client
requests unlock, possibly carries out some other
interactions, and then requests lock.  The attacker
captures the unlock request. Since the unlock request is confirmable, the client resends it and the server
sends a response to the resent request.  The following
lock request is processed as expected.  Then
the attacker releases the captured unlock request
and drops the response. Thus the client believes the
door is locked when in fact it is unlocked.
\begin{small}  
\begin{verbatim}  
        dev0        eve        dev1
        ----        ---        lock
         o -PUTCU->  @    
         o -PUTCU->  -------->  o  --- retry
         o <----     <---2.04-- o
                  .. ... ..
         o -PUTNL->    ------>  o  
         o <-------  <--2.04--  o
                     o -PUTCU-> o  --- release
                     X <-2.04-  o
                              unlock  
           Figure 4 scenario                              
\end{verbatim}  
\end{small}  

This scenario is constructed by the function
\texttt{caFig4x} giving the attack a drop and a delay
capability.  The function is parameterized by the delay ammount
(\texttt{n:Nat}) and the usual congestion controls.
\begin{small}  
\begin{verbatim}  
      op caFig4x : Nat Nat Nat -> Conf .
      eq caFig4x(n:Nat,mqd:Nat,w4b:Nat) = 
         tCS2C(0,0,mkPutC("putC","dev1","door","unlock") ;
                  mkPutN("putN","dev1","door", "lock"),
                  mtR,mqd:Nat,w4b:Nat,
             1,1,nilAM,rb("door","lock"),2,0,
             drop delay(n:Nat)) .
\end{verbatim}  
\end{small}  

\paragraph{Figure 4 Successful Attack}
The attack is specified by requiring
the server to send a response to a request
with token matching \texttt{"putC"} followed
by a \texttt{"putN"} followed by another \texttt{"putC"} with the door unlocked at the end.
We require the client to have received responses
to \texttt{"putC"} and \texttt{"putN"} requests
thus leading the client to believe it succeeded.
\begin{small}  
\begin{verbatim}  
      search {caFig4x(10,5,0)} =>! 
      {c:Conf ["eve" | a:Attrs caps(mtC)]} such that 
      checkRsrc(c:Conf,"dev1","door","unlock") and 
      hasRspTSnt(c:Conf,"dev1","dev0","putC") and 
      hasRspTSnt(c:Conf,"dev1","dev0","putN") and 
      rspTSntBefore(c:Conf,"dev1","dev0","putC","putN") and 
      rspTSntBefore(c:Conf,"dev1","dev0","putN","putC") and 
      hasRspTRcd(c:Conf,"dev0","dev1","putN") and 
      hasRspTRcd(c:Conf,"dev0","dev1","putC") .
\end{verbatim}  
\end{small}  
\noindent
There are 8 solutions from 2600 states visited.
In the dialected version there are 16 solutions
from 10167 states visited.

\paragraph{Figure 4 Failed Attack.}
The attacker can fail even though both capabilities
are used.  In this case we require the server
response sequence to be simply \texttt{putC} then \texttt{putN} and the final door state to be
\texttt{"lock"} and requier the client state as
for the successful attack.
\begin{small}  
\begin{verbatim}  
          search {caFig4x(10,5,0)} =>! 
          {c:Conf ["eve" | a:Attrs caps(mtC)]} such that 
          checkRsrc(c:Conf,"dev1","door","lock") and 
          hasRspTSnt(c:Conf,"dev1","dev0","putC") and 
          hasRspTSnt(c:Conf,"dev1","dev0","putN") and 
          rspTSntBefore(c:Conf,"dev1","dev0","putC","putN") and  
          hasRspTRcd(c:Conf,"dev0","dev1","putN") and 
          hasRspTRcd(c:Conf,"dev0","dev1","putC") .
\end{verbatim}  
\end{small}  
\noindent
There are 45 solutions from 2586 states visited.

\subsection{The Response Delay and Mismatch Attack.}
This scenario is based on Figure 5 of \cite{coap-attacks}.
In this attack the client sends an unlock request,
the attacker captures the response, since the resquest is non-confirmable the client doesn't
worry about the lack of response.  Later the
client sends a lock request that is dropped by
the attacker, who releases the unlock response
to trick the client into thinking the lock
request succeeded, when in fact the door remains
unlocked.
       
This works for clients that reuse tokens.
It fails using the Maude client specification.

\begin{small}  
\begin{verbatim}  
            dev0      eve       dev1
            ----      ---       lock
             o -PUTNU-->   -----> o
             |         @  <-2.04- o                  
                    ......
             o -PUTNL->   X       |
             o <-2.04-    o       |
                               unlock
           Figure 5 scenario  
\end{verbatim}  
\end{small}  

This attack scenario is specified by the function
\texttt{caFig5x}.
\begin{small}  
\begin{verbatim}  
        op caFig5x : Nat Nat Nat -> Conf .
        eq caFig5x(d:Nat,mqd:Nat,w4b:Nat) = 
           tCS2C(0,0,mkPutN("putNU","dev1","door","unlock") ;
                     mkPutN("putNL","dev1","door", "lock"),
                     mtR,mqd:Nat,w4b:Nat,
                 1,1,nilAM,rb("door","lock"),2,0,
                 drop delay(d:Nat)) .
\end{verbatim}  
\end{small}  

\paragraph{Figure 5 intented attack fails}
\begin{small}  
\begin{verbatim}  
            search {caFig5x(10,5,0)} =>!
            {c:Conf ["eve" | a:Attrs caps(mtC)]} such that 
            hasRspTSnt(c:Conf,"dev1","dev0","putNU") and 
            not(hasRspTSnt(c:Conf,"dev1","dev0","putNL")) and 
            checkRsrc(c:Conf,"dev1","door","unlock") and 
            hasRspTRcd(c:Conf,"dev0","dev1","putNL") .
\end{verbatim}  
\end{small}  
\noindent
There is no solution.  Also no solution in the
dialected case.

\paragraph{Figure 5 alternative attack}
The result that the client believes the door is locked
when it is not can be realized by an attack
where the unlock request is delayed and
the unlock response is dropped.
\begin{small}  
\begin{verbatim}  
          search {caFig5x(10,5,0)} =>! 
          {c:Conf ["eve" | a:Attrs caps(mtC)]} such that
          hasRspTSnt(c:Conf,"dev1","dev0","putNU") and 
          hasRspTSnt(c:Conf,"dev1","dev0","putNL") and 
          rspTSntBefore(c:Conf,"dev1","dev0","putNL","putNU") and 
          checkRsrc(c:Conf,"dev1","door","unlock") and 
          hasRspTRcd(c:Conf,"dev0","dev1","putNL") .
\end{verbatim}  
\end{small}  
\noindent
There are 4 solution 4 among 330 states visited.
In the dialected case there are 4 solutions among
499 states visited.

\subsection{Delaying and mismatching response to \texttt{GET} requests}
This attack is based on Figure 6 of \cite{coap-attacks}.
In this attack the attacker captures the response
to a GET request, drops a subsequent GET request
and releases the original response.
Between GET requests the door state may have changed.
A vulnerable client will accept the delayed response to the first GET as response to the second GET, and thus in the case where the door is locked initially, and becomes unlocked, client incorrectly thinks the door is locked, assuming the unlock request failed to arrive.
The Maude client will accept the delayed response, but as response to the first GET and will not have any knowledge about the later state.

\begin{small}  
\begin{verbatim}  
            dev0      eve        dev1
            ----      ---        lock
             o -GETN->   ------>  o    getN0
             |         @  <-lock- o                  
             o -PUTNU-->  ----->  |    putNU
             |         X  <-2.04- o                  
             o -GETN-> X          |    getN1
             o <-lock- o          |    
                               unlock
            Figure 6 scenario  
\end{verbatim}  
\end{small}  

This attack scenario is constructed by the function
\texttt{caFig6x}.
\begin{small}  
\begin{verbatim}  
        op caFig6x : Nat Nat Nat -> Conf .
        eq caFig6x(d:Nat,mqd:Nat,w4b:Nat) = 
           tCS2C(0,0,mkGetN("getN0","dev1","door") ;
                     mkPutN("putNU","dev1","door","unlock") ;
                     mkGetN("getN1","dev1","door") ,
                     mtR,mqd:Nat,w4b:Nat,
                 1,1,nilAM,rb("door","lock"),2,0,
                 drop drop delay(d:Nat)) .
\end{verbatim}  
\end{small}  

\paragraph{Figure 6 intended attack fails.}
To specify the intended attack we require that
the server sends a response to \texttt{"getN0"} before
sending a response to \texttt{"putNU"}, at the end
the door state is \texttt{"unlock"}, and the client
has a response to \texttt{"getN1"} with value
\texttt{"lock"}.
\begin{small}  
\begin{verbatim}  
              search {caFig6x(10,5,0)} =>! 
              {c:Conf ["eve" | a:Attrs caps(mtC)]} such that 
              hasRspTSnt(c:Conf,"dev1","dev0","getN0") and  
              hasRspTSnt(c:Conf,"dev1","dev0","putNU") and 
              rspTSntBefore(c:Conf,"dev1","dev0","getN0","putNU") and 
              checkRsrc(c:Conf,"dev1","door","unlock") and 
              not(hasRspTSnt(c:Conf,"dev1","dev0","getN1") ) and 
              hasGetRsp(c:Conf,"dev0","dev1","getN1","lock") .
\end{verbatim}  
\end{small}  
\noindent
There are no solutions, also no solutions in the
dialected case.

\paragraph{Figure 6 alternate attack succeeds}
In the alternate attack, the server requirement
is the same as for the intended attack, but
the client has a response to \texttt{"getN0"}
but no response to either \texttt{"putNU"} or
\texttt{"getN1"}, thus the door is unlocked at the end,
but the client does not know the state.
\begin{small}  
\begin{verbatim}  
            search {caFig6x(10,5,0)} =>! 
            {c:Conf ["eve" | a:Attrs caps(mtC)]} such that  
            hasRspTSnt(c:Conf,"dev1","dev0","getN0") and 
            hasRspTSnt(c:Conf,"dev1","dev0","putNU") and 
            rspTSntBefore(c:Conf,"dev1","dev0","getN0","putNU") and 
            checkRsrc(c:Conf,"dev1","door","unlock") and 
            not(hasRspTSnt(c:Conf,"dev1","dev0","getN1") ) and 
            hasGetRsp(c:Conf,"dev0","dev1","getN0","lock") and 
            rspPend(c:Conf,"dev0","dev1","getN1") and 
            rspPend(c:Conf,"dev0","dev1","putNU") .
\end{verbatim}  
\end{small}  
\noindent
There are 18 solutions in 2742 states visited.  In
the dialected case there are 18 solutions among 4675
states visited.

\subsection{Delaying and mismatching response from other resource}\label{subsec:diversion}
This attack is based on Figure 7 of \cite{coap-attacks}.
In the original attack the client sends temperature requests to two
resources (oven,room), and the 
attacker switches responses, dropping the the room
temperature
response, making the client think the room is on fire.
We adapt this to the case of doors to two rooms, to
keep to the door theme.

\begin{small}  
\begin{verbatim}  
              dev0      eve        dev1
              ---       ---     d1/lock d2/unlock
               o  ----GETN(d1)--->   o  getN0
               |             @ <-d1- o                  
               o -GETN(p2)-> X       |  getN1
               o <-d1-       o       |

              Figure 7 scenario      
\end{verbatim}  
\end{small}  

\noindent
With a vulnerable client the attack succeeds because the client uses same token for both GETs
and will accept the response to the first GET
as a response to the second GET request. 
The Maude  client does not reuse tokens and
this attack fails.

The attack scenario is constructed by the function
\texttt{caFig7x}.
\begin{small}  
\begin{verbatim}  
          op caFig7x : Nat Nat Nat -> Conf .
          eq caFig7x(d:Nat,mqd:Nat,w4b:Nat) = 
             tCS2C(0,0,mkGetN("getN0","dev1","door1") ;
                       mkGetN("getN1","dev1","door2") ,
                       mtR,mqd:Nat,w4b:Nat,
                   1,1,nilAM,
                   rb("door1","lock") rb("door2","unlock"),2,0,
                   drop delay(d:Nat)) .
\end{verbatim}  
\end{small}  

\paragraph{Figure 7 attack fails.}
\begin{small}  
\begin{verbatim}  
          search {caFig7x(10,5,0)} =>! 
            {c:Conf ["eve" | a:Attrs caps(mtC)]} such that
          hasRspTSnt(c:Conf,"dev1","dev0","getN0") and 
          not(hasRspTSnt(c:Conf,"dev1","dev0","getN1")) and 
          hasGetRsp(c:Conf,"dev0","dev1","getN1","lock") .
\end{verbatim}  
\end{small}  
\noindent
There is no solution.
Also no solution in the dialected case.

\subsection{Delaying and mismatching response from other resource variant}
This attack is a variant of the attack based on Figure 7 of \cite{coap-attacks} using two servers
rather than one server with two resources. 
\begin{small}  
\begin{verbatim}  
                dev0         eve      dev1    dev2
                ----         ---     unlock   lock
                o -GETN(1)--> #        |        |
                |             >  ---GETN(2)-->  o
                |             #  <---lock---    o 
                o <-1:lock-   <        |        |
\end{verbatim}  
\end{small}  

\noindent
 The client asks
for the door state at \texttt{dev1}.  The attacker redirects
the request to \texttt{dev2} and (un)redirects the response
to appear to come from \texttt{dev1}.   This scenario is
generated by the function \texttt{caFig7mod}.

\begin{small}  
\begin{verbatim} 
            op caFig7mod : Nat  Nat -> Conf .
            eq caFig7mod(mqd:Nat,w4b:Nat) =
                 tCS3C(0,0,mkGetN("getN0","dev1","door"),
                           mtR,mqd:Nat,w4b:Nat,
                      1,1,nilAM,rb("door","unlock"),5,0,
                      2,2,nilAM,rb("door","lock"),5,0,
                      redirect("dev1","dev2")       --- change tgt
                      unredirect("dev1","dev2")) .  --- change src
\end{verbatim}  
\end{small}  

\paragraph{Figure 7mod successful attack}
We search for a final state in which dev0 has a response from dev1
that the door is locked while in fact the door of dev1 is unlocked,
dev1 sent no response, and dev2 did send a response.
\begin{small}  
\begin{verbatim} 
              search {caFig7mod(5,0)} =>! 
              {c:Conf ["eve" | a:Attrs caps(mtC)]} such that 
              not(hasRspTSnt(c:Conf,"dev1","dev0","getN0")) and 
              hasRspTSnt(c:Conf,"dev2","dev0","getN0") and 
              checkRsrc(c:Conf,"dev1","door","unlock") and 
              hasGetRsp(c:Conf,"dev0","dev1","getN0","lock") .
\end{verbatim}  
\end{small}  
\noindent
There are 4 solutions out of 33 states visited.
In the dialected case there are no solutions
among 21 states visited. This is because the GET request is
dialected using the secret shared by dev0 and dev1, and dev2
can not decode the message, so it is dropped.


\section{Bisimulation  proofs}
\label{apx:sbsim-proofs}

We consider  CoAP system configurations \texttt{sysC} of the form

 \begin{small}
 \begin{verbatim}
   { [eid-j | devatts-j ] | 0 <= j < k 
     net(dmsgs0,dmsgs1)
     ["eve" | kb(dmsgs) caps(caps)]
     } 
 \end{verbatim}
 \end{small}

\noindent
and dialected CoAP system configurations, \texttt{dsysC}, of the form
 \begin{small}
 \begin{verbatim}
   { [eid-j | conf([eid-j | devatts'-j] 
                   net(dmsgsO-j,dmsgsI-j))
              ddevatts-j ]   
     | 0 <= j < k 
     net(ddmsgs0,ddmsgs1)
     [eve | kb(ddmsgs) caps(capsd)]
     } 
 \end{verbatim}
 \end{small}

In Section \ref{subsec:dialect-transform}
we defined transforms relating initial CoAP system
configurations and dialected versions
\texttt{D(sysI)} and 
\texttt{UD(dsysC)}.
We use these transforms to
define a simulation relation $\sim$ and prove two
stuttering bisimulation relations:

\begin{itemize}
\item[1.] $\mathtt{sysI} \sbsim
D(\mathtt{sysI})$ for initial
\texttt{sysI} with attack capabilities restricted
to \texttt{drop} or \texttt{delay}.
\item[2.]
  $\mathtt{dsysI} \sbsim \mathtt{UDA(dsysI)}$ for
initial \texttt{dsysI} with attack capabilites
\texttt{drop}, \texttt{delay}, \texttt{replay}, 
\texttt{redirect} where \texttt{UDA} extends \texttt{UD}
to remove \texttt{replay} and \texttt{redirect}
capabilities.
\end{itemize}
\noindent
Recall the notion of initial configuration is defined in Section \ref{subsec:coap-scenarios} definition \ref{defn:sysI}.

\begin{definition}[$\sim$-bisimulation]
Given a simulation relation $\mathtt{sysC} \sim \mathtt{dsysC}$ between system
configurations and dialected system configurations,
we say that $\mathtt{iSys}$ is $\sim$-stuttering bisimilar
to $\mathtt{iDSys}$ 
( $\mathtt{iSys} \sbsim_{\sim} \mathtt{iDSys}$)
iff the following hold:
\begin{itemize}
\item[1.]  
 $\mathtt{iSys} \sim \mathtt{iDSys}$
\item[2.]  
for reachable configurations such that 
$\mathtt{sysC} \sim \mathtt{dsysC}$ 
\begin{itemize}
\item[2a.]
if $\mathtt{sysC} \Rightarrow \mathtt{sysC1}$ then
there is some $\mathtt{dsysC1}$  with
$\mathtt{sysC1} \sim \mathtt{dsysC1}$ 
such that
   $\mathtt{dsysC} \Rightarrow^* \mathtt{dsysC1}$,
and
\item[2b.]
if $\mathtt{dsysC} \Rightarrow \mathtt{dsysC1}$ then
there is some $\mathtt{sysC1}$  with
$\mathtt{dsysC1} \sim \mathtt{sysC1}$ 
such that
   $\mathtt{sysC} \Rightarrow^* \mathtt{sysC1}$.
\end{itemize}
\end{itemize}
\noindent
We say $\mathtt{iSys} \sbsim \mathtt{iDSys}$ if there
is a simulation relation $\sim$ such that
$\mathtt{iSys} \sbsim_{\sim} \mathtt{iDSys}$.
\end{definition}

In the no-attacker case, intuitively the simulation
relation, $\mathtt{sysC} \sim \mathtt{dsysC}$,
holds if the two configurations differ only in
that in \texttt{dsysC}  there are meta-level attributes
and some messages are in the global net and are
dialected and some are in a local net while there is only one net is \texttt{sysC}.  

\begin{definition}[UDX and simulation]
\label{defn:ud-sim}
The simulation relation $\sim$ is defined by
the property
$$\mathtt{dsysC} \sim \mathtt{sysC} 
\equiv (\mathtt{UDX(dsysC)} = \mathtt{sysC})$$

\noindent
where, using the notation above, the function
$\mathtt{UDX}$ extends $\mathtt{UD}$ and the
result is derived as follows.
 \begin{small}
 \begin{verbatim}
  devatts-j = devatts'-j 0 <= j < k
  caps = capsd
  dmsgs0 = decode(ddmsgs0,U_{0<=j<k}ddevatts-j)  
           U_{0<=j<k} dmsgsO-j
  dmsgs1 = decode(ddmsgs1,U_{0<=j<k}ddevatts-j)  
             U_{0<=j<k} dmsgsI-j
 \end{verbatim}
 \end{small}
\noindent 
Note that \texttt{dmsgsI-j} has only delayed 
messages with target \texttt{eid-j} and delay 0,
and \texttt{dmsgsO-j} has only delayed 
messages with source \texttt{eid-j}.
\end{definition}

\begin{lemma}[Inverse relation]\label{lemma:DUD} The transformations \texttt{D} and \texttt{UD} are inverses.   For initial (dialected) systems
$$\mathtt{UD}(\mathtt{D}(\mathtt{sysI},\mathtt{ids})) = \mathtt{sysI}$$
and
$$\mathtt{D}(\mathtt{UD}(\mathtt{dsysI})) = \mathtt{dsysI}$$
\end{lemma}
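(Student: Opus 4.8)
The plan is to prove both identities by unfolding the defining equations of \texttt{D} and \texttt{UD} and doing a simple structural induction on the multiset of agents in the configuration; the second identity will then follow from the first together with Definition \ref{defn:dsysI}.

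First I would establish $\mathtt{UD}(\mathtt{D}(\mathtt{sysI})) = \mathtt{sysI}$. By Definition \ref{defn:sysI} an initial system has the form $\mathtt{sysI} = \{\,\mathtt{aconf}~\mathtt{net(mtDM,mtDM)}\,\}$, where \texttt{aconf} is a multiset of endpoint agents $[\mathtt{eid}\mid\mathtt{devatts}]$ together with at most one attacker agent $[\texttt{"eve"}\mid\mathtt{kb(dmsgs)}~\mathtt{caps(caps)}]$, and $\mathtt{getIds(sysI)}$ is exactly the set of endpoint identifiers. Unfolding \texttt{D} (equivalently \texttt{DX} with $\mathtt{eids}=\mathtt{getIds(sysI)}$) gives $\mathtt{D}(\mathtt{sysI}) = \{\,\mathtt{DAs(aconf,eids,"xxxx",mt)}~\mathtt{net(mtDM,mtDM)}\,\}$, and by the equations for \texttt{DAs}/\texttt{DA} each endpoint agent $[\mathtt{eid}\mid\mathtt{devatts}]$ is replaced by the meta-agent $[\mathtt{eid}\mid\mathtt{conf}([\mathtt{eid}\mid\mathtt{devatts}]~\mathtt{net(mtDM,mtDM)})~\mathtt{ddevatts}]$ (with \texttt{ddevatts} the freshly built \texttt{sharedDialectAttrs}, \texttt{seedTo}/\texttt{seedFr} pairs and \texttt{ixCtr} counters), while the attacker agent and the network element are left unchanged. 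Applying \texttt{UD}: the recursive clause of \texttt{UDX} matches precisely agents of the shape $[\mathtt{eid}\mid\mathtt{conf}([\mathtt{eid}\mid\mathtt{devatts}]~\mathtt{conf1})~\mathtt{ddevatts}]$ and rewrites them back to $[\mathtt{eid}\mid\mathtt{devatts}]$, discarding the local network and all dialect attributes; the attacker agent carries no \texttt{conf} attribute, so it falls under the \texttt{[owise]} clause and is kept verbatim, as is \texttt{net(mtDM,mtDM)}. Induction on the number of wrapped meta-agents, peeling them off one at a time via AC-matching, then yields $\mathtt{UD}(\mathtt{D}(\mathtt{sysI})) = \{\,\mathtt{aconf}~\mathtt{net(mtDM,mtDM)}\,\} = \mathtt{sysI}$.

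For $\mathtt{D}(\mathtt{UD}(\mathtt{dsysI})) = \mathtt{dsysI}$, the cleanest route is to invoke Definition \ref{defn:dsysI}: an initial dialected system is by definition $\mathtt{dsysI} = \mathtt{D}(\mathtt{sysI})$ for some initial CoAP system \texttt{sysI}. Then $\mathtt{D}(\mathtt{UD}(\mathtt{dsysI})) = \mathtt{D}(\mathtt{UD}(\mathtt{D}(\mathtt{sysI}))) = \mathtt{D}(\mathtt{sysI}) = \mathtt{dsysI}$, using the first identity in the middle step. (A direct argument is also possible: \texttt{UD} strips the wrappers to recover the underlying initial system, \texttt{getIds} is unchanged, and re-applying \texttt{D} with prefix \texttt{"xxxx"} rebuilds exactly the same \texttt{sharedDialectAttrs}, seed pairs and counters $\mathtt{ixCtr(-,0)}$ — but this step is sound only because \texttt{dsysI} is \emph{initial}, i.e. its \texttt{used} maps are empty and its seeds and counters are already in the canonical form produced by \texttt{D}; hence routing through the first identity is what makes it precise.)

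The routine part is the case analysis on the equations. The one point needing care is the \texttt{[owise]} semantics of \texttt{UDX}: one must check that the non-\texttt{owise} clause matches an agent term of the configuration exactly when it is one of the meta-agents introduced by \texttt{DA}, and in particular never matches an attacker agent $[\texttt{"eve"}\mid\ldots]$ — which it does not, since that agent has no \texttt{conf} attribute. The only genuine subtlety, and the reason the second identity is stated for initial rather than arbitrary dialected systems, is that \texttt{D} always regenerates dialect attributes in a fixed canonical form, so the composite $\mathtt{D}\circ\mathtt{UD}$ is the identity only on configurations already in that form; the initiality hypothesis, via Definition \ref{defn:dsysI}, supplies exactly this.
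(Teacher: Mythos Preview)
Your argument is correct. The paper states this lemma without proof, treating it as routine; your proposal supplies exactly the unfolding-of-definitions argument that justifies it, including the one point that genuinely needs checking (that the \texttt{[owise]} clause of \texttt{UDX} is what handles the attacker agent and network element, since neither carries a \texttt{conf} attribute), and your use of Definition~\ref{defn:dsysI} to reduce the second identity to the first is the right way to make precise why initiality is needed there.
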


Theorem \ref{theorem:bismmtC} tells us that
dialecting preserves properties of the underlying
protocol, in this case CoAP, expressible in
LTL (with out the next operator).  The corollary
says that allowing drop and/or delay attack capabilities does not change this.  That is,
dialecting does not protect against these
attack capabilities.
Theorem \ref{theorem:sbsimC} tells us that
dialecting  protects CoAP messaging
against replay an message redirection.

\begin{theorem}[Stuttering bisimulation in the absence of attacks.]\label{theorem:bismmtC}
For any initial system configuration \texttt{sysI}
with empty attack capabilities, 
  \texttt{sysI} is $\sim$-stuttering bisimilar to
  \texttt{D(sysI)} written
$$\mathtt{sysI} \sbsim \mathtt{D}(\mathtt{sysI})$$  
\end{theorem}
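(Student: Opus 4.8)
The plan is to take $\sim$ to be the relation of Definition~\ref{defn:ud-sim}, which pairs a dialected configuration $\mathtt{dsysC}$ with the plain CoAP configuration $\mathtt{UDX}(\mathtt{dsysC})$ obtained by stripping off the dialect wrappers and decoding whatever messages currently sit in the external network. Condition~1 of the bisimulation definition, $\mathtt{UDX}(\mathtt{D}(\mathtt{sysI})) = \mathtt{sysI}$, is immediate from Lemma~\ref{lemma:DUD}: an initial system has empty external and internal networks, so $\mathtt{UDX}$ coincides with $\mathtt{UD}$ on it and $\mathtt{UD}(\mathtt{D}(\mathtt{sysI})) = \mathtt{sysI}$. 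The single algebraic fact used throughout the step analysis is dialect correctness: by the defining equation for $\mathtt{f2}$ — together with the fact that each communicating pair shares a seed, via $\mathtt{mkSeeds}$, and $\mathtt{randSize}$ is common — $\mathtt{decodeDialect}$ inverts $\mathtt{applyDialect}$, so a message that has been dialected and placed in the external network contributes to $\mathtt{UDX}$ exactly the plain message an unwrapped endpoint would have placed in the CoAP network. In the absence of attacks the index counters $\mathtt{ixCtr}$ guarantee that every dialected message carries a distinct lingo index, so the wrapper's $\mathtt{used}$-check and the single-use policy never discard anything; this is what makes the relation a genuine (stuttering) bisimulation rather than merely attack-reducing. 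An attacker agent, if present with $\mathtt{caps}(\mathtt{mtC})$, is inert on both sides and plays no role.

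The inductive step is a case analysis on the rule fired. For condition~2a (a CoAP step from $\mathtt{sysC} = \mathtt{UDX}(\mathtt{dsysC})$ must be matched on the dialected side): a $\mathtt{devsend}$ or $\mathtt{ackTimeout}$ step is matched one-for-one by the identically named rule firing inside the corresponding $\mathtt{conf}$ attribute — the auxiliary functions $\mathtt{sndAMsg}$, $\mathtt{backOff}$, etc.\ are unchanged, and the produced message lands in that endpoint's internal network, which $\mathtt{UDX}$ folds into the CoAP network input side just as the CoAP rule does; a $\mathtt{rcv}$ step is matched by $\mathtt{ddevrcv}$, which decodes the delivered message, calls the same $\mathtt{rcvMsg}$, runs the same $\mathtt{doLog}$/$\mathtt{clearToLog}$, and deposits the response into the receiver's internal network, mirroring $\mathtt{rcv}$ exactly modulo $\mathtt{UDX}$; a $\mathtt{net}$ step is matched either by the external $\mathtt{net}$ rule (if the moved message is already dialected in the external network) or, if it still sits undialected in an internal network, by a $\mathtt{ddevsend}$ step that dialects it into the external network input — a $\mathtt{UDX}$-preserving ``glue'' move, by the correctness fact — followed by an external $\mathtt{net}$ step; and $\mathtt{tick}$ is matched by $\mathtt{tick}$, once one checks that on $\sim$-related states with empty internal networks (which $\mathtt{tick}$ requires) $\mathtt{mte}$ agrees — it is fixed by the external-network delays and the $\mathtt{sysC}$-level attributes, both preserved by $\mathtt{UDX}$ — and that dialected $\mathtt{passTime}$ merely propagates into the wrapped agent. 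Condition~2b is the mirror image: the internal $\mathtt{devsend}$/$\mathtt{ackTimeout}$/$\mathtt{net}$ rules, the external $\mathtt{net}$ rule, and $\mathtt{ddevrcv}$ each map to exactly one CoAP rule, $\mathtt{tick}$ to $\mathtt{tick}$, while $\mathtt{ddevsend}$ maps either to zero CoAP steps (pure glue) or, when several messages have accumulated inside a wrapper, to one CoAP $\mathtt{net}$ step per message it shifts to the output side of that internal network; here one also uses that $\mathtt{rcv}$ can never fire inside a $\mathtt{conf}$, since every message in an internal network has the wrapped endpoint as source rather than target.

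I expect the main obstacle to be precisely the bookkeeping of the two sides of each internal network under the glue rules: $\mathtt{ddevsend}$ removes one message and pushes the remainder to the output side, and $\mathtt{ddevrcv}$ pushes the $\mathtt{rcvMsg}$ response onto the input side, so $\mathtt{UDX}$ is not literally invariant under a glue step whenever more than one message is in flight inside a single wrapper — the case analysis must recognize that such a step additionally realizes a batch of CoAP $\mathtt{net}$ moves. Closely related, and also requiring care, is showing that the stuttering segments are finite, i.e.\ that no infinite run of dialected steps matches zero CoAP steps. I would handle this with a well-founded measure on dialected configurations counting messages in internal networks (weighted so that $\mathtt{ddevsend}$ and the internal $\mathtt{net}$ rule strictly decrease it), combined with the $\mathtt{mte}$-based progress characterization of Proposition~\ref{prop:mte-dialected}, which guarantees that from any non-terminal dialected state either a glue/internal step or $\mathtt{tick}$ is enabled, so the dialected side never stalls while the CoAP side can still move.
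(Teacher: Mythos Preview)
Your proposal follows the same route as the paper: take $\sim$ to be the relation of Definition~\ref{defn:ud-sim}, establish condition~(1) via Lemma~\ref{lemma:DUD}, and discharge conditions~(2a)/(2b) by a rule-by-rule case analysis. One point where you diverge is actually more faithful to the specified \texttt{ddevrcv} rule than the paper's own argument: since \texttt{ddevrcv} calls \texttt{rcvMsg} directly, it corresponds to a single CoAP \texttt{rcv} (as you say), whereas the paper's proof treats \texttt{ddevrcv} as a pure message-moving stutter in direction~(2.ii) and matches a CoAP \texttt{rcv} with the two-step sequence \texttt{ddevrcv;rcv} in direction~(2.i)---apparently written against an earlier version of the rule in which the decoded message was placed into the internal net and then consumed by an inner \texttt{rcv}. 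Your explicit identification of the internal-net bookkeeping under \texttt{ddevsend} (the shift of leftover messages to the output side) and your well-founded measure for stuttering finiteness are genuine refinements; the paper addresses neither point directly, relying instead on an informal $\Rightarrow^{<3}$ bound.
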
    

\begin{proof}
Let \texttt{sysI} be an initial system
configuration with $\mathtt{caps} = \mathtt{mtC}$
and endpoint identifiers \texttt{eids}.
Let \texttt{dsysI} be \texttt{D(sysI)}.
According to the definition above, it suffices to show, 
  $\mathtt{sysI} \sbsim \mathtt{dsysI}$ using the relation $\sim$ the following.
\begin{itemize}
\item[(1)] $\mathtt{sysI} \sim \mathtt{dsysI} $ 
\item[(2)]  
  If \texttt{sysC} is reachable from \texttt{sysI}, \texttt{dsysC} is reachable from \texttt{dsysI}, and $\mathtt{sysC} \sim \mathtt{dsysC}$ then
 \begin{itemize}
   \item[](2.i)
     if  $\mathtt{sysC} \Rightarrow \mathtt{sysC1}$ 
     (application of one rewrite rule) then there is some \texttt{dsysC1} such that
  $\mathtt{dsysC} \Rightarrow^{<3} \mathtt{dsysC1}$ 
    and $\mathtt{sysC1} \sim \mathtt{dsysC1}$
    \item[](2.ii)  
   if $\mathtt{dsysC} \Rightarrow \mathtt{dsysC1}$ 
         then there is some \texttt{sysC1} such that
      $\mathtt{sysC} \sim \mathtt{dsysC1}$ or 
 $\mathtt{sysC} \Rightarrow \mathtt{sysC1}$ 
  and $\mathtt{sysC1} \sim \mathtt{dsysC1}$.
\end{itemize}
\end{itemize}

\noindent
(1) is by definition of $\sim$ and lemma 
\ref{lemma:DUD}.  
(2.i) is proved by cases on the rule applied.
Using the notation above, assume the rule applies to endpoint \texttt{eid-j}. 
\begin{itemize}
\item[]\texttt{crl[devsend]:}
By
similarity, \texttt{eid-j} in \texttt{dsysC} has
the same attributes and hence the same
application message to send. Therefore the same
rule applies to the wrapped \texttt{eid-j} with
the same attribute updates and new delayed
messages added to the local network.

\item[]\texttt{crl[rcv]:}
The
corresponding message in \texttt{dsysC} has the
same delay and is either in the local net of
\texttt{eid-j} or in the global net. In the
former case, the rule applies in \texttt{dsysC}
with $\mathtt{dsysC} \Rightarrow
\mathtt{dsysC1}$. In the latter case, the rule
sequence \texttt{ddevrcv ; rcv} applies with
$\mathtt{dsysC} \Rightarrow^2 \mathtt{dsysC2}$.
In either case the new state is given by the
function \texttt{rcvMsg(epid,devatts,msg)} in
\texttt{dsysC} and \texttt{sysC}. Thus
$\mathtt{dsysC1} = \mathtt{dsysC2} \sim
\mathtt{sysC1}$.

\item[]\texttt{crl[ackTimeout]:}
By $\sim$ this rule is enabled in endpoint \texttt{eid-j} in \texttt{dsysC}.
Since the rule outcome depends only \texttt{devatts-j} we have
  $\mathtt{dsysC} \Rightarrow_{\mathtt{ackTimeout}} \mathtt{dsysC1} \sim \mathtt{sysC1}$.

\item[]\texttt{crl[net]:}
If the corresponding message is in the global net
then 
$\mathtt{dsysC} \Rightarrow_{\mathtt{net}}  \mathtt{dsysC} \sim \mathtt{sysC}$.
Otherwise the corresponding message is in the local net and 
$\mathtt{dsysC} \Rightarrow_{\mathtt{ddevsend;net}} \mathtt{dsysC1} \sim \mathtt{sysC1}$.

\item[]\texttt{crl[tick]:}
By definition of \texttt{mte}, if \texttt{tick} is not enabled in \texttt{dsysC} then the local net stub is not empty, and by $\sim$  the only local messages are outgoing (incoming messages have delay 0).  Thus 
$$\mathtt{dsysC} \Rightarrow_{\mathtt{tick;ddevsend+}}
\mathtt{dsysC1} \sim \mathtt{sysC1}$$
\end{itemize}

(2.ii) is also proved by cases on the rule applied.
\begin{itemize}
\item[]\texttt{crl[ddevsend]:}
A message is moved between a local net and the
global net, thus $\mathtt{dsysC1} \sim \mathtt{sysC}$
by just revising the network message correspondence.
          
\item[]\texttt{crl[ddevrcv]:}
A message is moved between  the global net and a local net, thus
 $\mathtt{dsysC1} \sim \mathtt{sysC}$.
\item[]\texttt{crl[devsend]:}
 by similarity, the corresponding rule instance is enabled in 
 \texttt{sysC} and the effect depends only on the base endpoint attributes and the message
 which are the same in \texttt{sysC} and \texttt{dsysC}.  Thus 
 $\mathtt{sysC} \Rightarrow_{\mathtt{devsnd}} 
 \mathtt{sysC1} \sim \mathtt{dsysC1}$
\item[]\texttt{crl[rcv]:}
  similar to \texttt{devsend} and (2.i)
\item[]\texttt{crl[ackTimeout]:}
  similar to  (2.1)
\item[]\texttt{crl[net]:}
  by $\sim$ the \texttt{sysC} net also has the message to move.
\item[]\texttt{crl[tick]:}
  if \texttt{mte} is a non-zero Nat in \texttt{dsysC} then the local stub networks
  are all empty and the \texttt{mte} for sysC has the same value. Thus
  $\mathtt{sysC} \Rightarrow_{\mathtt{tick}} \mathtt{sysC1} \sim \mathtt{dsysC1}$ since the function \texttt{passTime} has the same effect in both systems.
\end{itemize}
\end{proof}

\begin{corollary}
The bisimulation result of Theorem \ref{theorem:bismmtC} extends to the case that \texttt{caps} allows delay and/or drop.
\end{corollary}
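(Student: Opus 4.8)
\medskip
\noindent
The plan is to reuse the rule-by-rule case analysis in the proof of Theorem~\ref{theorem:bismmtC} almost verbatim, adding the single new rule \texttt{crl[attack]} and checking that a \texttt{drop} or \texttt{delay} action on a dialected message mimics exactly the same action on the corresponding undialected message. Recall that \texttt{drop} is \texttt{mc("","",true,mtC)} and \texttt{delay(n)} is \texttt{mc("","",true,act("","",n))}, so each such attack step removes the targeted message from the input side of the network and, in the \texttt{delay} case, re-inserts it with its delay incremented by $n$; in particular neither capability depends on or alters message content.

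\smallskip
\noindent
First I would observe that the general forms of \texttt{sysC} and \texttt{dsysC} in Appendix~\ref{apx:sbsim-proofs} already carry an attacker agent with a \texttt{caps} attribute, that \texttt{D} leaves attacker agents untouched, and hence that \texttt{D} and \texttt{UD} remain inverses (Lemma~\ref{lemma:DUD}); so clause~(1) of the bisimulation is immediate, as is the match of every pre-existing case, since none of the rules \texttt{devsend}, \texttt{rcv}, \texttt{ackTimeout}, \texttt{net}, \texttt{tick}, \texttt{ddevsend}, \texttt{ddevrcv} inspects the attacker agent. One cosmetic adjustment to the simulation relation is needed: the \texttt{doAttack} clause records the attacked message in the \texttt{kb} attribute, and this is the dialected message on the dialected side but the plain message on the CoAP side. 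I would resolve this by extending \texttt{UDX} so that it also decodes the delayed messages stored in the attacker's \texttt{kb}, using the same \texttt{decode} machinery it already applies to the network components; equivalently, one may simply note that \texttt{kb} is write-only for the \texttt{drop}/\texttt{delay} capabilities and therefore irrelevant to the LTL properties under consideration, which concern endpoint state only.

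\smallskip
\noindent
The one genuinely new case is \texttt{crl[attack]}, in both directions. For (2.i), suppose CoAP \texttt{attack} fires on a message \texttt{msg @ n} in the input component of the net. By the simulation relation, the corresponding message on the dialected side is either (a) a dialected message \texttt{msgd @ n} already sitting in the input component of the external network, in which case the same capability applies at once and $\mathtt{dsysC} \Rightarrow_{\mathtt{attack}} \mathtt{dsysC1} \sim \mathtt{sysC1}$, or (b) still an undialected message in some local internal network, in which case one \texttt{ddevsend} step dialects it and moves it into the external network first, so $\mathtt{dsysC} \Rightarrow_{\mathtt{ddevsend;attack}} \mathtt{dsysC1} \sim \mathtt{sysC1}$ --- still within the $\Rightarrow^{<3}$ stutter bound used in Theorem~\ref{theorem:bismmtC}. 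For \texttt{drop} the message disappears on both sides; for \texttt{delay(n)} it reappears with delay increased by $n$ on both sides, and because \texttt{delay} alters neither the content nor the lingo index, decoding (hence \texttt{UDX}) commutes with the delay increment, so similarity is preserved. For (2.ii) the dialected \texttt{attack} fires on a dialected message \texttt{msgd @ n} in the input component of the external network; by the simulation relation its decoded form \texttt{msg @ n} lies in the input component of the CoAP net, so the same capability applies and $\mathtt{sysC} \Rightarrow_{\mathtt{attack}} \mathtt{sysC1} \sim \mathtt{dsysC1}$.

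\smallskip
\noindent
The only thing requiring care is the familiar bookkeeping of where an in-flight message currently lives on the dialected side --- awaiting dialecting in a local internal net versus already dialected in the global external net --- which is precisely the stuttering already handled for \texttt{rcv} and \texttt{net}, together with the cosmetic \texttt{kb} fix above. I do not anticipate any real obstacle: \texttt{drop} and \texttt{delay} are exactly the things an unreliable network can already do, the dialect wrapper only ever discards messages that fail decoding or reuse a lingo index, and neither \texttt{drop} nor \texttt{delay} ever produces a message whose decoding fails or whose lingo index was previously used --- so the wrapper introduces no additional drops and is completely transparent to these capabilities.
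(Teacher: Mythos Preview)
Your proof is correct and follows essentially the same approach as the paper's: reduce to the single new rule \texttt{attack}, observe that \texttt{drop} and \texttt{delay} are content-blind, and match the attack step on each side up to a possible preceding \texttt{ddevsend} when the targeted message is still in a local net. Your treatment is in fact more careful than the paper's one-sentence argument, since you explicitly flag and resolve the \texttt{kb} mismatch (dialected versus plain messages stored by the attacker), which the paper's definition of \texttt{UDX} and its proof silently elide.
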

\begin{proof}
We only need to consider the rule \texttt{attack}.
delay/drop attacks do not concern the message
content or the deliverability of dialected
messages, thus an attack rule application in
\texttt{sysC} has a corresponding application in
\texttt{dsysC} (possible combined with a ddevsend
instance) and vice-versa, with the result being
$\sim$ related configurations.
\end{proof}

\begin{theorem} \label{theorem:sbsimC}
Let \texttt{DC(sysI,caps)} be  \texttt{D(sysI)} with
\texttt{caps} added to the attacker capabilities
of the dialected system configuration for any
initial system configuration \texttt{sysI}.  Then
for \texttt{sysI} with attacker capabilities
\texttt{drop} and \texttt{delay} 
and caps containing at most \texttt{replay},\texttt{redirect} capabilities
$$\mathtt{sysI} \sbsim \mathtt{DC}(\mathtt{sysI},\mathtt{caps})$$
\end{theorem}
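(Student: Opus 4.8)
The plan is to exhibit a simulation relation $\sim$ and verify the two stuttering-bisimulation conditions, reusing almost verbatim the case analysis from the proof of Theorem~\ref{theorem:bismmtC} and the corollary to it. Concretely, I would extend the undialecting map \texttt{UDX} of Definition~\ref{defn:ud-sim} to a map \texttt{UDXA} that, besides unwrapping the dialect meta-agents and reconciling the internal/external networks as \texttt{UDX} does, (a) deletes the \texttt{replay} and \texttt{redirect} capabilities from the attacker's \texttt{caps} attribute, and (b) \emph{discards doomed dialected messages} from the external network: a pending \texttt{m(src,tgt,dc(dcbits,ix))} is doomed if \texttt{tgt}'s \texttt{used} map already records \texttt{ix} for \texttt{src}, or if \texttt{dcbits} was not produced with the secret \texttt{tgt} holds for \texttt{src} (a redirected copy). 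For each remaining family of non-doomed dialected messages sharing a triple $(\mathtt{src},\mathtt{tgt},\mathtt{ix})$ — the original together with any live \texttt{replay} copies of it — \texttt{UDXA} keeps a single undialected representative carrying the least delay occurring in the family. Define $\mathtt{sysC}\sim\mathtt{dsysC}$ iff $\mathtt{sysC}=\mathtt{UDXA}(\mathtt{dsysC})$. Condition~1, $\mathtt{sysI}\sim\mathtt{DC}(\mathtt{sysI},\mathtt{caps})$, is immediate: an initial system has empty networks, $\mathtt{D}$ installs empty \texttt{used} maps and \texttt{ixCtr}s, and \texttt{caps} on the plain side is exactly $\{\texttt{drop},\texttt{delay}\}$ (Lemma~\ref{lemma:DUD} handles the wrapping).

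For condition~2 I would argue by cases on the rule applied, in both directions. For every rule other than \texttt{attack} the analysis is exactly that of Theorem~\ref{theorem:bismmtC}: \texttt{devsend}, \texttt{rcv}, \texttt{ackTimeout} depend only on base attributes and message \emph{content}, which \texttt{UDXA} preserves; a \texttt{net} or \texttt{ddevsend} step moves a message between a local net and the global net and is a stutter under \texttt{UDXA}; a \texttt{ddevrcv} that decodes successfully matches an \texttt{rcv} (possibly preceded by a \texttt{net}) on the plain side; and — the one genuinely new sub-case — a \texttt{ddevrcv} that \emph{fails} to decode merely removes the offending (necessarily doomed) message from the external net without touching the base configuration, hence is a stutter because \texttt{UDXA} had already dropped that message. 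The \texttt{tick} case goes through the \texttt{mte} characterisations of Propositions~\ref{prop:mte-coap} and~\ref{prop:mte-dialected}: a non-empty local net forces $\mathtt{mte}=0$, so before time passes all pending local messages are dialected into the global net, after which \texttt{passTime} acts identically on both sides.

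The remaining and principal case is the \texttt{attack} rule. For \texttt{drop} and \texttt{delay} the corollary to Theorem~\ref{theorem:bismmtC} applies unchanged, since these capabilities ignore content and decodability. For \texttt{redirect} I would use that $\mathtt{D}(\mathtt{sysI})$ assigns \emph{pairwise-distinct} shared secrets — by construction of \texttt{mkSeeds}, the seed \texttt{seedTo(B)} of $A$ equals \texttt{seedFr(A)} of $B$ and differs for every other ordered pair, using injectivity of identifier concatenation for the device identifiers in play — so a redirected dialected message is encoded with a secret its new recipient does not hold and, by the decoding-failure property of $f_2$ in Section~\ref{sec:dialect-fns}, can never be accepted: it is doomed. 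Hence a reactive redirect (original left in place, doomed copy added) is a stutter, and an active redirect (original replaced by the doomed copy) matches a \texttt{drop} of that message on the plain side. For \texttt{replay}$(n)$, the copy carries the \emph{same} lingo index \texttt{ix} as the original, so by the single-use \texttt{used}-map policy at most one member of the $(\mathtt{src},\mathtt{tgt},\mathtt{ix})$ family is ever accepted and the rest are dropped as used; moreover whichever member is delivered first triggers the same \texttt{rcvMsg} call (identical content), so the non-determinism is harmless. If $n>0$ the copy does not lower the family's least delay, so creating it is a stutter; if the attacker later applies \texttt{drop} to the original, the family's least delay rises from $d$ to $d+n$, which is matched by a \texttt{delay}$(n)$ of that message on the plain side. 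In every sub-case the resulting configurations stay $\sim$-related, completing the induction.

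I expect the main obstacle to be pinning down \texttt{UDXA} precisely and discharging the invariant it presupposes on reachable dialected configurations: that every pending dialected message is either the unique live encoding of its $(\mathtt{src},\mathtt{tgt},\mathtt{ix})$ family, a \texttt{replay} copy of it, or a \texttt{redirect} copy that is provably doomed, and that once the recipient records \texttt{ix} as used the whole family becomes doomed. The ``least-delay representative'' convention and its interaction with subsequent \texttt{tick}/\texttt{net}/\texttt{ddevrcv} steps is where the bookkeeping is delicate. It is also worth flagging that the argument leans essentially on two modelling choices — pairwise-distinct secrets and the single-use lingo policy — without which a \texttt{redirect} or \texttt{replay} could be accepted and the bisimulation would break; the statement should be read as scoped to $\mathtt{D}$-generated initial configurations, which guarantee both. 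Equivalently, one may factor the result as $\mathtt{sysI}\sbsim\mathtt{D}(\mathtt{sysI})$ (the corollary) composed with $\mathtt{D}(\mathtt{sysI})\sbsim\mathtt{DC}(\mathtt{sysI},\mathtt{caps})$ proved by the same doomed-message argument, since $\sim$-stuttering bisimilarity composes.
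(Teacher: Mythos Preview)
Your approach is correct and closely parallels the paper's, with one interesting technical difference in how the simulation map is defined.  The paper also introduces an adapted undialecting map (called \texttt{UDA}) and proves that $\mathtt{dsysC}\sim\mathtt{UDA}(\mathtt{dsysC})$ is a stuttering bisimulation by the same rule-by-rule case analysis you sketch: the non-attack rules reuse Theorem~\ref{theorem:bismmtC} and its corollary verbatim; an active \texttt{redirect} collapses to a \texttt{drop} on the plain side; a \texttt{replay} copy is eventually rejected by the single-use \texttt{ix} check in \texttt{ddevrcv} and counts as a stutter.

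Where you diverge is in the mechanism for distinguishing attacker-generated messages.  The paper imagines executions \emph{annotated} with step indices: a message emitted by \texttt{devsend} at step~$j$ carries annotation~$j$, while a \texttt{replay} or \texttt{redirect} copy produced at step~$k$ carries the pair~$(j,k)$; \texttt{UDA} then simply keeps single-annotated messages and discards doubly-annotated ones.  Your \texttt{UDXA} instead inspects the actual dialect state --- the \texttt{used} map and the seed mismatch --- to recognise ``doomed'' messages, and collapses each live $(\mathtt{src},\mathtt{tgt},\mathtt{ix})$ family to a least-delay representative.  Your characterisation is intrinsic (no ghost state) and makes explicit \emph{why} the copy is harmless, at the cost of the representative bookkeeping you correctly flag as delicate; the paper's annotation device sidesteps that bookkeeping but relies on a history variable.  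Both are standard bisimulation techniques and both work here.

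One small over-engineering in your proposal: the case ``the attacker later applies \texttt{drop} to the original'' cannot arise.  The \texttt{attack} rule (and hence \texttt{replay}) moves its target into the \emph{second} component of \texttt{net}, where no further attack rule matches; so once replayed, the original cannot be dropped, the family's least delay never increases, and you never need to match with a \texttt{delay}$(n)$ on the plain side --- which is fortunate, since that would require a \texttt{delay} capability with exactly that~$n$ to be available there, which is not guaranteed.
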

\begin{proof}

Imagine execution of dialected systems annotates each message with the number of the step that
generated/sent it (the \texttt{devsend} rule). Copies of messages generated by
the attack rule using the \texttt{replay} capability are annotated by the creation step of the original and the replay
attack step that created the copy; and redirected messages are annotated by the creation step and
the redirection step. The CoAP rules are unchanged.
Define the function \texttt{UDA} to adapt \texttt{UD}
so that the annotation is dropped from single annotation messages and for doubly annotated messages,  
only the original message (sans annotation)
is in the network image. 
Copies are dropped
and redirected messages are dropped from the network image.  Furthermore, attacker capabilities are reduced to \texttt{delay},\texttt{drop}.

Now we show $\mathtt{dsysC} \sim
\mathtt{UDA}(\mathtt{dsysC})$ is a stuttering
bisimulation on configurations reachable from
\texttt{\texttt{D}(\texttt{sysI},\texttt{dcaps})}
where \texttt{dcaps} contains only
\texttt{replay},\texttt{redirect} capabilities and
the attack capabilities of \texttt{sysI} are among
\texttt{drop}, \texttt{delay}.

\noindent
\textbf{1.} For the $\mathtt{dsysC} \Rightarrow \mathtt{dsysC1}$ direction let $\mathtt{sysC} = \mathtt{UDA}(\mathtt{dsysC})$ and show
$\mathtt{sysC} \Rightarrow^* \mathtt{sysC1} = \mathtt{UDA}(\mathtt{dsysC1})$ If the
rule used is not the attack rule using replay or redirect capability or
delivery of the target message of such an attack, 
this follows by the proof of Theorem \ref{theorem:bismmtC} and its corollary. 
We consider the rules
in remaining cases.
\begin{itemize}
\item 
\texttt{[attack(replay(n))]} at step k.
Let $\texttt{ddm}^j$ be the message being replayed ($j < k$) and 
$\mathtt{ddm1}^{j,k} = delay(\mathtt{ddm},n)^{j.k}$ be the copy. Then $\mathtt{UDA}(\mathtt{dsysC}) => \mathtt{UDA}\mathtt{(dsysC1)}$ by the \texttt{net} rule since $\mathtt{ddm1}^{j,k}$
  is not in the \texttt{UDA} image but the attack
  moves the original from the first to the second
  network component.

\item \texttt{[ddrcv]} of a replayed message that is delivered to the wrapped endpoint.
If this is the original, then \texttt{sysC} can also receive the message. If this is the copy (but the original has not been received, thus the copy was not further delayed) then again \texttt{sysC} can receive the original.

\item \texttt{[ddrcv]} and drop a replayed message.
Either the original or copy has already been received and the replayed message will be discarded
by \texttt{UDA} thus $\mathtt{sysC} = \mathtt{sysC1} = \mathtt{UDA}(\mathtt{dsysC1})$.

 \item \texttt{[attack(redirect(epid0,epid1))]}.
 Here $\mathtt{sysC} \rightarrow_{\mathtt{drop}} \mathtt{sysC1} = \mathtt{UDA}(\mathtt{dsysC1})$  since redirected messages are dropped.

\item \texttt{[ddrcv]} of a redirected message.
This has the effect of a \texttt{drop}. The 
redirected message has already been dropped 
in \texttt{sysC}, and is not in the \texttt{UDA} image.  Thus  
$\mathtt{sysC} = \mathtt{UDA}(\mathtt{dsysC}) = \mathtt{UDA}(\mathtt{dsysC1}) = \mathtt{sysC1}$.
\end{itemize}

\noindent 
\textbf{2.} For the  $\mathtt{sysC} \Rightarrow \mathtt{sysC1}$ direction, for any \texttt{dsysC} such that $\mathtt{UDA(dsysC)} = \mathtt{sysC}$
the argument that $\mathtt{dsysC} \Rightarrow^{<3}  \mathtt{dsysC1}$  with $\mathtt{UDA}(\mathtt{dsysC1}) = \mathtt{sysC1}$ 
is the same as the proof of Theorem \ref{theorem:bismmtC} as the same rules are available
for \texttt{sysC}.
\end{proof}

\begin{corollary}
With notation as for Theorem \ref{theorem:sbsimC} and letting
\emph{caps} be a reactive attacker capability set we
have:
$\mathtt{sysI} \sbsim \mathtt{DC}(\mathtt{sysI}, caps)$
\end{corollary}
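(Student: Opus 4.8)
The plan is to reduce the statement to Theorem \ref{theorem:sbsimC} by observing that a reactive capability set is, action by action, nothing more than a bundle of the \texttt{replay} and \texttt{redirect} capabilities already handled there. A reactive capability set consists of multi-action capabilities \texttt{mc(tpat,spat,false,acaps)}; since the simple reactive attacker ($\mathtt{CapsR0}$) is the special case of $\mathtt{CapsR1}$ with empty patterns and \texttt{acaps} a singleton, it suffices to treat $\mathtt{CapsR1}$. When the \texttt{attack} rule fires such a capability on a target message of delay $n$ in the network input side matching $(\texttt{tpat},\texttt{spat})$, its effect is (i) to move that message, unchanged, to the output side --- the flag being \texttt{false}, the original is retained --- and (ii) to add, for each \texttt{act(tpat1,spat1,d)} in \texttt{acaps}, one copy whose content and lingo index equal the original's, whose delay is $n+d$, and whose target and source are the original's unless overridden by a non-empty pattern. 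Each copy is therefore either a pure delayed replay (both patterns empty) or a possibly delayed redirection (some pattern a specific identifier), that is, exactly an instance of \texttt{replay} or of \texttt{redirect} applied to that message. Note also that, because both the \texttt{attack} and \texttt{net} rules consume their target from the input side while a reactive action (like \texttt{replay} and \texttt{redirect}) pushes the original to the output side, a message can be attacked at most once; in particular it cannot be both copied and later dropped, so the annotation depth stays $\le 2$, as in Theorem \ref{theorem:sbsimC}.

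I would then re-use the annotation scheme and the function $\mathtt{UDA}$ from the proof of Theorem \ref{theorem:sbsimC}: a copy produced by a reactive action is doubly annotated (by the step that created the message it copies and by the attack step) and is erased from the network image by $\mathtt{UDA}$; the original carries a single annotation, which $\mathtt{UDA}$ drops; and $\mathtt{UDA}$ strips the reactive capabilities, so that $\mathtt{UDA}(\mathtt{DC}(\mathtt{sysI},\mathit{caps})) = \mathtt{sysI}$ by Lemma \ref{lemma:DUD}. One then verifies that $\mathtt{dsysC} \sim \mathtt{UDA}(\mathtt{dsysC})$ is a stuttering bisimulation exactly as before, adding the following cases. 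A reactive \texttt{attack} step is matched on the image by a single \texttt{net} step moving the undialected original from input to output side, every copy being invisible under $\mathtt{UDA}$ (if $(\texttt{tpat},\texttt{spat})$ fails to match, neither side moves). For \texttt{ddevrcv} of a replay copy, the \texttt{replay} case of Theorem \ref{theorem:sbsimC} applies verbatim: the original, already moved to the output side, cannot be dropped, so it is delivered; if it was already delivered the wrapper rejects the copy, its lingo index being used, which is a stuttering step; and the copy is delivered before the original only when $d = 0$, in which case the image delivers the original, which sits at delay $0$. For \texttt{ddevrcv} of a redirect copy, since the $\mathtt{D}$ transform assigns each ordered endpoint pair a distinct shared seed, the wrapper at the (claimed) target decodes with the wrong secret whenever the target or source was altered, so the copy is dropped --- again a stuttering step, the copy being invisible under $\mathtt{UDA}$. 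The converse direction ($\mathtt{sysC} \Rightarrow \mathtt{sysC1}$) is unchanged from Theorem \ref{theorem:sbsimC}, since $\mathtt{sysC}$ retains at most the $\mathtt{sysI}$-level \texttt{drop}/\texttt{delay} capabilities and the same protocol rules.

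The only genuinely new bookkeeping is that one reactive \texttt{attack} step can emit a whole set of copies rather than a single one; this is handled by iterating the per-copy argument over \texttt{acaps} and letting the annotation record, for each copy, which action of which capability produced it. The match component $(\texttt{tpat},\texttt{spat})$ adds nothing, as it only restricts which messages a reactive capability may fire on. I expect the only mild obstacle to be stating the annotation and $\mathtt{UDA}$ bookkeeping --- and the ``original is protected once copied'' observation --- precisely enough that every case reduces cleanly to its counterpart in Theorem \ref{theorem:sbsimC}; once that is in place, no new substantive argument is required.
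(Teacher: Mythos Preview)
Your proposal is correct and follows essentially the same approach as the paper: both observe that a reactive capability \texttt{mc(tpat,spat,false,acts)} decomposes, action by action, into \texttt{replay}/\texttt{redirect} instances on the matched message, so the annotation scheme and $\mathtt{UDA}$ from Theorem~\ref{theorem:sbsimC} carry over unchanged and the bisimulation argument is the same. Your write-up is considerably more explicit about the bookkeeping (the single-attack-per-message observation, the per-copy iteration, the pairwise-seed reason redirected copies fail decoding), but the paper's proof is simply a terse statement of the same reduction.
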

\begin{proof}
Reactive caps have the form \texttt{mc(tpat,spat,false,acts)} where
acts is a set of elements of the form \texttt{act(tp,sp,n)} with \texttt{tpat,tp} being message
target patterns
and \texttt{spat,sp} being message source patterns.
An attack using  capability
\texttt{mc(tpat,spat,false,acts)} at step $k$ with matching target
$M^j$ generated at step $j < k$ is a set of messages
$M^j, M_1^{j,k} \ldots M_l^{j,k}$ ($M_i^{j,k}$ is the result of action $i$ and behave the same as a delayed
or redirected message).
Thus the bisimulation argument for reactive attacks
is the same as that used 
to prove Theorem \ref{theorem:sbsimC}.
\end{proof}



\section{Application Level Experiments}
\label{apx:app-experiments}

To study attack models in scenarios with richer
message passing patterns we define a simple
application layer to drive CoAP messaging, and a
generic reactive attack model to support a more
symbolic search for attacks. We also define basic
state properties useful in specifying application
properties vulnerable to attack. Two simple
applications are defined with associated
safety invariants. The bridge application
models automatic control of a moveable bridge over
a waterway. The Pick-n-Place (PnP) application
models a smart manufacturing tool that picks up
parts from one place an puts them down at another
place -- for example conveyor belts. Reachability
analysis is used to verify the invariants in the
absence of attack, and to find attacks given
different levels of attacker capability. Repeating
the attack scenarios with dialected versions of
the applications shows that dialecting protects
against reactive attackers.

\subsection{Application Level Specification}
\label{asubsec:app-level}

An application consists of a set of CoAP devices
(endpoints), each with its set of resources, and
one or more having an application layer playing a
specific role in the application. When a message
is received, it is processed at the CoAP level as
usual and then passed to the application layer, if
any, for additional processing. The application
layer has a knowledge base (AKB) binding names to
values and a set of rules specifying response
to messages. A rule consists of a message pattern
and a set of conditional actions. A message is
processed at the application level by executing
the conditional actions associated to each rule
whose pattern matches the message. An action can
update the application knowledge base or the CoAP
layer resource mapping, or create an application
message to be sent.
In the following we give a little more detail
of the application layer.

The AKB has the same structure as the CoAP layer
resource map--a multiset of elements of the form
\begin{small}
\begin{verbatim}
       rb(name,val)
\end{verbatim}
\end{small}
\noindent
where \texttt{name} and \texttt{val} are 
strings.  

An application rule (sort \texttt{ARule}) has the
form
\begin{small}
\begin{verbatim}
        ar(mpat,cacts)
\end{verbatim}
\end{small}
\noindent
where \texttt{mpat} is a message pattern (sort
\texttt{MPat}) and \texttt{cacts} is a multiset
of conditional actions (sort \texttt{CAct}).

A message pattern is either a pattern to match
request messages
or a pattern to match response messages.  (Recall that a CoAP message is either a request
or a response or an acknowledgment.  We do not let
the application layer mess with the CoAP layer
reliability mechanism, so no rule will match
a purely ACK message.

A request  pattern
\begin{small}
\begin{verbatim}
        req(srcP,methP,pathP,valP) 
\end{verbatim}
\end{small}
\noindent
is used to match a request message. A message matches
this pattern if the message source matches
\texttt{srcP}, the message method matches
\texttt{methP}, resource path of the message matches
\texttt{pathP}, and the body, if any, matches
\texttt{valP}. The argument patterns are either
variables (\texttt{v(vname)}) or string constants.
Constants must match exactly, variables match any
string, and if the message pattern matches a message,
the variable bindings are used to instantiate variables
in the conditional actions. (In the case of an
empty body, the value pattern is ignored.)

A response message matches the response pattern
\begin{small}
\begin{verbatim}
        rsp(srcP,amidP,bool,valP) 
\end{verbatim}
\end{small}
\noindent
if, as for requests, \texttt{srcP} matches the message source, \texttt{valP} matches the message body, \texttt{amidP} matches the message token (that identifies the associated request) and bool is \texttt{true} iff the response code is a success code.

A conditional action has the form 
\texttt{ca(cond,acts)}. A condition
is an equality, a disequality, a disjunction
or a conjunction:
\begin{small}
\begin{verbatim}
        eq(pat0,pat1)
        neq(pat0,pat1)
        disj(conds)
        conj(conds)
\end{verbatim}
\end{small}
\noindent
where \texttt{pat0}, \texttt{pat1} are variables
or string constants as above, and \texttt{conds}
is a set of conditions. \texttt{conj(none)}
corresponds to \texttt{true} and
\texttt{disj(none)} corresponds to \texttt{false}

An action has one of the following forms
\begin{small}
\begin{verbatim}
        send(amid,tgt,type,meth,path,val)
        set(var,val)  
        put(var,val)
\end{verbatim}
\end{small}
\noindent

The action arguments are variables or string
constants. When an action is executed the
variables are interpreted in a context
consisting of binding resulting from the message
pattern matching, the AKB, and the CoAP resource
map. The \texttt{send} action constructs an
application message and puts it in the CoAP
layer \texttt{sendReqs} attribute. The
\texttt{set} action updates the AKB, while the
\texttt{put} action updates the CoAP layer
resource map.

As an example, here is the rule from the bridge controller rule set for handling a message from the boat sensor signalling that a boat has arrived 
(by putting the \texttt{"boat"} resource
to be \texttt{"here"}).
\begin{small}
\begin{verbatim}
        op rcvBoatArr : -> ARule .
        eq rcvBoatArr  =
           ar(req("bs","PUT","boat","here"),
              ca(eq(v("status"),"idle"),
                 send("GateCL","ga","NON", 
                      "PUT","gate", "close")
                 set("status","working") ) ) .
\end{verbatim}
\end{small}
\noindent
The rule \texttt{rcvBoatArr} has one conditional
action, the condition being that the value of
\texttt{"status"} is \texttt{"idle"} (in the AKB).
If so, the controller sends a request to the gate
guarding traffic over the bridge to close and sets
the controller \texttt{"status"} variable to
\texttt{"working"}. The string \texttt{"GateCL"}
is used to construct the token that identifies the
gate close request message and its response.

The rule \texttt{rcvGateClose} specifies what
the controller should do when the gate responds that the gate is successfully closed.
\begin{small}
\begin{verbatim}
          op rcvGateClose : -> ARule .
          eq rcvGateClose =
             ar(rsp("ga","GateCL",true,""),
                ca(conj(none), 
                   send("BridgeOp","br","NON",
                        "PUT","bridge","open"))) .
\end{verbatim}
\end{small}
\noindent

The response message is identified by the
expected sender, \texttt{"ga"}, and the message
identifier, \texttt{"GateCL"}. The condition is
\texttt{true} and the action is to request the
bridge to open, now that the gate is closed (and
any traffic cleared).

To add an application layer to CoAP messaging, an
attribute \texttt{aconf} is added to the attributes of
devices with a non-trivial application layer. This
attribute has two arguments: \texttt{akb}, the
application knowledge base; and \texttt{arules}, the
application rule set. The CoAP receive rule (and the
dialect receive rule) are extended with a call to the
\texttt{doApp} function following the call to the
\texttt{rcvMsg} function.

\begin{small}
\begin{verbatim}
    /\ toSend(dmsgs) devatts1 := rcvMsg(epid, devatts, msg) --- original
    /\ devatts0 := doApp(msg,devatts1)  --- new
\end{verbatim}
\end{small}
\noindent

The \texttt{doApp} function finds the matching
rules from the application layer, if any, and
executes the associated conditional actions,
returning the updated attribute set, which may
include updates to the AKB, the resource map,
and/or the \texttt{sendReqs} attribute.

\subsection{Generic Reactive Attack Model}
\label{asubsec:generic-attack}

We generalize the reactive attacker model
 to allow search to do more of the work
of finding attacks.
For this we introduce the generic attack capability
\texttt{mcX(n)} that operates on \texttt{GET} or
\texttt{PUT} request messages. Given a request
message, \texttt{m(tgt,src,c) @ d}, in the network
input part, and an attack capability \texttt{mcX(n)},
the attack rule non-deterministically selects 
a message to add to the net output part from
a set messages of
the form \texttt{m(tgt0,src0,c) @@ d + n} where
\texttt{tgt0} identifies an endpoint that has the
resource path mentioned in the request content.
The delay construct \texttt{@@} behaves like
the original \texttt{@} except for delay 0. A
message \texttt{msg @@ 0} may be delivered at the
current time or at any later time.  In the case
of a GET request, a redirect capability
\texttt{mc(src0,tgt0,false,act(src,tgt,0)} is
added to the attacker capability set so that
the redirected version of \texttt{GET} response will be accepted by the sender, otherwise the attack will  have no effect.  

The ability to observe the method and path of a
message is not an ability of the attacker, but
rather a mechanism to prune the search space to
eliminate useless branches. This is consistent
with the goal of simply finding attacks, rather
than being concerned with the difficulty or
probability of a successful attack.

\subsection{Basic Application State Properties}
\label{asubsec:basic-app-properties}

The application model is based on the state of
CoAP resources and an interpretation of  PUTing
of some resource values as actuator commands.
We interpret receipt and processing of such PUTs
as initiating a command, and receipt of
the success response as completion of the
command.  This interpretation is formalized
by the following properties for constructing application invariants.
\begin{itemize}
\item \texttt{hasV(conf,epid,path,val)} is true in
a configuration \texttt{conf} if the device with
identifier \texttt{epid} has the binding
\texttt{rb(path,val)} in its resource map.

\item \texttt{hasAV(conf,epid,path,val)} is true
in a configuration \texttt{conf} if the device
with identifier \texttt{epid} has the binding
\texttt{rb(path,val)} in its application layer AKB

\item \texttt{isV(conf,ctl, epid, aid, path, val)}
is true in a configuration \texttt{conf} if
\texttt{hasV(conf,epid,path,val)} and the device
with identifier \texttt{ctl} does not have a
pending response for a message with token matching
\texttt{aid}. This is used in a context where
\texttt{ctl} is the device that sent a message
setting the resource at \texttt{path} and the
response has been received, indicating the
associated action/command has completed.

\item \texttt{becomeV(conf,ctl,epid,aid, path,
val)} is true in a configuration \texttt{conf} if
\texttt{hasV(conf,epid,path,val)} holds and  the
device with identifier \texttt{ctl} has pending
response for a message with token matching
\texttt{aid}. This is used in a context where
\texttt{ctl} is the device that sent a message
setting the resource at \texttt{path} and while
the response is in transit, the associated
action/command is ongoing.

\item \texttt{aKbNotTok(conf,eveid, tok)} is true
in a configuration \texttt{conf} if the attacker
(identifier \texttt{eveid}) has a message with
token \texttt{tok} in its knowledge base (of
seen/attacked messages). This can be used to force
search to find alternative attacks.

\end{itemize}

\subsection{Moving Bridge Control Application}
\label{asubsec:bridge-scenario}

The context of this application is a waterway, for
example a canal in Amsterdam, with a bridge used by
vehicles and pedestrians to cross the waterway. For
boats to pass under the bridge, the bridge must open.
Of course, the bridge must be cleared of traffic and
traffic must be blocked while the bridge is open. The
bridge application automates this process.
The roles of the application are
\begin{itemize}
\item \texttt{bctl}: Bridge Controller
\item \texttt{bs}: Boat Sensor 
-- notifies bctl when a boat wants to pass
\item \texttt{ga}: Gate--controls bridge traffic at each end of the bridge
\item \texttt{br}: Bridge movable part
\end{itemize}

Only the bridge controller, \texttt{bctl}, has an application layer. The other roles are modeled by the CoAP message semantics. The bridge controller has states \emph{idle} (initially) or \emph{working}. The gate is either \emph{open} (initial state), \emph{closing}, \emph{closed}, or \emph{opening}. The bridge is either \emph{closed} (initial state), \emph{opening}, \emph{open}, or \emph{closing}.

Informally the bridge application protocol is given
by the following message sequence 

\begin{small}
\begin{verbatim}
        bs -> bctl : boat here -- a boat wants to pass
        bctl becomes working
        bctl -> bs : received
        bctl -> ga : close -- clear traffic and close
        ga -> bctl : success -- gate is closed
        bctl -> br : open
        br -> bctl : sucess  -- bridge is open
        bctl -> bs : pass -- boat can pass
        bs -> bctl : success -- boat has passed
        bctl -> br : close 
        br -> bctl : success  -- bridge is closed
        bctl -> ga : open
        ga -> bctl : success  -- gate is open
        bctl becomes idle
\end{verbatim}
\end{small}
\noindent
To simplify specification of bridge application scenario initial configurations we define a function  \texttt{mkDevA} to construct
initial configurations for the participating
endpoints.  The arguments consist of 
\texttt{amsgl} (application messages to be sent);
\texttt{rbnds} (the CoAP level resource map);
\texttt{(abnds,arules)} (the application layer knowledge base and rules); \texttt{j} the initial
delay for sending messages; and \texttt{msgSD}
(the delay before a newly sent message can be delivered).  The function \texttt{mkInitDevAttrs}
generates the CoAP layer attributes as for the
experiments described in Sections  \ref{sec:reactive-attacks} and \ref{apx:coap-vulnerabilities}.

\begin{small}
\begin{verbatim}
       mkDevA(epid,j,amsgl,rbnds,msgSD,abnds,arules) 
         =
        [ epid | sendReqs(amsgl) rsrcs(rbnds)
                 aconf(abnds,arules)
                 sndCtr(j) mkInitDevAttrs(msgSD) ]  .
\end{verbatim}
\end{small}
\noindent
In the experiments we used two initial configurations,\
\texttt{brInit} that executes a single round of the protocol and \texttt{brInit2(n)} that executes two
rounds of the protocol, the second round delayed by
\texttt{n}.
 
\begin{small}
\begin{verbatim}
        eq brInit = bctl bs ga br net(mtDM,mtDM) .
        eq brInit2(n:Nat) = 
              bctl bs2(n:Nat) ga br net(mtDM,mtDM) .
\end{verbatim}
\end{small}
\noindent
where the bridge controller, \texttt{bctl}, is defined
by
\begin{small}
\begin{verbatim}
         bctl = mkDevA("bctl",1,nilAM,rb("boat", "none"),
                       2,rb("status","idle"), bridge-rules) .
\end{verbatim}
\end{small}
\noindent
The boat sensor versions are defined by
\begin{small}
\begin{verbatim}
  eq bs = mkDevA("bs",1,boatHereAMsg,mtR,6,mtR,none) .
  eq bs2(n:Nat) = mkDevA("bs",1,
           boatHereAMsg ; amsgd(n:Nat) ; boatHereAMsg, 
           mtR,6, mtR,none) .
\end{verbatim}
\end{small}
\noindent
where \texttt{boatHereAMsg} is a \texttt{PUT} request to
\texttt{"bctl"} to set \texttt{"boat"} to \texttt{"here"}.
The initial configurations of the gate, \texttt{ga}
initially \texttt{"open",} and bridge, \texttt{br}
initially \texttt{"close"}, are defined as follows.

\begin{small}
\begin{verbatim}
      eq ga = 
        mkDevA("ga",1,nilAM,rb("gate","open"),4,mtR,none) .
      eq br = 
        mkDevA("br",1,nilAM,rb("bridge","close"),6,mtR,none) .
\end{verbatim}
\end{small}
\noindent
There are 6 rules for the controller application layer.
We list the message ids of any messages sent by each
rule as they are used in specifying properties of application
executions.
\begin{itemize}
\item
\texttt{rcvBoatArr}: in response to message signaling a boat
waiting to pass.  The controller requests the gate to close
(message id \texttt{"GateCL"}). Discussed in Section
\ref{asubsec:app-level}.

\item
\texttt{rcvGateClose}: in response to confirmation that the
gate is closed, the controller requests the bridge to open
(message id \texttt{"BridgeOp"}). Discussed in Section
\ref{asubsec:app-level}.

\item
\texttt{rcvBridgeOpen}: in response to confirmation
that the bridge is open, the controller sends a message
to the boat sensor to allow the boat to pass (message id 
\texttt{"BSPass"}).

\item \texttt{rcvBoatPass}:  in response to confirmation
that the boat has passed, the controller requests the
bridge to close (message id \texttt{"BridgeCl"}).

\item \texttt{rcvBridgeClose}:  in response to confirmation
that the bridge is closed, the controller requests the
gate to open (message id \texttt{"GateOp"}).

\item \texttt{rcvGateOpen}:  in response to confirmation
that the gate is open, the controller sets its status to
\texttt{"idle"} ready for the next boat.
\end{itemize}

We require that the following invariants hold
for bridge application instances. For each
invariant we define the boolean function
corresponding to the negation and use that to
search for violations under different
conditions (without or with attacker, without
or with a dialect layer).

\begin{itemize}
\item \texttt{bclIdle}: If the bridge controller status is \texttt{"idle"}
then the bridge is closed and the gate is open.
The negation of this property is defined by
\begin{small}
\begin{verbatim}
        bclIdleInv(conf,bcid,brid,gid)
\end{verbatim}
\end{small}
\noindent
where \texttt{bcid} is the controller identifier,
\texttt{brid} is the bridge identifier and \texttt{gid} is
the gate identifier.
 
\item \texttt{gateNCl}: If the gate is not closed (thus it is opening, open,
or closing) then the bridge is closed. The negation is
formalized by the function
\begin{small}
\begin{verbatim}
        gateNClInv(conf,bcid,brid,gid) .
\end{verbatim}
\end{small}
\noindent

\item  \texttt{brNCl}: If the bridge is opening, open, or closing then the gate is closed. The negation is formalized by the function
\begin{small}
\begin{verbatim}
        brNClInv(c:Conf,bcid,brid,gid) .
\end{verbatim}
\end{small}

\item \texttt{boatPass}: If a boats is passing then the bridge is open and the gate is closed. The negation is formalized by the function

\begin{small}
\begin{verbatim}
         boatPassInv(conf,bcid,bsid,brid,gid) . 
\end{verbatim}
\end{small}
\noindent
Here \texttt{bsid} is the boat sensor identifier.
\end{itemize}

As an example, the function \texttt{bcIdleInv}
is defined as follows.

\begin{small}
\begin{verbatim}
  op bclIdleInv : Conf String String String -> Bool .
  eq bclIdleInv(conf,bcid,brid,gid) = 
       bcIdle(conf,bcid) and
        (not(brClose(conf,bcid,brid)) or 
         not(gaClose(conf,bcid,gid))) .
\end{verbatim}
\end{small}
\noindent

For each invariant, we search for violations for different initial configurations: \texttt{brInit} and \texttt{brInit(40)} without and with addition of an attacker, and dialected versions of each of the above. 
The search command for violation of the \texttt{bclIdle} invariant during one round of the protocol with no attacker is
\begin{small}
\begin{verbatim}
    search {brInit} =>+ 
           {c:Conf} such that 
           bclIdleInv(c:Conf,"bctl","br","ga") .
\end{verbatim}
\end{small}
\noindent
Replacing \texttt{brInit} by \texttt{brInit2(40)} we obtain
the search command for two rounds of the protocol.
\begin{small}
\begin{verbatim}
    search {brInit2(40)} =>+ 
           {c:Conf} such that 
           bclIdleInv(c:Conf,"bctl","br","ga") .
\end{verbatim}
\end{small}
\noindent
The search command for attacks on the \texttt{bclIdle}
invariant during one round of the protocol is obtained
by adding an attacker agent with capability \texttt{mcX(20)}.
The configuration element \texttt{log(nilLI)} records 
message receive events, thus giving a simplified trace
for successful attacks.
\begin{small}
\begin{verbatim}
    search {brInit log(nilLI) ["eve" | kb(mtDM) caps(mcX(20))]} =>+ 
           {c:Conf} such that 
            bclIdleInv(c:Conf,"bctl","br","ga") .
\end{verbatim}
\end{small}
\noindent
The dialected form of search for attacks on the
\texttt{bclIdle} invariant (one round) is
\begin{small}
\begin{verbatim}
    search D({brInit log(nilLI) ["eve" | kb(mtDM) caps(mcX(20))]}) =>+ 
          {c:Conf} such that 
          bclIdleInv(UDC(c:Conf),"bctl","br","ga") .
\end{verbatim}
\end{small}
\noindent
Recall that the functions \texttt{D} and \texttt{UDC} are the dialecting and undialecting tranform defined in Section \ref{subsec:dialect-transform}. 
Searches for violations in the absence of attacker
and for attacks in dialected case return \texttt{no solution}.  Table \ref{table:bridgeexp}
summarizes the attacks found in the other cases. 

\begin{figure}[ht]
\begin{small}
\begin{verbatim}
    Invariant   nRnd  mcX  msg
    bclIdleInv    1   20   GateCl,BridgeOp
                  2   40   GateCl,BridgeOp
    brNClInv      1   20   BridgeOp
                  2   20   BridgeOp
    gateNClInv    1   20   BridgeOp
                  2   20   BridgeOp
    boatPassInv   1  20-40 none
                  2   20   BridgeCl,GateOp 
\end{verbatim}
\end{small}
\caption{Summary of bridge application attacks. The column \texttt{nRnd} is the number of rounds, \texttt{mcX} is delay argument to the \texttt{mcX} attack capability, and \texttt{msg} is the message
identifier of the attacked message.}  
\label{table:bridgeexp}
\end{figure}
We note that the gate and bridge open and close commands
are vulnerable to attack, while
the  boat here and  boat pass 
messages have no attacks against the movable bridge
invariants.

\subsection{Pick-n-Place Case Study}
\label{asubsec:pnp-scenario}

The basic Pick-n-place application consists of a
controller, an arm (moving on a track), and a
gripper. When the controller is told there is an
item available, it tells the arm to move to the
opposite end of its track. When the arm reports
success, the controller tells the gripper to
close (to pick up the item below it). Then the
controller tells the arm to return to its
original position and then tells the gripper to
open, thus placing the item picked. Copies of
this basic application can be combined in
different ways for more complex transfer of
items.

We present the PnP protocol in terms of an
abstract initial position of the arm
\texttt{atI} and abstract arm commands
\texttt{goNI} (go from the initial position on
the arm track to the opposite end) and
\texttt{goI} (return to the initial position).
If \texttt{I} is left the \texttt{goI} becomes
\texttt{goL} and \texttt{goNI} becomes
\texttt{goR}. Dually, if the initial position is
right.
An informal version of the application protocol follows.
\begin{small}
\begin{verbatim}
    Roles and states:
      pctl : Controller  
            status: "idle", "working"
      arm : 
         -- Arm resource values
            arm inital position, drop location
              "arm" = "goI", 
            opposite position, pick location
              "arm" = "goNI", 
         -- arm states
            goingI -- put response in transit
            atI  --   put response received
                        (or initial state)
            goingNI -- put response in transit
            atNI    -- put response received
      gr : 
        -- gripper resource values
             "grip"  == "open"
             "grip"  == "close"
        -- gripper states   
           opening --- dropping if loaded
                      put response in transit
           open       put response received
           closing --- gripping if at pick location
                       put response in transit
           close      put response received
      ps : Part Sensor -- external driver
\end{verbatim}
\end{small}
\noindent
Pick-n-place message exchange for arm
with initial position at the left.
\begin{small}
\begin{verbatim}
      pctl status is "idle"
      ps->pctl:  Put start
      pctl status becomes "working"
      pctl->arm: Put arm goR
      arm->pctl: success  (at right)
      pctl->grip: PUT grip close
      grip->pctl: success
      pctl->arm: Put  arm goL
      arm->pctl: success  (at left)
      pctl->grip: PUT grip open
      grip->pctl: success
      pctl status becomes "idle"
      pctl->ps success
\end{verbatim}
\end{small}
\noindent

Only the pnp controller has an application
layer, the other roles simply use
CoAP messaging.
There are six controller rules.

\begin{itemize}
\item  \texttt{rcvStart}:, upon receipt of a \texttt{"start"} message, if the control status is \texttt{"idle"}, then
the controller sends a \texttt{goNI} request to its arm (message id \texttt{"ArmGoNI"}).  It also  sets \texttt{"status"} to\texttt{ \texttt{"working"}} and
\texttt{"source"} to the message sender.

\item \texttt{rcvAtNI}: upon receipt of a response to the \texttt{"ArmGoI"} request, the controller sends a \texttt{"close" } request to the gripper (message id \texttt{"GripCl"}).
            
\item \texttt{rcvGrCl}: upon receipt of a response to the \texttt{"GripCl"} request, the controller sends a
a \texttt{goI} request to the  arm (message id \texttt{"ArmGoI")}.

\item \texttt{rcvAtI}:  upon receipt of a response to the \texttt{"ArmGoI"} request, the controller sends a
a \texttt{goI} request to the  arm (message id \texttt{"GripOp"}).

\item  \texttt{rcvGrOp}: upon receipt of a response to the \texttt{"GripOp"} request, the controller sends a
a message to the  source of the start message to indicate that the item has been transferred (message id \texttt{"PnPDone"}).

\item \texttt{rcvPnPDone}:  upon receipt of a response to the \texttt{"PnPDone"} request, the controller sets its status
to \texttt{"idle"} (and waits for a\texttt{"start"} 
request).
\end{itemize}
\noindent

To experiment with applications with a single PnP we define
the function
\begin{small}
\begin{verbatim}
      initRL(pid,gid,aid,amsgl) 
\end{verbatim}
\end{small}
\noindent
that constructs an initial configuration  with 
four endpoints.
\begin{itemize}
\item
  \texttt{ctlRL}: a controller with identifier 
  \texttt{pid} with initial knowledge base
  \begin{small}
  \begin{verbatim}
       rb("status","idle")
       rb("myarm",aid)  rb("mygrip",gid)
       rb("goNI","goR") rb("goI","goL")
   \end{verbatim}
   \end{small}
\item
  \texttt{arm}: an arm with identifier 
  \texttt{aid} with  initial resource bindings rb(\texttt{"arm"},\texttt{"goL"})
 indicating the initial position at the left.
\item
\texttt{grip}: a gripper with identifier  \texttt{gid} with
 initial resource bindings
       rb(\texttt{"grip"},\texttt{"open"}) 
\item  
  \texttt{ps}: a part sensor with identifier \texttt{"ps"}.
 and attribure \texttt{sendReqs(amsgl)}, one or more
 start messages for the controller, separated by a delay.
\end{itemize}
\noindent

The following are application invariants required for
correct / safe execution. We formalize the negation as the
property to search for in the presence of an attacker.

\begin{itemize}
\item \texttt{pnpIdle}: If the PnP controller status is \texttt{"idle"}
then the arm is in its initial position \texttt{rb("arm",goI)}
and the grip is open.
The negation of this property is formalized by
 \texttt{ pnpIdleInv(c:Conf,pid,gid,aid,goI)}
where \texttt{pid} is the controller identifier,
\texttt{gid} is the grip identifier and \texttt{aid} is
the arm identifier.  This term equationally reduces to
\begin{small}
\begin{verbatim}
   hasAV(c:Conf,pid,"status","idle")
   and    
  (not(isV(c:Conf,pid,aid,"ArmGoI","arm",goI)) 
    or 
  not(isV(c:Conf,pid,gid,"GripOp","grip","open"))) .
\end{verbatim}
\end{small}
\noindent
These properties are defined in 
Section~\ref{asubsec:basic-app-properties}.

\item \texttt{armGoingI}:
If the arm is going from the opposite position
to the initial position then the grip is closed
(presumable gripping the picked item).
The negation is formalized by 
 \texttt{armGoingIInv(c:Conf,pid,gid,aid,goI)} 
which reduces to
\begin{small}
\begin{verbatim}
    becomeV(c:Conf,pid,aid,"ArmGoI","arm", goI)
    and      
    not(isV(c:Conf,pid,gid,"GripCl","grip","close")) .
\end{verbatim}
\end{small}

\item \texttt{armGoingNI}: 
If the arm is going from the initial position
to the opposite position then the grip is open.
The negation is formalized by 
\texttt{armGoingNIInv(c:Conf,pid,gid,aid,goNI)} 
which reduces to
\begin{small}
\begin{verbatim}
    becomeV(c:Conf,pid,aid,"ArmGoNI","arm",goNI) 
    and
    not(isV(c:Conf,pid,gid,"GripOp","grip","open")) .
\end{verbatim}
\end{small}
   
\item \texttt{gripClosing}:  If the grip is closing then
the arm is at the pickup location (\texttt{goNI}).
The negation is formalized by
 \texttt{gripClosingInv(c:Conf,pid,gid,aid,goNI)} 
which reduces to
\begin{small}
\begin{verbatim}
    becomeV(c:Conf,pid,gid,"GripCl","grip","close")
    and 
    not(isV(c:Conf,pid,aid,"ArmGoNI","arm",goNI)) .
\end{verbatim}
\end{small}
  
\item \texttt{gripOpening}:  If the grip is opening then
the arm is at the initial (placing) location.
The negation is formalized by
 \texttt{gripOpeningInv(c:Conf,pid,gid,aid,goI)}
which reduces to   
\begin{small}
\begin{verbatim}
    becomeV(c:Conf,pid,gid,"GriOp","grip","open")}
    and 
    not(isV(c:Conf,pid,aid,"ArmGoI","arm",goI)) .
\end{verbatim}
\end{small}
\end{itemize}

Using the initial configuration constructor for a single PnP
application with arm starting at the left, for each invariant
we search for a violation. For example the search command for
violation of the PnP Idle invariant is

\begin{small}
\begin{verbatim}
search  {initRL("pctl","gr","arm",startAMsg("pctl","PUTS")) log(nilLI)}
       =>+ {c:Conf} such that 
           pnpIdleInv(c:Conf,"pctl","gr","arm","goL") .
\end{verbatim}
\end{small}
\noindent
In each case the result is no solution.

We then consider configurations with an attacker and one or
two rounds of the PnP protocol. (Two rounds are specified by
giving the part sensor two messages to send.) For example,
the search command for attack on PnP Idle invariant is

\begin{small}
\begin{verbatim}
search {initRL("pctl","gr","arm",startAMsg("pctl","PUTS")) log(nilLI) 
          ["eve" | kb(mtDM) caps(mcX(20))]}
      =>+ {c:Conf} such that 
          pnpIdleInv(c:Conf,"pctl","gr","arm","goL") .
\end{verbatim}
\end{small}

\noindent
Table \ref{table:pnpAttacks} summarizes the attacks found.

\begin{figure}[ht]
\begin{small}
\begin{verbatim}
      Invariant  nRnd    mcX     Msg
      pnpIdle      1      20     ArmGoNI,GripCl
                   2      20     ArmGoNI,GripCl
      armGoingI    1      20,40  none
                   2      20     GripOp
      armGoingNI   1    0,20,40  none
                   2      20     GripCl
      gripClosing  1      20,40  none
                   2      20     ArmGoI
      gripOpening  1      20,40  none
                   2      20     nonTerm
\end{verbatim}
\end{small}
\caption{Summary of attacks on PnP.
The column \texttt{nRnd} is the number of rounds,
\texttt{mcX} is delay argument to the \texttt{mcX} attack
capability, and \texttt{msg} is the message identifier of the
attacked message.}
\label{table:pnpAttacks}
\end{figure}

As for the Bridge application, communications between the
controller and the part sensor are vulnerable to attacks
on the invariants, and search for attacks in
dialected forms of the PnP scenarios find no solution.

Note there are many more scenarios to be constructed
from the PnP application including pairs of PnP
systems that coordinate in various ways.  Here
one can search for attacks that violate combinations
of invariants likely requiring attacks of multiple
messages.


\end{document}